\newcounter{vi}
\newcounter{ai}
\newcounter{ti}
\newcommand\lambdaHut{\lambda_{\vartriangleleft}}
\newcommand\T{\mathsf{Top}}
\newcommand\V{\mathrm{V}}
\newcommand\myrel[1]{\mathop{\stackrel{#1}{\longrightarrow}}}
\newcommand\mrelTwo[1]{\mathop{\stackrel{#1}
    {\longrightarrow\hspace{-1.2em}\rightarrow\hspace{.2em}}}}
\newcommand\cas[3]{#1[#2\backslash#3]}
\newcommand\myequiv{\,{\equiv}\,}
\newcommand\myleq{\,{\leq}\,}
\newcommand\myvartriangleleft{\,{\vartriangleleft}\,}
\newcommand\NF{\mathsf{NF}}
\newcommand\Nat{\mathit{Nat}}
\newcommand\Unit{\mathit{Unit}}
\newcommand\Single{\mathit{single}}
\newcommand\UnitExpr{\lambda y \myleq \T. \lambda x \myleq y. y}
\newcommand\SingleExpr{\lambda y \myleq \T. \lambda x \myleq y. x}
\newcommand\IdA{\mathit{id}_A}
\newcommand\IdExprA{\lambda x \myleq A. x}
\newcommand\Co{\mathsf{C}}
\newcommand\Po{\mathsf{Co}}
\newcommand\prevalid{\mathrm{prevalid}}
\newcommand\dom{\mathrm{dom}}
\newcommand\fv{\mathrm{fv}}
\newcommand\wef{\mathrm{wf}}
\newcommand\nil{\mathrm{nil}}
\newcommand\ie{i.e.}
\title{Towards the type safety of Pure Subtype Systems (Full version)}
\author{Valentin Pasquale}{CEA List, Université Paris-Saclay, France}{valentin.pasquale@cea.fr}{https://orcid.org/0009-0009-3422-476X}{}
\author{Álvaro García-Pérez}{CEA List, Université Paris-Saclay, France}{alvaro.garciaperez@cea.fr}{https://orcid.org/0000-0002-9558-6037}{}
\authorrunning{V. Pasquale and A. García-Pérez}
\keywords{Lambda calculus, Pure subtype systems, Dependent types, Higher-order subtyping, Type safety}
\begin{document}

\maketitle

\begin{abstract}
  Hutchins' Pure Subtype Systems (PSS) offer a unified framework for types and terms,
  promising significant advancements in language design for features like dependent types
  and higher-order subtyping. However, the theory has been hampered by a critical gap: a
  proof of type safety has remained an open problem for over a decade. The original
  attempt to prove this property relied on the conjectured commutativity of two
  fundamental reduction relations, equivalence and subtyping. Proving transitivity
  elimination, however, requires this commutativity, a property that is notoriously
  difficult to establish for higher-order subtyping systems.

  In this paper, we address this issue by introducing Machine-Based PSS (MPSS), a novel
  reformulation of the original system. MPSS integrates a continuation stack mechanism,
  reminiscent of the Krivine Abstract Machine, to keep track of arguments that are passed
  during function application, enabling more fine-grained reductions. This architectural
  change exposes crucial intermediate reduction steps that were absent in the original
  PSS. The primary contribution of our work is a direct proof that the equivalence and
  subtyping reductions in MPSS commute. This result formally establishes transitivity
  elimination, which is the cornerstone of the inversion lemma required for type safety.
  We conclude by outlining a pathway from our foundational result to a complete, type-safe
  system, thereby paving the way for the practical realization of PSS-based languages.
\end{abstract}

\section{Introduction}
\label{sec:introduction}

Hutchins' Pure Subtype Systems (PSS for short) \cite{Hut2010,hutchinsthese} have been
proposed as a novel approach to type theory that enables a number of advanced language
features for extensibility and genericity. Among other benefits, PSS (hereafter in the
singular) harmoniously solves both the expression problem \cite{Wad98,ZO05} and the
tag-elimination problem \cite{hutchinsthese}. PSS's expressivity allows one to implement
extensible modules based on deep mixin composition, which is a key result for a type
system \cite{hutchinsthese}. By blurring the distinction between types and terms, PSS
naturally combines dependent types and higher-order subtyping, which makes the approach
very promising as a basis for generic, modular, and extensible programming. However, a
proof of type safety for PSS is conspicuously lacking. This crucial property has only
been shown to hold under the conjecture that two key reduction relations in the system,
equivalence and subtyping, commute \cite{hutchinsthese}. The commutativity of these two
reductions is an instance of a recurrent problem in higher-order subtyping known as
\emph{transitivity elimination}.  Consequently, the type safety of PSS has remained an
open problem for more than a decade.

In this paper, we resolve this long-standing open problem by introducing a novel
reformulation of Hutchins’ PSS. Our system is based on a mechanism reminiscent of the
Krivine Abstract Machine, it allows more fine-grained reductions and exposes crucial
intermediate reduction steps that were absent in the original PSS. This architectural
change allows for a direct and elegant proof of transitivity elimination.

Contrary to traditional type systems, PSS harmoniously mixes terms and types. Consider the
natural number $3$ and the type $\Nat$ of natural numbers. In PSS one can have a term that
encodes $\Nat + 3$ with the naive Church encoding of naturals, which type-checks to
$\Nat$. Terms can also be used instead of types: for instance the function
$\lambda x \myleq 3. x$, whose subtype annotation is $3$ (we write $\myleq$ for the
subtyping relation, that is, the substitute of typing in our calculus), is a valid
expression. The latter example is an instance of \emph{bounded quantification}
\cite{pierce1997higher} where the term $3$ is the singleton type $\{3\}$ and the subtyping
relation is akin to the subset relation.

By replacing the typing relation with a more general subtyping relation, PSS greatly
increases expressivity. Since terms and types belong to the same syntactic kind, PSS
naturally subsumes dependent types and higher-order subtyping. This unification enables a
host of advanced language features for extensibility, genericity, and efficiency,
including virtual types, recursive types, deep mixin composition, feature-oriented
programming, bounded quantification, and partial evaluation
\cite{Hut2010,ZO05,hutchinsthese}.

In the original work on PSS, Hutchins introduces a \emph{declarative} system with two
relations: one for equivalence (akin to $\beta$-equivalence) and another for subtyping,
both of which are defined with transitive closure. The proof of type safety requires a
\emph{type preservation} property, which states that the type of a program is preserved
under evaluation. This property, in turn, hinges on a critical inversion lemma: any two
functions in the subtyping relation must have equivalent subtype annotations. This lemma
cannot be proven directly in the declarative system due to the presence of transitivity,
which introduces intermediate terms that may not be structurally related to the functions
being compared. Proving type preservation thus amounts to showing that transitivity is an
admissible rule in the system.

In order to prove transitivity elimination, Hutchins proposes an equivalent
syntax-directed, \emph{algorithmic} system with two different notions of reduction: an
\emph{equivalence reduction} that models $\beta$-reduction, and a \emph{subtyping
  reduction} that models small-step subtyping, where subtyping reduction subsumes
equivalence reduction. For short, we may write ``reduction'' for the $\beta$-reduction,
and ``promotion'' for the subtyping reduction. The declarative subtyping relation is shown
to be equivalent to the combination of subtyping and equivalence reduction sequences as
depicted below. %
\begin{center}
  $u\leq t\quad\textup{iff}\quad\textup{exists}~v~\textup{such that}$\quad
  \begin{tikzcd}
    t \arrow[r, two heads, "\equiv"] & v \\
    & u \arrow[u, two heads, "\leq"]
  \end{tikzcd}
\end{center}
In the algorithmic setting, transitivity can be shown to be admissible if these two
notions of reduction commute. Diagrammatically, commutativity ensures that a transitive
derivation which forms a ``stair'' of reduction and promotion steps can be flattened into
a single angle.

Despite his efforts, Hutchins failed to prove that commutativity holds in his algorithmic
system, which prevented him from proving transitivity elimination and type safety.  The
culprit of this failure is exemplified by a very elementary case (depicted below), which
involves the reduction of a redex and the promotion of the formal parameter in the redex's
abstraction to its annotation. %
\begin{center}
  \begin{tikzcd}
    (\lambda x\myleq t.t)\,v \arrow[r, "\equiv"] & t \\
    (\lambda x\myleq t.x)\,v \arrow[u, "\leq"] \arrow[r, "\equiv"] & v \arrow[u, dash,
    dashed, "?"]
  \end{tikzcd}
\end{center}
The redex $(\lambda x\myleq t.x)v$ on the bottom-left corner of the diagram is reduced (in
the horizontal) to $v$, and promoted (in the vertical) to $(\lambda x\myleq t.t)v$. It is
then unclear how to complete the right edge of the diagram in a single step.

To tackle this problem, Hutchins resorted to simultaneous reduction and to the
\emph{decreasing diagrams} technique of \cite{van1994confluence},
but he failed to assign depths to each reduction step so as to show
that reduction decreases in the way prescribed by the decreasing diagrams technique,
because $\beta$-reduction increases this depth.
Hutchins himself explains this failure in detail in his PhD thesis
(Section~2.7: Confluence and commutativity)\cite{hutchinsthese}.

We reformulate the PSS theory by providing an alternative version of the algorithmic
system with a more fine-grained notion of subtyping, in which we can now prove the
commutativity result. Our version, which we dub \emph{Machine-Based PSS} (MPSS for short),
uses a continuation stack mechanism, reminiscent of the Krivine Abstract Machine (KAM)
\cite{Kri07,ABDM03} to track operands passed to abstractions. This mechanism enables a
direct proof of commutativity. Indeed, by explicitly managing operands on a stack, the
subtyping relation in MPSS exposes all the intermediate terms that are needed to complete
the commutativity diagram directly, without resorting to complex techniques like
decreasing diagrams.  For instance, in our system, the subtyping relation
$(\lambda x\myleq t.x)v \myrel{\leq} (\lambda x\myleq t.t)v$ doesn't exist, instead the
variable $x$ must first be promoted to the operand $v$, which later may be promoted to the
annotation $t$. Exposing this intermediate step, essentially disallowing \emph{premature
  promotion} that exists in Hutchins' system, allows us to complete the above
commutativity diagram:
\begin{center}
  \begin{tikzcd}
    (\lambda x\myleq t.v)\,v \arrow[r, "\equiv"] & v \\
    (\lambda x\myleq t.x)\,v \arrow[u, "\leq"] \arrow[r, "\equiv"] & v \arrow[u, "\leq"]
  \end{tikzcd}
\end{center}

Our detailed contributions are the following:
\begin{itemize}
\item We introduce MPSS, a reformulation of Hutchins' PSS that, among other changes, keeps
  track of the operands that have been passed at a particular scope by using a mechanism
  reminiscent of the continuation stack of the KAM \cite{Kri07,ABDM03}.
  MPSS is a more fine-grained system in the sense that the stack mechanism
  exposes intermediate terms that were absent in Hutchins' system.
\item We prove the commutativity of the subtyping and equivalence reductions of MPSS,
  which enables transitivity elimination and the inversion lemma. Our proof of
  commutativity is direct and proceeds by simple structural induction on terms.
\item We sketch how a type system can be derived from our alternative system,
  thanks to transitivity being now admissible.
\end{itemize}

The remainder of this paper is organised as
follows. Section~\ref{sec:machine-based-system} introduces MPSS. Our system is a
reformulation of Hutchins’ algorithmic framework equipped with a stack mechanism inspired
by the Krivine Abstract Machine (KAM) to track operands, alongside two reductions for
subtyping and equivalence. Section~\ref{sec:commutation} proves the commutativity of the
two reduction relations in MPSS. Section~\ref{sec:type-safety} sketches the design of a
type system built upon MPSS and discusses the path to proving its type safety.
Section~\ref{sec:related} and~\ref{sec:future-work} discuss related and future work, and
Section~\ref{sec:conclusions} concludes.

All the formal proofs of the lemmas and theorems stated in the paper are collected in the
appendix.

\section{Machine-Based Pure Subtype Systems}
\label{sec:machine-based-system}

Machine-Based PSS (MPSS for short) is a reformulation of PSS where the algorithmic
reductions are instrumented with a stack mechanism reminiscent of the Krivine Abstract
Machine (KAM) \cite{Kri07,ABDM03}.

MPSS is an extension of the pure lambda calculus with a constant symbol $\T$ for the most
general supertype and with subtype annotations in the formal parameters of abstractions.
Assuming a countably infinite set of variables $\V$ ranged over by $x, y, \dots$, the
syntax of terms, contexts, and stacks is defined as follows:
\begin{displaymath}
  \begin{array}{rcl}
    t,u,v,\alpha,\ldots \in \Lambda &::=&x \mid \T \mid (\lambda x\myleq t.u) \mid (u\,v)\\
    \Gamma &::=& \varepsilon \mid \Gamma, x\myleq t \mid \Gamma, x\myequiv \alpha\\
    s &::=& \nil \mid \alpha :: s
  \end{array}
\end{displaymath}
A \emph{variable} $x \in \V$ is a term.  The term $\T$ is the \emph{most general
  supertype}, such that any other term is its subtype. An \emph{abstraction}
$(\lambda x\myleq t.u)$ has a \emph{parameter} $x$, a \emph{subtype annotation} $t$, and a
\emph{body} $u$. An \emph{application} $(u\,v)$ has an \emph{operator} $u$ and an
\emph{operand} $v$. The Greek letter $\alpha$ is a metavariable for terms that originate
as operands from the stack.

We assume the usual notions of \emph{free} and \emph{bound variables}, and the usual
\emph{capture-avoiding substitution} function, denoted by $\cas{u}{x}{v}$, that replaces
the free occurrences of variable $x$ in $u$ by $v$, while avoiding the capture of any
bound variable in $u$. We consider terms that result after the renaming of bound variables
to be identical. When needed, we assume that $\alpha$-equivalence is applied at will to
avoid clashing of free variables.

An application of the form $(\lambda x\myleq t.u)v$ is a \emph{redex}.  Contraction of a
redex is modelled by $\beta$-reduction, which rewrites $(\lambda x\myleq t.u)v$ into
$\cas{u}{x}{v}$. Consider the following example of a redex involving the term
$\lambda x\myleq\T.x$ for \emph{universal identity}. Since universal identity is a subtype
of $\T$, the self-application of universal identity, \ie\ the redex
$(\lambda x\myleq\T.x)(\lambda x\myleq\T.x)$, reduces to
$\cas{x}{x}{\lambda x\myleq\T.x}$, which is equal to universal identity by the definition
of the substitution function. We assume conventional parenthesis-dropping conventions
where abstractions associate to the right and applications to the left, and applications
bind stronger than abstractions. We define normal forms as terms that are either Top,
functions that have a subtype annotation and a body in normal form, or a variable applied
to any number of normal forms.

\begin{figure}[h!]
  \textsf{\textbf{Prevalidity of logical contexts}}\hfill\boxed{\text{$\Gamma~\prevalid$}}
  \begin{mathpar}
    \inferrule*[right=Pv-Emp]
    { }
    {\varepsilon~\prevalid}
    \and
    \inferrule*[right=Pv-Ctx]
    {\Gamma~\prevalid\and x\not\in\dom(\Gamma)
      \and \fv(t)\subseteq\dom(\Gamma)}
    {\Gamma,x\myleq t~\prevalid}
    \and
    \inferrule*[right=Pv-EqA]
    {\Gamma~\prevalid\and x\not\in\dom(\Gamma)
      \and \fv(\alpha)\subseteq\dom(\Gamma)}
    {\Gamma,x\myequiv \alpha~\prevalid}
  \end{mathpar}
  \textsf{\textbf{Prevalidity of extended contexts}}\hfill\boxed{\text{$\Gamma;s~\prevalid$}}
  \begin{mathpar}
    \inferrule*[right=Pv-Nil]
    {\Gamma~\prevalid }
    {\Gamma;\nil~\prevalid}
    \and
    \inferrule*[right=Pv-Sta]
    {\Gamma;s~\prevalid\and \fv(\alpha)\subseteq\dom(\Gamma)}
    {\Gamma;\alpha::s~\prevalid}
  \end{mathpar}
  \caption{Prevalidity in MPSS}
  \label{fig:prevalidity-rules}
\end{figure}

Our system uses \emph{logical contexts} (\emph{contexts} for short) $\Gamma$ to store
annotations from abstractions encountered within a given scope. A context $\Gamma$ is a
sequence of \emph{subtype annotations} $x\myleq t$ and \emph{equivalence annotations}
$x\myequiv \alpha$. We write $x\myvartriangleleft t$ to denote either kind of annotation,
and throughout this paper the metavariable $\myvartriangleleft$ designates either $\myleq$
or $\myequiv$. We use $\varepsilon$ for the \emph{empty context} and we write
$\Gamma, x\myvartriangleleft t$ for the operation that appends annotation
$x\vartriangleleft t$ to the subtype annotations of $\Gamma$. By abuse of notation, we
write $\Gamma_1,\Gamma_2$ for the concatenation of contexts $\Gamma_1$ and $\Gamma_2$. We
say that $x$ \emph{has subtype $t$ in} $\Gamma$ (written $x \myleq t \in \Gamma$) if
$x\myleq t$ is the rightmost annotation for $x$ in $\Gamma$. Respectively, we say that $x$
\emph{is equivalent to $\alpha$ in} $\Gamma$ (written $x \myequiv \alpha \in \Gamma$)
under similar conditions.

\begin{figure}[h!]
  \textsf{\textbf{Equivalence
      reduction}}\hfill\boxed{\text{$\Gamma; s \vdash u \myrel{\equiv} v$}}
  \begin{mathpar}
    \inferrule*[right={Me-Pro}]
    {\Gamma; s ~\prevalid \and x \myequiv \alpha \in \Gamma \and \Gamma; s \vdash
      \alpha \myrel{\equiv} \alpha'}
    {\Gamma; s \vdash x \myrel{\equiv} \alpha'}

    \inferrule*[right=Me-Bet]
    {\Gamma; s \vdash u \myrel{\equiv} u' \and \Gamma; \nil \vdash v \myrel{\equiv} v'}
    {\Gamma; s \vdash (\lambda x \myleq t . u) \, v \myrel{\equiv} \cas{u'}{x}{v'}}

    \inferrule*[right=Me-Top]
    {\Gamma; s ~\prevalid}
    {\Gamma; s \vdash \T \myrel{\equiv} \T}

    \inferrule*[right=Me-App]
    {\Gamma; v :: s \vdash u \myrel{\equiv} u' \and \Gamma; \nil \vdash v \myrel{\equiv} v'}
    {\Gamma; s \vdash u \, v \myrel{\equiv} u' \, v'}

    \inferrule*[right=Me-Var]
    {\Gamma; s ~\prevalid}
    {\Gamma; s \vdash x \myrel{\equiv} x}

    \inferrule*[right=Me-Fun]
    {\Gamma; \nil \vdash t \myrel{\equiv} t' \and
      \Gamma, x \myleq t; \nil \vdash u \myrel{\equiv} u'}
    {\Gamma; \nil \vdash \lambda x \myleq t . u \myrel{\equiv} \lambda x \myleq t' . u'}

    \inferrule*[right=Me-TAp]
    {\Gamma; s ~\prevalid}
    {\Gamma; s \vdash \T \, u \myrel{\equiv} \T}

    \inferrule*[right=Me-FOp]
    {\Gamma; \nil \vdash t \myrel{\equiv} t' \and
      \Gamma, x \myequiv \alpha; s \vdash u \myrel{\equiv} u'}
    {\Gamma; \alpha :: s \vdash \lambda x \myleq t . u \myrel{\equiv}
      \lambda x \myleq t' . u'}
  \end{mathpar}

  \textsf{\textbf{Subtyping reduction}}\hfill
  \boxed{\text{$\Gamma;s \vdash u \myrel{\leq} v$}}
  \begin{mathpar}
    \inferrule*[right=Ms-Pro]
    {\Gamma;s~\prevalid \and x \myleq t \in \Gamma}
    {\Gamma;s \vdash x \myrel{\leq} t}

    \inferrule*[right=Ms-Top]
    {\Gamma;s~\prevalid}
    {\Gamma;s \vdash u \myrel{\leq} \T}

    \inferrule*[right=Ms-Equ]
    {\Gamma; s ~\prevalid \and \Gamma; s \vdash u \myrel{\equiv} v}
    {\Gamma;s \vdash u \myrel{\leq} v}

    \inferrule*[right=Ms-App]
    {\Gamma;v :: s \vdash u \myrel{\leq} u'}
    {\Gamma;s \vdash u \, v \myrel{\leq} u' \, v}

    \inferrule*[right=Ms-Fun]
    {\Gamma,x\myleq t; \nil \vdash u \myrel{\leq} u'}
    {\Gamma;\nil \vdash \lambda x \myleq t . u \myrel{\leq} \lambda x \myleq t . u'}

    \inferrule*[right=Ms-FOp]
    {\Gamma,x\myequiv\alpha; s \vdash u \myrel{\leq} u'}
    {\Gamma;\alpha :: s \vdash \lambda x \myleq t . u \myrel{\leq}
      \lambda x\myleq t . u'}
  \end{mathpar}

  \caption{Equivalence and subtyping reduction in MPSS}
  \label{fig:subtyping-equivalence-rules}
\end{figure}

In order to accommodate the stack mechanism, we introduce \emph{extended contexts}
$\Gamma;s$, which are logical contexts $\Gamma$ coupled with a \emph{continuation stack}
$s$. Extended contexts adapt the algorithmic reduction relations to the stack mechanism of
MPSS.

Figure~\ref{fig:prevalidity-rules} introduces the \emph{prevalidity} condition on contexts
that ensures each annotation in a context mentions a different variable, to avoid clashing
of variable names.  We say that $x$ \emph{is in the domain of} $\Gamma$ (written
$x\in\dom(\Gamma)$) if and only if an annotation for variable $x$ occurs in $\Gamma$, \ie,
there exists a term $t$ (or a term $\alpha$) such that $x\myleq t\in\Gamma$ (or
$x\myequiv \alpha\in\Gamma$). The empty context $\varepsilon$ is prevalid
(Rule~\textsc{Pv-Emp} in Figure~\ref{fig:prevalidity-rules}). Adding a subtype annotation
$x\myleq t$, or an equivalence annotation $x\myequiv \alpha$ to a prevalid context
$\Gamma$ yields a prevalid context if the free variables of $t$ (respectively, $\alpha$)
are in the domain of $\Gamma$, and if $x$ is not already in the domain of $\Gamma$
(Rules~\textsc{Pv-Ctx} and \textsc{Pv-EqA}). \emph{Prevalidity for extended contexts}
(also defined in Figure~\ref{fig:prevalidity-rules}) stipulates that a prevalid context
coupled with an empty stack is prevalid (Rule~\textsc{Pv-Nil}), and that an extended
context $\Gamma;\alpha::s$ is prevalid if the free variables of $\alpha$ are in the domain
of $\Gamma$ (Rule~\textsc{Pv-Sta}). We now present our reduction relations.

Figure~\ref{fig:subtyping-equivalence-rules} presents our equivalence ($\myrel{\equiv}$)
and subtyping ($\myrel{\leq}$) reductions. We do away with a declarative system like
Hutchins' system, and our MPSS resembles his algorithmic system (p.~58 of
\cite{hutchinsthese}) where we adopt simultaneous reduction upfront in order to simplify
the commutativity proofs (resembling p.~74 of \cite{hutchinsthese}), and where contexts
are coupled with a stack in order to accommodate the stack mechanism. The equivalence
reduction $\myrel{\equiv}$ models a reflexive, small-step, simultaneous
$\beta$-reduction. Rule~\textsc{Me-Bet} performs $\beta$-reduction, while
Rule~\textsc{Me-Pro} replaces a variable by its equivalent operand from the context.
Rules~\textsc{Me-Var} and \textsc{Me-Top} allow the reduction relation to be reflexive.
Rule~\textsc{Me-TAp} is a technical device from Hutchins' formulation in
\cite{hutchinsthese}, whose objective is to accommodate terms with shape $\T \, u$, which
may pop up in reductions, and which cannot be $\beta$-reduced by Rule~\textsc{Me-Bet} (see
p. 59 of \cite{hutchinsthese} for a further discussion on this issue). \textsc{Me-App}
propagates promotion through unapplied applications by pushing the operand onto the stack
and proceeding with the operator, mimicking KAM’s stack management. Rules~\textsc{Me-Fun}
and \textsc{Me-FOp} distinguish between unapplied and applied abstractions:
\textsc{Me-Fun} propagates promotion by enlarging the context with the abstraction's
annotation and proceeding with the body, and \textsc{Me-FOp} promotes the body of an
applied abstraction by taking the operand from the stack and enlarging the context with an
equivalence annotation.

The subtyping reduction $\myrel{\leq}$ defines promotion in a stack-aware manner.
Rule~\textsc{Ms-Pro} promotes a variable to its subtype annotation, and
Rule~\textsc{Ms-Top} promotes any term to $\T$. Rule~\textsc{Ms-Equ} subsumes equivalence
reduction.  Rules~\textsc{Ms-App}, \textsc{Ms-Fun} and \textsc{Ms-FOp} are the
counterparts of the equivalence reduction rules \textsc{Me-App}, \textsc{Me-Fun} and
\textsc{Me-FOp} respectively.  Because Hutchins' declarative system is not contravariant,
its algorithmic system, which we took inspiration from, does not change the type
annotation of abstractions: in his system, there is no rule
$\lambda x \myleq t. u \myrel{\leq} \lambda x \myleq t'. u$ with assumption
$t' \myrel{\leq} t$.  The main motivation for this design choice is to avoid the
undecidability of the related typechecking, as described in \cite{Pie1994}, and to avoid
complicating the meta-theory even more. Our system follows the same principle, both Rules
\textsc{Ms-Fun} and \textsc{Ms-FOp} do not change the type annotation, and in general
there is no reduction rule to change the type annotations, except from equivalence
reductions that are allowed. Handling the interaction between the stack and the context is
the cornerstone to prove the commutativity result, and to avoid the premature promotion
issues that plagued Hutchins’ approach.

To illustrate our stack mechanism, consider the following subtyping reduction derivation
(prevalidity derivations are elided for brevity):
\begin{mathpar}
  \inferrule*[right=Ms-App]
  {
    \inferrule*[right=Ms-FOp]
    {
      \inferrule*[right=Ms-Equ]
      {
        \inferrule*[]{\ldots}{\Gamma;s~\prevalid}
        \and
        \inferrule*[right=Me-Pro]
        {
          \inferrule*[]{\ldots}{\Gamma~\prevalid}
          \and
          \inferrule*[]{\ldots}{\Gamma;\nil\vdash v\myrel{\equiv}v}
        }
        {\Gamma,x\myequiv v;\nil \vdash x\myrel{\equiv}v}
      }
      {\Gamma,x\myequiv v;\nil \vdash x \myrel{\leq} v}
    }
    {\Gamma;v::\nil \vdash (\lambda x\myleq t.x)\myrel{\leq}(\lambda x\myleq t.v)}
  }
  {\Gamma;\nil \vdash (\lambda x\myleq t.x)v\myrel{\leq}(\lambda x\myleq t.v)v}
\end{mathpar}
Rule~\textsc{Ms-App} pushes the operand $v$ onto the stack. Inside the abstraction, the
variable $x$ is not promoted to its type $t$ but is instead reduced to the operand $v$
retrieved from the context, which was placed there by \textsc{Ms-FOp} after popping from
the stack. Keeping track of operands prevents the premature promotion that caused
Hutchins' commutativity proof to fail.

As in Hutchins' PSS, we define a subtyping relation $\leq$, on which we wish to prove that
transitivity is admissible in the next section, as follows, with $\mrelTwo{}$ being the
transitive closure of $\myrel{}$:
\begin{center}
  $u\leq t\quad\textup{iff}\quad\textup{exists}~v~\textup{such that}$\quad
  \begin{tikzcd}
    t \arrow[r, two heads, "\equiv"] & v \\
    & u \arrow[u, two heads, "\leq"]
  \end{tikzcd}
\end{center}

\section{Transitivity elimination}
\label{sec:commutation}

As explained in the introduction, proving transitivity elimination requires us to show
that the equivalence ($\myrel{\equiv}$) and subtyping ($\myrel{\leq}$) reductions commute,
which we do below in this section. In order to state the commutativity result, we first
define a reduction relation on extended contexts, which captures the evolution of
annotations during reduction.

\noindent
\textsf{\textbf{Reduction of extended context}}\hfill\boxed{\text{$\Gamma; s \rightarrowtail \Gamma'; s'$}}
\begin{mathpar}
  \inferrule*[right=Ct-Ann]
  {\Gamma; s \rightarrowtail \Gamma'; s' \and \Gamma; \nil \vdash t \myrel{\equiv} t'}
  {\Gamma, x \myvartriangleleft t; s \rightarrowtail \Gamma', x \myvartriangleleft t'; s'}

  \inferrule*[right=Ct-Stk]
  {\Gamma; s \rightarrowtail \Gamma'; s' \and \Gamma; \nil \vdash \alpha \myrel{\equiv} \alpha'}
  {\Gamma; \alpha :: s \rightarrowtail \Gamma'; \alpha' :: s'}
\end{mathpar}

With this definition, we can now state our main commutativity result.

\begin{lemma}[$\myrel{\leq}$ and $\myrel{\equiv}$ strongly commute]
  \label{thm:commutativity-theorem}
  Let $\Gamma; s$ be an extended context. Let $t_0$, $t_1$, and $t_2$ be terms. If
  $\Gamma;s\vdash t_0\myrel{\equiv} t_1$ and $\Gamma;s\vdash t_0\myrel{\leq} t_2$, then
  for any extended context $\Gamma'; s'$ such that $\Gamma; s\rightarrowtail \Gamma'; s'$,
  there exists a term $t_3$, such that $\Gamma; s \vdash t_2\myrel{\equiv} t_3$ and
  $\Gamma';s' \vdash t_1\myrel{\leq} t_3$.

  Diagrammatically, the existence of the solid arrows implies the existence of the dashed
  arrows:
  \begin{center}
    \begin{tikzcd}
      t_2 \arrow[r, dashed, "\equiv"]   & t_3                       \\
      t_0 \arrow[r, "\equiv"] \arrow[u, "\leq"] & t_1 \arrow[u, dashed, "\leq"]
    \end{tikzcd}
  \end{center}
\end{lemma}

The generalisation over all possible resulting contexts $\Gamma'; s'$ is crucial for the
inductive proof. Indeed, consider the following commutation diagram, with context
$\Gamma; \nil$:
\begin{center}
  \begin{tikzcd}
    \lambda x \myleq t. t \arrow[r, dashed, "\equiv"] & ?                       \\
    \lambda x \myleq t. x \arrow[r, "\equiv"] \arrow[u, "\leq"] & \lambda x \myleq t'. x \arrow[u, dashed, "\leq"]
  \end{tikzcd}
\end{center}
The only possible arrow that could exist, apart from a promotion of the whole term to
$\T$, is a promotion of the variable $x$ to its subtype annotation $t'$. Therefore, the
top right term is $\lambda x \myleq t'. t'$, and the diagram is completed as follows:
\begin{center}
  \begin{tikzcd}
    \lambda x \myleq t. t \arrow[r, "\equiv"] & \lambda x \myleq t'. t'                      \\
    \lambda x \myleq t. x \arrow[r, "\equiv"] \arrow[u, "\leq"] & \lambda x \myleq t'. x \arrow[u, "\leq"]
  \end{tikzcd}
\end{center}
However, in the proof of this theorem, everything works by induction on the terms.
Therefore, starting from the original diagram, we apply the induction hypothesis on the
following diagram, with context $\Gamma, x \myleq t; \nil$:
\begin{center}
  \begin{tikzcd}
    t \arrow[r, dashed, "\equiv"] & ?                       \\
    x \arrow[r, "\equiv"] \arrow[u, "\leq"] & x \arrow[u, dashed, "\leq"]
  \end{tikzcd}
\end{center}
And as we have seen above, we must complete this diagram with $t'$ being the top right
term, as the only possible completion of the original diagram is the term
$\lambda x \myleq t'. t'$. Therefore, the completed diagram by the induction hypothesis
must be:
\begin{center}
  \begin{tikzcd}
    t \arrow[r, "\equiv"] & t'                      \\
    x \arrow[r, "\equiv"] \arrow[u, "\leq"] & x \arrow[u, "\leq"]
  \end{tikzcd}
\end{center}
Hence the requirement that the context of the right arrow must be
$\Gamma, x \myleq t'; \nil$, and therefore the formulation of the theorem with a
generalisation over contexts.

Similarly, our equivalence reduction relation satisfies the diamond property, on which the
result above rests. A similar generalisation over contexts is required, with two different
reduced extended contexts however, as both the top and the right edge can now do variable
promotions thanks to Rule~\textsc{Me-Pro}.  The pathological case is the following, with
context $\Gamma; \nil$:
\begin{center}
  \begin{tikzcd}
    (\lambda x \myleq t. \alpha_0 \, x) \, \alpha_2 \arrow[r, "\equiv"] & (\lambda x \myleq t. \alpha_1 \, \alpha_2) \, \alpha_3                      \\
    (\lambda x \myleq t. x \, x) \, \alpha_0 \arrow[r, "\equiv"] \arrow[u, "\equiv"] & (\lambda x \myleq t. x \, \alpha_0) \, \alpha_1 \arrow[u, "\equiv"]
  \end{tikzcd}
\end{center}
To solve this pathological case in the induction, one of the sub-induction calls is the
following diagram:
\begin{center}
  \begin{tikzcd}
    \alpha_0 \, x \arrow[r, "\equiv"] & \alpha_1 \, \alpha_2                  \\
    x \, x \arrow[r, "\equiv"] \arrow[u, "\equiv"] & x \, \alpha_0 \arrow[u, "\equiv"]
  \end{tikzcd}
\end{center}
And similarly as above, the top edge is completed with extended context
$\Gamma, x \myequiv \alpha_2; \nil$, whereas the right edge is completed with extended
context $\Gamma, x \myequiv \alpha_1; \nil$.  Both these contexts satisfy the requirement
of the theorem (that is, we have
$\Gamma, x \myequiv \alpha_0; \nil \rightarrowtail \Gamma, x \myequiv \alpha_2; \nil$ and
$\Gamma, x \myequiv \alpha_0; \nil \rightarrowtail \Gamma, x \myequiv \alpha_1; \nil$).

\begin{lemma}[$\myrel{\equiv}$ has the diamond property]
  \label{lem:myrelequiv-has-diamond-property}
  Let $\Gamma_0; s_0$ be an extended context. Let $t_0$, $t_1$, and $t_2$ be terms. If
  $\Gamma_0; s_0 \vdash t_0 \myrel{\equiv} t_1$ and
  $\Gamma_0; s_0 \vdash t_0 \myrel{\equiv} t_2$, then for any extended contexts
  $\Gamma_1; s_1$ and $\Gamma_2; s_2$ such that
  $\Gamma_0; s_0 \rightarrowtail \Gamma_1; s_1$ and
  $\Gamma_0; s_0 \rightarrowtail \Gamma_2; s_2$, there exists a term $t_3$ such that
  $\Gamma_1; s_1 \vdash t_1 \myrel{\equiv} t_3$ and
  $\Gamma_2; s_2 \vdash t_2 \myrel{\equiv} t_3$.

  Moreover, for any variable $x$, if in the derivation of
  $\Gamma_0; s_0 \vdash t_0 \myrel{\equiv} t_1$ (respectively
  $\Gamma_0; s_0 \vdash t_0 \myrel{\equiv} t_2$) there isn't an application of the Rule
  \textsc{Me-Pro} that makes a promotion of variable $x$, then in the derivation
  $\Gamma_2; s_2 \vdash t_2 \myrel{\equiv} t_3$ (respectively
  $\Gamma_1; s_1 \vdash t_1 \myrel{\equiv} t_3$) there won't be an application of the Rule
  \textsc{Me-Pro} that makes a promotion of variable $x$.

  Diagrammatically:
  \begin{center}
    \begin{tikzcd}
      t_2 \arrow[r, dashed, "\equiv"]   & t_3                       \\
      t_0 \arrow[r, "\equiv"] \arrow[u, "\equiv"] & t_1 \arrow[u, dashed, "\equiv"]
    \end{tikzcd}
  \end{center}
\end{lemma}

With commutativity and the diamond property established, we can prove that our system
enjoys the transitivity elimination property.

\begin{theorem}[Transitivity is admissible]
  \label{lem:algorithmic-transitivity-elimination}
  Let $\Gamma; s$ be an extended context. Let $u$ and $v$ be terms. If
  $\Gamma; s \vdash u \leq^* v$ then $\Gamma; s \vdash u \leq v$.
\end{theorem}

In order to illustrate our result, and highlight the differences between PSS and MPSS, we
detail below an example of an application of
Theorem~\ref{lem:algorithmic-transitivity-elimination}:

\paragraph*{Example of transitivity elimination}
Consider the Church encodings for type `Unit' (the type with only one inhabitant, term
$\Unit$ below), the encoding of the constructor `single' of type Unit (term $\Single$
below), and an arbitrary well-formed subtype $A$ that encodes some interesting type. Let
the identity over the elements of subtype $A$ be the term $\IdA$ below, and consider an
arbitrary well-formed term $a$ of subtype $A$ such that some derivation
$\varepsilon; \nil \vdash a \leq^* A$ exists which may possibly be as lengthy or
complicated as it may be.
\begin{displaymath}
  \begin{array}{rcl}
    \Unit   &=& \UnitExpr\\
    \Single &=& \SingleExpr\\
    A       &=& \langle\text{arbitrary subtype}\rangle\\
    \IdA    &=& \IdExprA\\
    a       &=& \langle\text{arbitrary subtype of $A$}\rangle
  \end{array}
\end{displaymath}

Consider the following derivation:
\begin{displaymath}
  \Gamma;\nil\vdash \IdA\,a \leq \Single\,A\,a \leq^* \Unit\,A\,a
\end{displaymath}
Figure~\ref{fig:example:algorithmic-derivation} details all the elementary steps
(diagrammatically) of the above algorithmic derivation. We illustrate the proof of
Theorem~\ref{lem:algorithmic-transitivity-elimination} over this diagram.

\begin{figure}
  \begin{center}
    \begin{tikzcd}
      \Unit \,A \, a \arrow[r, "\equiv"] &
      (\lambda x \myleq A.A) \, a \arrow[r, two heads, "\equiv"] &
      \cdot \arrow[r, two heads, dashed, "\equiv"] & \cdot \arrow[r, two heads, dashed, "\equiv"] & \cdot \\

      & & & & \\

       & & \ddots
       \arrow[uu, draw=none, "\leq"] \arrow[uu, two heads]
       \arrow[r, "\leq", draw=none] \arrow[r, two heads]
       & \cdot \arrow[uu, two heads, dashed, "\leq"]
       \arrow[uul, dash, draw=none, "a \leq^* A", sloped]
       & \\

       & & & (\lambda x \myleq A. a) \, a \arrow[u, two heads, "\leq"]
       & \\

       & & & (\lambda x \myleq A. x) \, a \arrow[u, "\leq"]
       \arrow[uuuull, dashed, "\leq_H", start anchor=west, bend left=20] & \\

       & & & (\SingleExpr)\,A \, a \arrow[u, "\leq"] \arrow[r , two heads, "\equiv"]
       &
       a \arrow[uuuuu, two heads, dashed, "\leq"]
       \\

       & & & & (\lambda x \myleq A. x) \, a \arrow[u, "\leq"] \\
     \end{tikzcd}
  \end{center}
  \caption{Derivation of $\Gamma; \nil \vdash (\IdExprA) \, a \leq \Unit \, A \, a$}
  \label{fig:example:algorithmic-derivation}
\end{figure}

Theorem~\ref{lem:algorithmic-transitivity-elimination} ensures that the transitivity step
collecting the two derivations $\Gamma; \nil \vdash \IdA\,a \leq \Single\,A\,a$ and
$\Gamma; \nil \vdash \Single\,A\,a \leq^* \Unit\,A\,a$ can be removed, obtaining a
transitivity-free derivation
\begin{displaymath}
  \Gamma; \nil \vdash \IdA\,a \leq \Unit\,A\,a
\end{displaymath}

Our constructive proof produces a transitivity-free derivation of
$\Gamma; \nil \vdash \Single\,A\,a \leq^* \Unit\,A\,a$ (\ie, flattens it into
$\Gamma; \nil \vdash \Single\,A\,a \leq \Unit\,A\,a$ witnessed by the straight vertical
and horizontal paths in the middle of the diagram), and then applies the commutativity
theorem to complete the upper-right corner of the diagram, obtaining the transitivity-free
derivation $\Gamma; \nil \vdash (\IdExprA) \, a \leq \Unit\,A\,a$. This proof is possible
in our algorithmic system thanks to the operand stack and Rule \textsc{Ms-FOp} which
exposes the term $(\lambda x\myleq A.a) \, a$, and to the fact that the derivation
$\Gamma; \nil \vdash (\lambda x\myleq A.a) \, a\leq^* (\lambda x\myleq A.A) \, a$ in the
middle of the diagram preserves the possibly convoluted derivation
$\Gamma, x \myequiv a; \nil \vdash a\leq^* A$ as a sub-derivation (marked with
$\ddots$). Although preserving this convoluted derivation seems at first glance
counter-intuitive, it gives the opportunity to apply the induction hypothesis of the
Theorem~\ref{lem:algorithmic-transitivity-elimination} to the term
$(\lambda x\myleq A.a) \, a$, which will flatten the convoluted derivation. Contrary to
ours, Hutchins' attempts at proving that transitivity is admissible reduce the complexity
of the intermediate step by avoiding the convoluted derivation upfront (by directly taking
the reduction $(\lambda x\myleq A.x) \, a \myrel{\leq} (\lambda x\myleq A.A) \, a$ which
promotes the formal parameter $a$ of the redex to its annotation $A$).

But this premature promotion of a formal parameter to its type annotation is in fact
counter-productive, which manifests in the failure of Hutchins' proof of the commutativity
theorem, where the premature promotion prevents one from completing the diagram for
redexes with a transitivity-free right edge.

This result was the missing cornerstone in Hutchins' original theory, and its proof here
is necessary for establishing type safety. In the next section, we present what can be a
type-safe theory based on our MPSS.

\section{Towards a Type-Safe System}
\label{sec:type-safety}

Thanks to the above commutative system, we can now sketch the design of a type-safe
system.  The static semantics for this system is defined by a \emph{well-formedness}
judgment, presented in Figure~\ref{fig:well-formed-derivations-rules}, which is the static
condition for type safety. Rules~\textsc{Wf-PrS} and \textsc{Wf-PrE} ensure that a
variable is well-formed if and only if it occurs in a prevalid context (respectively, in a
subtype or an equivalence annotation). Rule~\textsc{Wf-Top} ensures that $\T$ is
universally well-formed. Rule~\textsc{Wf-Fun} ensures that an abstraction is well-formed
if and only if it has a well-formed annotation, and its body is well-formed in a context
enlarged with the abstraction annotation. Finally, Rule~\textsc{Wf-App} ensures that an
application is well-formed if and only if it has an operator that is a well-subtype of a
function, and an operand that is a well-subtype of the operator's subtype annotation.

Well-formedness depends on the \emph{well-subtyping} relation $\leq_\wef$, which in turn
subsumes \emph{well-equivalence} $\equiv_\wef$.  The rules for well-subtyping connect the
static system to our dynamic reduction semantics. For example, Rule~\textsc{Ws-Lf2} states
that $v$ is a well-subtype of $t$ if $v$ promotes to some $v'$ that is itself a
well-subtype of $t$.

\begin{figure}[ht!]
  \textsf{\textbf{Term well-formedness}}\hfill\boxed{\text{$\Gamma \vdash t~\wef$}}
  \begin{mathpar}
    \inferrule*[right=Wf-PrS]
    {\Gamma ~\prevalid \and x \myleq t \in \Gamma}
    {\Gamma \vdash x~\wef}
    \and
    \inferrule*[right=Wf-PrE]
    {\Gamma ~\prevalid \and x \myequiv \alpha \in \Gamma}
    {\Gamma \vdash x~\wef}
    \and
    \inferrule*[right=Wf-Top] {\Gamma ~\prevalid} {\Gamma \vdash \T~\wef}
    \and
    \inferrule*[right=Wf-Fun]
    {\Gamma, x\myleq t \vdash u~\wef \and \Gamma \vdash t ~\wef}
    {\Gamma \vdash \lambda x\myleq t.u~\wef}
    \and
    \inferrule*[right=Wf-App]
    {\Gamma \vdash u \leq^*_\wef \lambda x\myleq t.\T \and
      \Gamma \vdash v \leq^*_\wef t}
    {\Gamma \vdash u\,v~\wef}
  \end{mathpar}

  \textsf{\textbf{Transitive well-subtyping}}\hfill
  \boxed{\text{$\Gamma \vdash v \myvartriangleleft_{\wef}^* t$}}
  \begin{mathpar}
    \inferrule*[right=Ws-Sub]
    {\Gamma \vdash v ~\wef \and
      \Gamma \vdash v \myvartriangleleft_{\wef} t \and
      \Gamma \vdash t ~\wef}
    {\Gamma \vdash v \myvartriangleleft_{\wef}^* t}

    \inferrule*[right=Ws-Trs]
    {\Gamma \vdash v \myvartriangleleft_{\wef}^* u \and
      \Gamma \vdash u ~\wef \and
      \Gamma \vdash u \myvartriangleleft_{\wef}^* t}
    {\Gamma \vdash v \myvartriangleleft_{\wef}^* t}
  \end{mathpar}

  \textsf{\textbf{Well-subtyping}}\hfill\boxed{\text{$\Gamma \vdash u
      \myvartriangleleft_\wef t$}}
  \begin{mathpar}
    \inferrule*[right=Ws-Rfl]
    {\Gamma ~\prevalid}
    {\Gamma \vdash t \myvartriangleleft_{\wef} t}

    \inferrule*[right=Ws-Lf1]
    {\Gamma; \nil \vdash v \myrel{\equiv} v' \and
    \Gamma \vdash v' \myvartriangleleft_\wef t}
    {\Gamma \vdash v \myvartriangleleft_\wef t}

    \inferrule*[right=Ws-Lf2]
    {\Gamma \vdash v ~\wef \and
    \Gamma; \nil \vdash v \myrel{\leq} v' \and
    \Gamma \vdash v' ~\wef \and
    \Gamma \vdash v' \leq_\wef t}
    {\Gamma \vdash v \leq_\wef t}

    \inferrule*[right=Ws-Rgh]
    {\Gamma \vdash v \myvartriangleleft_\wef t' \and
     \Gamma; \nil \vdash t \myrel{\equiv} t'}
    {\Gamma \vdash v \myvartriangleleft_{\wef} t}
  \end{mathpar}

  \caption{Well-subtyping in MPSS}
  \label{fig:well-formed-derivations-rules}
\end{figure}

We define type safety as \emph{progress} (well-formed terms either are in normal form or
can be reduced), and \emph{preservation} (subtyping relations between terms are preserved
under reduction by the operational semantics below). We define the operational semantics
as a context-based reduction semantics \cite{Fel87}: the evaluation $\mapsto$ is defined
as the congruence closure (Rule~\textsc{Os-Con} and evaluation contexts $\Co$) of
$\beta$-reduction (Rule~\textsc{Os-Bet}).

\noindent
\textsf{\textbf{Operational Semantics}}\hfill\boxed{\text{$t\mapsto t'$}}
\begin{mathpar}
  \inferrule*[right=Os-Con]
  {t \mapsto t'}
  {\Co[t] \mapsto \Co[t']}

  \inferrule*[right=Os-Bet]
  { }
  {(\lambda x\myleq t.u)v \mapsto \cas{u}{x}{v}}

  \Co ::= \square \mid (\lambda x\myleq \Co.t) \mid (\lambda x\myleq t.\Co) \mid (\Co\,t) \mid (t\,\Co)
\end{mathpar}

Well-formedness depends on the \emph{well-subtyping} relation defined in
Figure~\ref{fig:well-formed-derivations-rules}. Well-subtyping ($\leq_\wef$) subsumes
\emph{well-equivalence} ($\equiv_\wef$). We use the meta-variable $\myvartriangleleft$ and
define ($\myvartriangleleft_\wef$), which captures both relations, and similarly for their
transitive closure ($\myvartriangleleft^*_\wef$). By a slight abuse of language, and to
simplify the text, we write ``well-subtyping'' to refer to the meta-notation
($\vartriangleleft_\wef$) in the paragraph that follows.

Figure~\ref{fig:well-formed-derivations-rules} introduces \emph{transitive
  well-subtyping}, the straightforward transitive closure of the well-subtyping relation
(Rules~\textsc{Ws-Sub} and \textsc{Ws-Trs}). Next in the figure is \emph{well-subtyping},
which is defined similarly to the diagrammatic subtyping relation $\leq$ defined in
Section~\ref{sec:machine-based-system}, but with extra requirements on the well-formedness
of the terms involved.  Rule~\textsc{Ws-Rfl} states that well-subtyping is reflexive.
Rule~\textsc{Ws-Lf1} ensures that $v$ is a well-subtype of $t$ if and only if $v'$ is a
well-subtype of $t$ and $v$ reduces to $v'$. Rule~\textsc{Ws-Lf2} ensures that $v$ is a
well-subtype of $t$ if and only if $v'$ is a well-subtype of $t$ and $v$ promotes to $v'$,
but with the extra requirements that both $v$ and $v'$ are well-formed, whose role is
explained in a minute.  Rule~\textsc{Ws-Rgh} ensures that $v$ is a well-subtype of $t$ if
and only if $v$ is a well-subtype of $t'$ and $t$ reduces to $t'$.

Thanks to these definitions, we may define type safety in our calculus as the following
progress and preservation theorems.
\begin{theorem}[Progress]
  \label{thm:progress}
  Let $t$ be a term. For every logical context $\Gamma$, if $\Gamma \vdash t~\wef$ then
  either $t$ is in normal form, or there exists a term $t'$ such that $t \mapsto t'$.
\end{theorem}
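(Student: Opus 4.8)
The plan is to prove Progress by a straightforward structural induction on the term $t$, using the well-formedness derivation to peel off well-formedness of the immediate subterms in the appropriate extended contexts so that the induction hypothesis applies. (The statement is already quantified over all extended contexts, so the hypothesis is available for subterms living in a larger subtyping context or under a non-empty stack, as happens in \textsc{W-FunOp}.) The base cases are immediate: $\T$ is a normal form by definition, and a variable $x$ is the normal form $x\{t_n\}^*$ with zero arguments. So only abstractions and applications need work.

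If $t = \lambda x\myleq t_0.u_0$, then $\Gamma;s\vdash t~\wef$ was derived by \textsc{W-Fun} or \textsc{W-FunOp}; both give $\Gamma;\nil\vdash t_0~\wef$ and well-formedness of $u_0$ in some extended context. Applying the induction hypothesis to $t_0$ and to $u_0$: if $t_0\mapsto t_0'$ then $t\mapsto\lambda x\myleq t_0'.u_0$ using the evaluation context $(\lambda x\myleq\Co.u_0)$; else if $u_0\mapsto u_0'$ then $t\mapsto\lambda x\myleq t_0.u_0'$ using $(\lambda x\myleq t_0.\Co)$; else both $t_0$ and $u_0$ are normal forms and $t$ is the normal form $\lambda x\myleq t_n.u_n$.

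If $t = u\,v$, then $\Gamma;s\vdash t~\wef$ was derived by \textsc{W-App}, giving $\Gamma;v::s\vdash u\leq^*_\wef\lambda x\myleq t_0.\T$ and $\Gamma;\nil\vdash v\leq^*_\wef t_0$; unfolding the rules \textsc{Wf-Sub}, \textsc{Wf-Trans}, \textsc{Wf-Rule} that define $\leq^*_\wef$ extracts $\Gamma;v::s\vdash u~\wef$ and $\Gamma;\nil\vdash v~\wef$. Apply the induction hypothesis to $u$. If $u\mapsto u'$, then $t\mapsto u'\,v$ under $(\Co\,v)$. Otherwise $u$ is a normal form, hence $u$ is $\T$, an abstraction $\lambda y\myleq t_n.u_n$, or a neutral term $y\{t_n\}^*$. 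If $u$ is an abstraction, then $t=(\lambda y\myleq t_n.u_n)\,v$ and $t\mapsto\cas{u_n}{y}{v}$ by \textsc{E-App}---note that \textsc{E-App} carries no side-condition, so the subtype requirement on redexes plays no role in the operational semantics. If $u = y\{t_n\}^*$, apply the induction hypothesis to $v$: either $v\mapsto v'$ and $t\mapsto u\,v'$ under $(u\,\Co)$, or $v$ is a normal form and then $t = y\{t_n\}^*$ is itself a normal form.

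The remaining subcase, $u = \T$, is the one that requires real work and where the transitivity-elimination problem bites: I must show $\Gamma;v::s\vdash\T\leq^*_\wef\lambda x\myleq t_0.\T$ is impossible, but a direct inversion of $\leq^*_\wef$ fails because a transitive step could route through an intermediary term (say a variable) that promotes to an abstraction. I would resolve this exactly as announced in Section~\ref{sec:algorithmic-system}: invoke Theorem~\ref{lem:algorithmic-transivity-elimination} (equivalently, admissibility of \textsc{Wf-Sub}/\textsc{Wf-Trans}) to get $\Gamma;v::s\vdash\T\leq\lambda x\myleq t_0.\T$ in one step, then apply Proposition~\ref{lem:def-subtype} to obtain a term $w$ with $\Gamma;v::s\vdash\T\mrelTwo{\leq}w$ and $\Gamma;v::s\vdash\lambda x\myleq t_0.\T\mrelTwo{\equiv}w$, and finally observe by a one-line induction on reduction length that $w=\T$ (the term $\T$ one-step-promotes only to $\T$: \textsc{Srs-Prom} needs a variable on the left, \textsc{Srs-Top} yields $\T$, and \textsc{Srs-Eq} forces $\T\myrel{\equiv}\T$) while $w$ must be an abstraction ($\lambda x\myleq t_0.\T$ is neither a redex nor of the form $\T\,u$, so only \textsc{Cr-Fun} fires), a contradiction. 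I expect this $u=\T$ subcase to be the only genuine obstacle; every other case is routine congruence and normal-form bookkeeping, and this one too becomes routine once transitivity elimination is in hand.
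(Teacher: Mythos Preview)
Your proof is correct and follows essentially the same approach as the paper: structural induction on $t$, extracting well-formedness of subterms from the well-formedness rules (the paper packages this extraction as Proposition~\ref{prop:transitivity-preserves-well-formedness}), and handling the pathological $u=\T$ subcase via transitivity elimination and the diagrammatic characterisation of $\leq$. The only difference is one of packaging: you inline the impossibility argument for $\Gamma;v::s\vdash\T\leq^*_\wef\lambda x\myleq t_0.\T$, whereas the paper factors it out as a separate Theorem~\ref{lem:no-supertype-top} (No supertype of Top) and simply cites it.
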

\begin{theorem}[Preservation]
  \label{thm:preservation}
  Let $\Gamma$ be a logical context. Let $t$, $t'$, and $u$ be terms. If
  $\Gamma \vdash t \leq^*_\wef u$ and $t \mapsto t'$, then
  $\Gamma \vdash t' \leq^*_\wef u$.
\end{theorem}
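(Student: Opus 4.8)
The plan is to isolate the genuinely hard part --- preservation of well-formedness under $\beta$-reduction --- and reduce the stated theorem to it by cheap means. Two routine facts come first. (i) \emph{Endpoints of transitive well-subtyping are well-formed}: a trivial induction on the derivation of $\Gamma;s\vdash t\leq^*_\wef u$ using \textsc{Wf-Rule}, \textsc{Wf-Sub}, \textsc{Wf-Trans} shows $\Gamma;s\vdash t~\wef$ (and also $\Gamma;s~\prevalid$, by inspection of the well-formedness rules). (ii) \emph{A reduct sits below its source}: if $t\mapsto t'$ then $t\myrel{\equiv}t'$ by Proposition~\ref{lem:mapsto-inclusion}, and \textsc{As-Refl} followed by \textsc{As-Right} then give $\Gamma;s\vdash t'\leq t$ as soon as $\Gamma;s~\prevalid$. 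Now assume the key lemma, \emph{well-formedness preservation}: if $\Gamma;s\vdash t~\wef$ and $t\mapsto t'$ then $\Gamma;s\vdash t'~\wef$. From the hypothesis $\Gamma;s\vdash t\leq^*_\wef u$ we get $\Gamma;s\vdash t~\wef$ by~(i), hence $\Gamma;s\vdash t'~\wef$ by the key lemma, and $\Gamma;s\vdash t'\leq t$ by~(ii); \textsc{Wf-Rule} turns these into $\Gamma;s\vdash t'\leq_\wef t$, \textsc{Wf-Sub} into $\Gamma;s\vdash t'\leq^*_\wef t$, and \textsc{Wf-Trans} (with side condition $\Gamma;s\vdash t~\wef$) combines this with $\Gamma;s\vdash t\leq^*_\wef u$ to yield $\Gamma;s\vdash t'\leq^*_\wef u$.

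So everything hangs on well-formedness preservation, which I would prove by induction on the evaluation context, writing $t=\Co[r]$, $t'=\Co[r']$ with $r\mapsto r'$ the contracted redex; since the well-formedness of an application is stated via a $\leq^*_\wef$ premise (\textsc{W-App}), this induction and the proof of the theorem itself are interleaved at strictly decreasing term size. In the congruence cases ($\Co\neq[\,]$) I invert whichever of \textsc{W-Fun}, \textsc{W-FunOp}, \textsc{W-App} derived $\Gamma;s\vdash t~\wef$, obtaining a $~\wef$ or $\leq^*_\wef$ sub-judgement about the affected immediate subterm $q$, note that $q$ itself reduces, apply the induction hypothesis (for the $\leq^*_\wef$ sub-judgements, composing with the reassembly above), and re-apply the rule. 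Two pieces of bookkeeping are packaged as auxiliary lemmas proved in the same induction: when $r$ lies in operator or operand position the continuation stack gains or loses an entry, so one needs that $~\wef$ and $\leq^*_\wef$ are invariant under replacing a stack entry by an equivalent term and stable under replacing it by a subtype --- which, because \textsc{W-FunOp} and \textsc{Srs-FunOp} slide stack entries into variable annotations, is a context-narrowing argument --- and the well-formedness side conditions of \textsc{Wf-Rule}/\textsc{Wf-Trans} that arise when reassembling are themselves instances of the lemma being proved.

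The base case is the redex $r=(\lambda x\myleq a.b)\,w\mapsto\cas{b}{x}{w}$. After the congruence descent we have $\Gamma';s'\vdash(\lambda x\myleq a.b)w~\wef$ in some local context $\Gamma';s'$; inverting \textsc{W-App} gives $\Gamma';w::s'\vdash\lambda x\myleq a.b\leq^*_\wef\lambda y\myleq d.\T$ and $\Gamma';\nil\vdash w\leq^*_\wef d$, and since the left endpoint of the first is well-formed (fact~(i)), inverting \textsc{W-FunOp} gives $\Gamma',x\myleq w;s'\vdash b~\wef$ and $\Gamma';\nil\vdash a~\wef$. The heart of the case is a substitution lemma --- \emph{if $\Gamma,x\myleq w;s\vdash b~\wef$ and $w$ is well-formed, then $\Gamma;s\vdash\cas{b}{x}{w}~\wef$} --- proved by mutual induction with the corresponding statements for $\myrel{\equiv}$, $\myrel{\leq}$, $\equiv$, $\leq$, $\leq^*_\wef$ and $\prevalid$. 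Applying it to $\Gamma',x\myleq w;s'\vdash b~\wef$ yields $\Gamma';s'\vdash\cas{b}{x}{w}~\wef$, which re-plugs through $\Co$ (using fact~(ii) and \textsc{Wf-Trans} for the $\leq^*_\wef$ frames met in the descent).

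I expect the main obstacle to be exactly this substitution lemma and its companion stack/narrowing lemmas. The reformulation pays off sharply here: because \textsc{W-FunOp} annotates the bound variable of an applied abstraction with the \emph{operand} $w$ rather than with the abstraction's own annotation, the step ``substitute $w$ for an $x$ whose annotation is $w$'' --- the one that in a naive formulation would require knowing $w\leq^* a$ --- is trivial. But the delicate cases are those where $x$ occurs in operator position inside $b$ (in particular $b=x$), where the vanished abstraction must be shown to have been safely replaceable by $w$ in the enlarged stack it used to consume; reassembling the $\leq^*_\wef$ judgements there is precisely where the higher-order-subtyping inversion ``related functions have equivalent parameter annotations'' bites, and this inversion is usable only because transitivity elimination (Theorem~\ref{lem:algorithmic-transivity-elimination}) has already collapsed the transitive subtyping premises to single steps. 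Thus preservation ultimately rests on transitivity elimination --- the very obstruction that stopped Hutchins --- which in turn is made provable by the commutativity of $\myrel{\equiv}$ and $\myrel{\leq}$ that the continuation stack affords.
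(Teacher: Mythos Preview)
Your overall strategy is the paper's: reduce preservation to a well-formedness-preservation lemma (the paper's Lemma~\ref{lem:reduction-preserves-wf}), prove that by case analysis on the reduction, with a substitution lemma (Lemma~\ref{lem:substitution-preserves-wf}) for the $\beta$-redex base case and context/stack narrowing lemmas (Lemmas~\ref{lem:narrowing-context-wf} and~\ref{lem:narrowing-stack-wf}) for the congruence cases. Two organisational differences are worth noting. First, your top-level assembly via \textsc{Wf-Trans} is cleaner than the paper's, which instead collapses $t'\leq^* u$ to $t'\leq u$ through Theorem~\ref{lem:algorithmic-transivity-elimination} before rebuilding with \textsc{Wf-Rule} and \textsc{Wf-Sub}; either works, since transitivity elimination is needed anyway inside the narrowing lemmas and Lemma~\ref{lem:wapp-aux}. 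Second, you need not interleave the two inductions: the paper proves Lemma~\ref{lem:reduction-preserves-wf} standalone, re-establishing the $\leq^*_\wef$ premises of \textsc{W-App} in each congruence case locally by the same ``$u'\leq u$ then transitivity'' manoeuvre (packaged as Lemma~\ref{lem:wapp-aux}) rather than by a recursive call to the theorem.

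One genuine misconception: your expectation that the inversion principle ``related abstractions have equivalent parameter annotations'' is needed inside the substitution lemma is exactly what the reformulation eliminates, and the paper says so explicitly in Section~\ref{sec:type-safety} just after Theorem~\ref{thm:commutativity-theorem}. When $x$ occurs in operator position inside $b$, the \textsc{W-App} premises after substitution come directly from substitution-preserves-subtyping (Lemma~\ref{lem:substitution-preserves-derivation}): in the \textsc{Srs-Prom} sub-case, $x$ with annotation $w$ is replaced by $w$, and the promotion $x\myrel{\leq}w$ becomes the reflexive $w\myrel{\leq}w$---no comparison with the abstraction's own annotation $a$, hence no inversion. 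The inversion lemma (Lemma~\ref{lem:inversion-lemma}) does appear in the paper, but only later, for the incremental type-checking algorithm of Section~\ref{sec:practical-algorithm}, not for type safety.
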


The proof of type preservation (Theorem~\ref{thm:preservation}) rests on showing that
reduction preserves term well-formedness. Lemma~\ref{lem:mapsto-preserves-wf} establishes
this fact.

\begin{lemma}[Evaluation preserves well-formedness]
  \label{lem:mapsto-preserves-wf}
  Let $\Gamma$ be a logical context. Let $t$ and $t'$ be terms such that $t \mapsto t'$
  and such that $\Gamma \vdash t ~\wef$. We have $\Gamma \vdash t' ~\wef$.
\end{lemma}

A crucial case in the proof of Lemma~\ref{lem:mapsto-preserves-wf} is $\beta$-reduction,
where a well-formed application $(\lambda x\myleq t.u) \, \alpha$ reduces to
$\cas{u}{x}{\alpha}$. To handle this case, we need to show that well-formedness is
preserved by substitution. The following lemma provides this guarantee, where the premise
$\Gamma \vdash \alpha \leq^*_{\wef} t$ comes from the well-formedness of the original
term.

\begin{lemma}[Substitution preserves well-formedness]
  \label{lem:substitution-preserves-wf}
  Let $\Gamma$ and $\Gamma'$ be logical contexts, and let $t$, $u$ and $\alpha$ be terms.
  Let $x$ be a variable. If $\Gamma,x\myleq t,\Gamma' \vdash u~\wef$, and
  $\Gamma \vdash \alpha \leq^*_{\wef} t$, then
  $\Gamma,\cas{\Gamma'}{x}{\alpha} \vdash\cas{u}{x}{\alpha}~\wef$.
\end{lemma}

Our subtyping relation in MPSS is context-dependent, unlike in PSS where it isn't, which
implies that Lemma~\ref{lem:substitution-preserves-wf} relies on the conjecture collected
below, which states that the well-subtyping relation is independent of the surrounding
program structure, on certain contexts named \emph{covariant contexts} defined as follows:
\begin{mathpar}
  \Po ::= \square \mid (\lambda x\myleq t.\Po) \mid (\Po\,t)
\end{mathpar}

\begin{conjecture}[Well-subtyping is context independent]
  \label{conj:stack-elimination}
  Let $\Gamma$ be a logical context and $u$ and $t$ be terms such that
  $\Gamma \vdash u \leq^*_\wef t$.  Let $\Po$ be a covariant context such that both
  $\Po[u]$ and $\Po[t]$ are well-formed in $\Gamma$. We conjecture that
  $\Gamma \vdash \Po[u] \leq^*_\wef \Po[t]$.
\end{conjecture}

We believe that the conjecture holds with the additional well-formedness requirements that
both $v$ and $v'$ are well-formed in Rule \textsc{Ws-Lf2} above.

All in all, type safety, which is the combination of progress (Theorem~\ref{thm:progress})
and preservation (Theorem~\ref{thm:preservation}), holds under the assumption that
Conjecture~\ref{conj:stack-elimination} holds. The proof that both these theorems hold
under the assumption that Conjecture~\ref{conj:stack-elimination} holds can be found in
the appendix.

Establishing this conjecture is the final step towards a complete proof of type safety.
The sketch of a complete type system provided in this section demonstrates a path forward,
built upon the solid foundation of our commutativity result. Future work could either
adopt this sketch and complete the proof of type safety, or develop a different type
system that leverages the commutativity result established in
Section~\ref{sec:commutation} to establish type safety in a different way.

\section{Related work}
\label{sec:related}

PSS naturally subsumes dependent types and higher-order subtyping, so we comment on both
ideas.

There are many works related to dependent types. For instance in Cayenne
\cite{augustsson1998cayenne}, Idris \cite{brady2013idris} or Haskell
\cite{marlow2010haskell}, dependent types are an extension of the language, in contrast to
our case where dependent types are built-in as there is no distinction between terms and
types.

Other languages mainly use dependent types for theorem proving, for instance Coq
\cite{coquand1986calculus} or Agda \cite{agda}. Our language is not strongly normalising,
and our main goal is not theorem proving as we cannot use the language as a proof system,
since applying the Curry-Howard correspondence to PSS in a consistent way is still an open
question.

To accommodate the Curry-Howard correspondence, one possible approach is to distinguish
between potentially non-terminating terms and terminating terms, either with monads or via
other means \cite{moggi1991notions, pfenning2001judgmental}. $\lambda^\theta$ is a
calculus that explores this idea
\cite{casinghino2014combiningthese,casinghino2014combining,sjoberg2015dependently}. It
combines proofs and programs into a single language that distinguishes two fragments, a
logical fragment where every term is strongly normalising that is suited to write proofs,
and a programmatic fragment that is Turing complete and lacks termination. Such systems
allow so-called \emph{freedom of speech}, which is the ability to write (terminating)
proofs about non-terminating terms.

Subtyping extensions of type systems have been considered. In 1997, Chen proposed
$\lambda_{C_\leq}$ \cite{chen1997subtyping}, an extension of the calculus of constructions
$\lambda_C$ with subtyping. However, Chen's $\lambda_{C_\leq}$ supports neither top types
nor bounded quantification as we do in PSS.

Various extensions of System F, in order to equip it with subtyping, have also been
considered in the literature. Within these we find System F$_\leq$
\cite{cardelli1994extension} and System F$_\leq^\omega$ \cite{pierce1997higher}, which
respectively add subtyping and higher-order subtyping capabilities. PSS subsumes the
version of System F$_\leq$ without contravariant arrow-types \cite{hutchinsthese}.

The origin of mixing terms and types goes back to the introduction of pure type systems
\cite{mckinna1993pure}, which generalises the seminal works on lambda calculi with types
by \cite{Bar92}. These systems have the ability to unify types and terms under the same
kind, but there is a typing relation instead of a subtyping one. Moreover, in some of
these pure type systems, the type $\bot$ is inhabited. This gives rise to the well-known
Girard's paradox (see \cite{hurkens1995simplification}). In our system, there is not a
primitive $\bot$ subtype, but we could encode such a subtype to be inhabited by itself,
and that would be unproblematic since our aim is not to use PSS as a theorem prover.

Systems where terms can be used in certain ways as types have been considered.  Aspinall
studied the combination of subtyping with singleton types
\cite{aspinall1994subtyping}. Singleton types are close to what we study, as in our
calculus, terms can be seen as singleton types of this unique value. We have for instance,
for every term $t$, $\varepsilon \vdash t \leq t$, which would be translated to
$\vdash t : \{t\}$ in a system with singleton types.

Other attempts to unify typing and subtyping exist. A more conservative attempt that tries
to unify typing and subtyping is \cite{yang2017unifying}. This work has two major
differences with our approach. First, in \cite{yang2017unifying} the subtyping relation
still tracks types (their unification of subtyping and typing is more conservative than
ours). Second, in PSS functions and function (sub)types are defined with the same syntax,
and therefore they are both instances of the same construct, which is not the case in
\cite{yang2017unifying}.

Because many interesting features described in \cite{hutchinsthese} are related to
modules, other type systems especially related to objects have been studied. Our work
shares foundational goals with the Dependent Object Types (DOT) calculus, the theoretical
basis for Scala 3 \cite{amin2016essence, rompf2016type}. Both PSS and DOT aim to create
powerful, unified type systems that support advanced modularity, and provide a formal
basis for solving challenges akin to the expression problem, the tag-elimination problem,
and others. DOT achieves this, among other features, through path-dependent types, where
the type of a member, $p.T$, depends on the term $p$. Because DOT allows objects to
contain types as members, it provides a form of dependent typing. PSS takes a more radical
approach by collapsing types and terms into a single syntactic category, where the
traditional typing relation ($:$) is completely replaced by a universal subtyping relation
($\leq$).  This design is what enables PSS's key features described in Sections 3 and 4 of
\cite{hutchinsthese}.

\section{Future work}
\label{sec:future-work}

As described in Section~\ref{sec:type-safety}, it is left as future work to define a type
system that is built upon MPSS and to prove its type safety. Another interesting area to
explore is the link between our system and Hutchins' system, to determine if both
well-formed notions are equivalent.

Once this system is defined, it is also left as future work to obtain an algorithm that
typechecks terms in MPSS. Hutchins describes an algorithm for his system in
\cite{hutchinsthese}, but one should adapt his algorithm to this new system, and study
upon which conditions it terminates. One possible approach is to use the operational
relevance of terms, for which there exist some arguments that lead them to terminate.  We
leave as future work to explore other practical algorithms which terminate on
operationally relevant terms.

Another venue of future work is to study whether PSS and MPSS are Turing-complete. To the
best of our knowledge, whether PSS is Turing-complete has not been formally proven yet.
The argument in favor of Turing-completeness rests on the existence of a (well-formed)
looping combinator in PSS, which would consist of a family of terms $L_n$ such that
$L_n \, f$ $\beta$-reduces to $f(L_{n+1} \, f)$ for any $f$. Such a looping combinator
exists for System~$\lambda *$ \cite{Howe87}, which can be encoded into Hutchins' PSS
thanks to the encoding of System~$\lambda *$ in Hutchins' PSS, provided on page~47 of
\cite{hutchinsthese}. In order to establish Turing-completeness of PSS, one should now
adapt the argument made by \cite{geuvers2009fixed} in Section 2 or the argument made in
\cite{HS08}.  There is, however, no simple known expression of a looping combinator in
System~$\lambda *$. The looping combinator from \cite{Howe87} can only be derived
mechanically, and working with it directly is proven to be hard as its length is more than
40 pages (see the discussion on the logical consistency and impredicativity of
System~$\lambdaHut$ on pages 48--51 of \cite{hutchinsthese}). It may therefore be
difficult to conclude about Turing-completeness of either PSS or MPSS, which is left as
future work.

\section{Conclusion}
\label{sec:conclusions}

This paper presents a variant of the PSS originally introduced by Hutchins in
\cite{hutchinsthese}, which we name MPSS, and proves that our reformulation enjoys a key
commutation property which was missing in Hutchins' system. We also sketch how a type
system can be derived from MPSS, which could lead to a type-safe system. To the best of
our knowledge PSS and MPSS are the first theories that unify harmoniously terms and types
by using a single kind where everything is a subtype.

Meta-theories of languages with advanced features on types are notoriously difficult to
work with. We believe our work could help in laying the foundations for a new approach
that mixes terms and types in innovative ways.

\bibliography{main.bib}

\newpage

\appendix

\section{Theorems of the text}

\textsc{Theorem~\ref{thm:commutativity-theorem} [$\myrel{\leq}$ and $\myrel{\equiv}$
  strongly commute]}
  Let $\Gamma; s$ be an extended context.
  Let $t_0$, $t_1$ and $t_2$ be terms.
  If $\Gamma;s\vdash t_0\myrel{\equiv} t_1$ and $\Gamma;s\vdash t_0\myrel{\leq} t_2$
  then for all extended context $\Gamma'; s'$ with $\Gamma; s\rightarrowtail \Gamma'; s'$,
  there exists a term $t_3$,
  such that $\Gamma; s \vdash t_2\myrel{\equiv} t_3$ and $\Gamma';s' \vdash t_1\myrel{\leq} t_3$.
  
  Diagrammatically, where the solid arrows are the conditions for the existence of the dashed arrows.
    \begin{displaymath}
      \begin{tikzcd}
        t_2 \arrow[r, dashed, "\equiv"]   & t_3                       \\
        t_0 \arrow[r, "\equiv"] \arrow[u, "\leq"] & t_1 \arrow[u, dashed, "\leq"]
      \end{tikzcd}
    \end{displaymath}
\begin{proof}
  By induction on the term structure of $t_0$.
  
  We consider cases according to the last
  two rules that are applied respectively in the derivation of the horizontal and
  vertical edges of the diagram above, starting by the bottom-left corner.

  The cases where bottom rule is rule \textsc{Me-Top} or respectively rule \textsc{Me-TAp},
  then the missing top edge is by rule \textsc{Me-Top} or respectively \textsc{Me-TAp},
  and the missing right edge is by rule \textsc{Ms-Top}.
  
  If the vertical edge rule is \textsc{Ms-Top},
  then the missing top edge is by rule \textsc{Me-Top},
  and the missing right edge is by rule \textsc{Ms-Top}.

  If the vertical edge is \textsc{Ms-Equ}, then we have
  $\Gamma;s\vdash t_0\myrel{\equiv} t_1$ and $\Gamma;s\vdash t_0\myrel{\equiv} t_2$.
  By reflexivity of $\myrel{\equiv}$ (Proposition~\ref{prop:algorithmic-refl}), we have
  $\Gamma; s \rightarrowtail \Gamma; s$.
  Hence by Lemma~\ref{lem:myrelequiv-has-diamond-property} we have
  $\Gamma; s \vdash t_2\myrel{\equiv} t_3$ and $\Gamma';s' \vdash t_1\myrel{\equiv} t_3$.
  Thanks to rule \textsc{Ms-Equ}, from the latter derivation, we obtain $\Gamma';s' \vdash t_1\myrel{\leq} t_3$.

  We know that if the rule \textsc{Me-Pro} appears in the derivation tree $\Gamma;s\vdash t_0\myrel{\leq} t_2$,
  then we have in fact $\Gamma;s\vdash t_0\myrel{\equiv} t_2$, and as a result the results follows from
  Lemma~\ref{lem:myrelequiv-has-diamond-property} as outlined by the paragraph outlined above.
  We therefore assume that the rule \textsc{Me-Pro} doesn't appear in the derivation tree $\Gamma;s\vdash t_0\myrel{\leq} t_2$,
  an information useful only for the last case of this lemma.

  Else, for the general case we distinguish the sub-cases below:
  \begin{description}
    \item [Case \textsc{Me-Pro} with \textsc{Ms-Pro}:] We need to prove the diagram below:
  
    \begin{center}
      \begin{tikzcd}
        t \arrow[r, dashed, "\equiv"]           & . \\
        x \arrow[u, "\leq"] \arrow[r, "\equiv"] & \alpha \arrow[u, dashed, "\leq"]
      \end{tikzcd}
    \end{center}

    By assumption, we have the following reductions:
    \begin{itemize}
      \item $\Gamma; s \vdash x \myrel{\equiv} \alpha'$ with $x \myequiv \alpha \in \Gamma$
      \item $\Gamma; s \vdash x \myrel{\leq} t_2$ with $x \myleq t \in \Gamma$
    \end{itemize}
    By prevalidity of the context, we cannot have two distinct annotation for $x$ in the context,
    and therefore this case is impossible: there cannot be a subtype annotation and an equivalence
    annotation in the prevalid context $\Gamma$.
  
    \item [Case \textsc{Me-Var} with \textsc{Ms-Pro}:] We need to prove the diagram below:
    
    \begin{center}
      \begin{tikzcd}
      t \arrow[r, dashed, "\equiv"]             & . \\
      x \arrow[u, "\leq"] \arrow[r, "\equiv"]   & x \arrow[u, dashed, "\leq"]
      \end{tikzcd}
    \end{center}
    
    By assumption, we have the following reductions:
    \begin{itemize}
      \item $\Gamma; s \vdash x \myrel{\equiv} x$.
      \item $\Gamma; s \vdash x \myrel{\leq} t$ by rule \textsc{Ms-Pro}.
    \end{itemize}
    
    From rule \textsc{Ms-Pro}, we know there exists an annotation $x \myleq t \in \Gamma$.
    Let $\Gamma = \Gamma_0, x \myleq t, \Gamma_1$ for some context parts $\Gamma_0$ and $\Gamma_1$.

    Let $\Gamma'; s'$ be an extended context such that $\Gamma; s \rightarrowtail \Gamma'; s'$.
    
    By Lemma~\ref{lem:commutativity-context-weakening}, we obtain $\Gamma; \nil \rightarrowtail \Gamma'; \nil$
    from $\Gamma; s \rightarrowtail \Gamma'; s'$.
    Now, by multiple use of the rule \textsc{Ct-Ann}
    we have $\Gamma' = \Gamma'_0, x \myleq t', \Gamma'_1$ with $\Gamma_0; \nil \vdash t \myrel{\equiv} t'$.
    By weakening (Lemma~\ref{lem:context-weakening}),
    we have $\Gamma_0, x \myleq t, \Gamma_1; s \vdash t \myrel{\equiv} t'$ from $\Gamma_0; \nil \vdash t \myrel{\equiv} t'$.
    
    By rule \textsc{Ms-Pro}, we have $\Gamma'; s' \vdash x \myrel{\leq} t'$ because $x \myleq t' \in \Gamma'$.
    
    The filled diagram is therefore the following:
    \begin{center}
      \begin{tikzcd}
        t \arrow[r, "\equiv"]           & t' \\
        x \arrow[u, "\leq"] \arrow[r, "\equiv"]   & x \arrow[u, "\leq"]
      \end{tikzcd}
    \end{center}
  
    \item [Case \textsc{Me-App} with \textsc{Ms-App}:] We need to prove the diagram below:
    
    \begin{center}
      \begin{tikzcd}
      u_2 \, v_0 \arrow[r, dashed, "\equiv"]           & . \\
      u_0 \, v_0 \arrow[u, "\leq"] \arrow[r, "\equiv"] & u_1 \, v_1 \arrow[u, dashed, "\leq"]
      \end{tikzcd}
    \end{center}
    
    By assumption, we have the following reductions:
    \begin{itemize}
      \item $\Gamma; s \vdash u_0 \, v_0 \myrel{\equiv} u_1 \, v_1$ by \textsc{Me-App}, which gives us:
      \begin{itemize}
      \item $\Gamma; v_0 :: s \vdash u_0 \myrel{\equiv} u_1$
      \item $\Gamma; \nil \vdash v_0 \myrel{\equiv} v_1$
      \end{itemize}
      \item $\Gamma; s \vdash u_0 \, v_0 \myrel{\leq} u_2 \, v_0$ by \textsc{Ms-App}, which gives us:
      \begin{itemize}
      \item $\Gamma; v_0 :: s \vdash u_0 \myrel{\leq} u_2$
      \end{itemize}
    \end{itemize}

    Let $\Gamma'; s'$ be an extended context such that $\Gamma; s \rightarrowtail \Gamma'; s'$.
    By rule \textsc{Ct-Stk}, we have
    $\Gamma; v_0 :: s \rightarrowtail \Gamma'; v_1 :: s'$ from $\Gamma; \nil \vdash v_0 \myrel{\equiv} v_1$.
    
    By induction hypothesis on $u_0$, there exists a term $u_3$ such that:
    \begin{itemize}
      \item $\Gamma; v_0 :: s \vdash u_2 \myrel{\equiv} u_3$
      \item $\Gamma'; v_1 :: s' \vdash u_1 \myrel{\leq} u_3$
    \end{itemize}
    
    We now have
    \begin{itemize}
      \item $\Gamma'; s \vdash u_2 \, v_0 \myrel{\equiv} u_3 \, v_1$ by \textsc{Me-App}
      from $\Gamma; v_0 :: s \vdash u_2 \myrel{\equiv} u_3$ and $\Gamma; \nil \vdash v_0 \myrel{\equiv} v_1$
      \item $\Gamma'; s' \vdash u_1 \, v_1 \myrel{\leq} u_3 \, v_1$ by \textsc{Ms-App}
      from $\Gamma'; v_1 :: s' \vdash u_1 \myrel{\leq} u_3$
    \end{itemize}
    
    The filled diagram is therefore the following:
    \begin{center}
      \begin{tikzcd}
      u_2 \, v_0 \arrow[r, "\equiv"] & u_3 \, v_1 \\
      u_0 \, v_0 \arrow[u, "\leq"] \arrow[r, "\equiv"] & u_1 \, v_1 \arrow[u, "\leq"]
      \end{tikzcd}
    \end{center}
    
    \item [Case \textsc{Me-Fun} with \textsc{Ms-Fun}:] We need to prove the diagram below:
    
    \begin{center}
      \begin{tikzcd}
      \lambda x \myleq t_0 . u_2 \arrow[r, dashed, "\equiv"]           & . \\
      \lambda x \myleq t_0 . u_0 \arrow[u, "\leq"] \arrow[r, "\equiv"] & \lambda x \myleq t_1 . u_1 \arrow[u, dashed, "\leq"]
      \end{tikzcd}
    \end{center}
    
    By assumption, we have the following reductions:
    \begin{itemize}
      \item $\Gamma; \nil \vdash \lambda x \myleq t_0 . u_0 \myrel{\equiv} \lambda x \myleq t_1 . u_1$ by \textsc{Me-Fun}, which gives us:
      \begin{itemize}
      \item $\Gamma; \nil \vdash t_0 \myrel{\equiv} t_1$
      \item $\Gamma, x \myleq t_0; \nil \vdash u_0 \myrel{\equiv} u_1$
      \end{itemize}
      \item $\Gamma; \nil \vdash \lambda x \myleq t_0 . u_0 \myrel{\leq} \lambda x \myleq t_0 . u_2$ by \textsc{Ms-Fun}, which gives us:
      \begin{itemize}
      \item $\Gamma, x \myleq t_0; \nil \vdash u_0 \myrel{\leq} u_2$
      \end{itemize}
    \end{itemize}
    
    Let $\Gamma'; s'$ be an extended context such that $\Gamma; s \rightarrowtail \Gamma'; s'$.
    Because in our case $s = \nil$, we have $s' = \nil$.
    By rule \textsc{Ct-Ann}, we have $\Gamma, x \myleq t_0; \nil \rightarrowtail \Gamma', x \myleq t_1; \nil$
    from $\Gamma; \nil \vdash t_0 \myrel{\equiv} t_1$.
    
    By induction hypothesis on $u_0$, there exists a term $u_3$ such that:
    \begin{itemize}
      \item $\Gamma, x \myleq t_0; \nil \vdash u_2 \myrel{\equiv} u_3$
      \item $\Gamma', x \myleq t_1; \nil \vdash u_1 \myrel{\leq} u_3$
    \end{itemize}

    We now have
    \begin{itemize}
      \item $\Gamma; \nil \vdash \lambda x \myleq t_0 . u_2 \myrel{\equiv} \lambda x \myleq t_1 . u_3$ by \textsc{Me-Fun}
      from $\Gamma; \nil \vdash t_0 \myrel{\equiv} t_1$ and $\Gamma, x \myleq t_0; \nil \vdash u_2 \myrel{\equiv} u_3$
      \item $\Gamma'; \nil \vdash \lambda x \myleq t_1 . u_1 \myrel{\leq} \lambda x \myleq t_1 . u_3$ by \textsc{Ms-Fun}
      from $\Gamma', x \myleq t_1; \nil \vdash u_1 \myrel{\leq} u_3$
    \end{itemize}
    
    The filled diagram is therefore the following:
    \begin{center}
      \begin{tikzcd}
      \lambda x \myleq t_0 . u_2 \arrow[r, "\equiv"] & \lambda x \myleq t_1 . u_3 \\
      \lambda x \myleq t_0 . u_0 \arrow[u, "\leq"] \arrow[r, "\equiv"] & \lambda x \myleq t_1 . u_1 \arrow[u, "\leq"]
      \end{tikzcd}
    \end{center}
  
    \item [Case \textsc{Me-FOp} with \textsc{Ms-FOp}:] We need to prove the diagram below:

    \begin{center}
      \begin{tikzcd}
      \lambda x \myleq t_0 . u_2 \arrow[r, dashed, "\equiv"]           & . \\
      \lambda x \myleq t_0 . u_0 \arrow[u, "\leq"] \arrow[r, "\equiv"] & \lambda x \myleq t_1 . u_1 \arrow[u, dashed, "\leq"]
      \end{tikzcd}
    \end{center}
    
    By assumption, we have the following reductions:
    \begin{itemize}
      \item $\Gamma; \alpha :: s \vdash \lambda x \myleq t_0 . u_0 \myrel{\equiv} \lambda x \myleq t_1 . u_1$ by \textsc{Me-FOp}, which gives us:
      \begin{itemize}
      \item $\Gamma; \nil \vdash t_0 \myrel{\equiv} t_1$
      \item $\Gamma, x \myequiv \alpha; s \vdash u_0 \myrel{\equiv} u_1$
      \end{itemize}
      \item $\Gamma; \alpha :: s \vdash \lambda x \myleq t_0 . u_0 \myrel{\leq} \lambda x \myleq t_0 . u_2$ by \textsc{Ms-FOp}, which gives us:
      \begin{itemize}
      \item $\Gamma, x \myequiv \alpha; s \vdash u_0 \myrel{\leq} u_2$
      \end{itemize}
    \end{itemize}
    
    Let $\Gamma'; s'$ be an extended context such that $\Gamma; s \rightarrowtail \Gamma'; s'$.
    In our case, we have $s = \alpha :: s_0$ and $s' = \alpha' :: s_1$ with by rule \textsc{Ct-Stk}, $\Gamma; \nil \vdash \alpha \myrel{\equiv} \alpha'$.
    By rule \textsc{Ct-Stk}, we have $\Gamma; s_0 \rightarrowtail \Gamma'; s_1$
    and $\Gamma; \nil \vdash \alpha \myrel{\equiv} \alpha'$
    from $\Gamma; s \rightarrowtail \Gamma'; s'$.
    We now have $\Gamma, x \myequiv \alpha; s_0 \rightarrowtail \Gamma', x \myequiv \alpha'; s_1$ by rule \textsc{Ct-Ann}.
    
    By induction hypothesis on $u_0$, there exists a term $u_3$ such that:
    \begin{itemize}
      \item $\Gamma, x \myequiv \alpha; s_0 \vdash u_2 \myrel{\equiv} u_3$
      \item $\Gamma', x \myequiv \alpha'; s_1 \vdash u_1 \myrel{\leq} u_3$
    \end{itemize}
    
    We now have
    \begin{itemize}
      \item $\Gamma; \alpha :: s_0 \vdash \lambda x \myleq t_0 . u_2 \myrel{\equiv} \lambda x \myleq t_1 . u_3$ by \textsc{Me-FOp}
      from $\Gamma; \nil \vdash t_0 \myrel{\equiv} t_1$ and $\Gamma, x \myequiv \alpha; s_0 \vdash u_2 \myrel{\equiv} u_3$
      \item $\Gamma'; \alpha' :: s_1 \vdash \lambda x \myleq t_1 . u_1 \myrel{\leq} \lambda x \myleq t_1 . u_3$ by \textsc{Ms-FOp}
      from $\Gamma', x \myequiv \alpha'; s_1 \vdash u_1 \myrel{\leq} u_3$
    \end{itemize}
    
    The filled diagram is therefore the following:
    \begin{center}
      \begin{tikzcd}
      \lambda x \myleq t_0 . u_2 \arrow[r, "\equiv"] & \lambda x \myleq t_1 . u_3 \\
      \lambda x \myleq t_0 . u_0 \arrow[u, "\leq"] \arrow[r, "\equiv"] & \lambda x \myleq t_1 . u_1 \arrow[u, "\leq"]
      \end{tikzcd}
    \end{center}
    
    \item [Case \textsc{Me-Bet} with \textsc{Ms-App}:] We need to prove the diagram below:

    \begin{center}
      \begin{tikzcd}
      (\lambda x \myleq t . u_2) \, v_0 \arrow[r, dashed, "\equiv"]           & . \\
      (\lambda x \myleq t . u_0) \, v_0 \arrow[u, "\leq"] \arrow[r, "\equiv"] & \cas{u_1}{x}{v_1} \arrow[u, dashed, "\leq"]
      \end{tikzcd}
    \end{center}
    
    By assumption, we have the following reductions:
    \begin{itemize}
      \item $\Gamma; s \vdash (\lambda x \myleq t . u_0) \, v_0 \myrel{\equiv} \cas{u_1}{x}{v_1}$ by \textsc{Me-Bet}, which gives us:
      \begin{itemize}
        \item $\Gamma; s \vdash u_0 \myrel{\equiv} u_1$
        \item $\Gamma; \nil \vdash v_0 \myrel{\equiv} v_1$
      \end{itemize}
      \item $\Gamma; s \vdash (\lambda x \myleq t . u_0) \myrel{\leq} (\lambda x \myleq t . u_2) \, v_0$ by \textsc{Ms-App}, which gives us:
      \begin{itemize}
        \item $\Gamma; v_0 :: s \vdash \lambda x \myleq t . u_0 \myrel{\leq} \lambda x \myleq t . u_2$
        \item By \textsc{Ms-FOp}, this implies $\Gamma, x \myequiv v_0; s \vdash u_0 \myrel{\leq} u_2$
      \end{itemize}
    \end{itemize}
    
    Let $\Gamma'; s'$ be an extended context such that $\Gamma; s \rightarrowtail \Gamma'; s'$.

    As explained in the introduction paragraph of this lemma,
    we know that in the derivation tree of $\Gamma, x \myequiv v_0; s \vdash u_0 \myrel{\leq} u_2$
    then the rule \textsc{Me-Pro} isn't used
    (which is because otherwise we'd have $\Gamma, x \myequiv v_0; s \vdash u_0 \myrel{\equiv} u_2$
    and we conclude by Lemma~\ref{lem:myrelequiv-has-diamond-property}).
    Therefore this derivation is in fact $\Gamma; s \vdash u_0 \myrel{\leq} u_2$.
    
    By induction hypothesis on $u_0$, there exists a term $u_3$ such that:
    \begin{itemize}
      \item $\Gamma; s \vdash u_2 \myrel{\equiv} u_3$
      \item $\Gamma'; s' \vdash u_1 \myrel{\leq} u_3$
    \end{itemize}

    It is important to note that $x$ is a fresh variable created by the lambdas of this case,
    and therefore by alpha conversion there is no instance of $x$ in $s$, hence $\cas{s}{x}{v_1} = s$.
    By a similar reasoning, we know that there is no $x$ isn't in the domain of $\Gamma'$.
    As a result, the derivation $\Gamma'; s' \vdash u_1 \myrel{\leq} u_3$ do not make a promotion of $x$.
    Hence by Lemma~\ref{lem:subtyping-under-substitution-aux2}, we obtain
    $\Gamma'; s' \vdash \cas{u_1}{x}{v_1} \myrel{\leq} \cas{u_3}{x}{v_1}$.
    
    By rule \textsc{Me-Bet}, we obtain
    $\Gamma; s \vdash (\lambda x \myleq t . u_2) \, v_0 \myrel{\equiv} \cas{u_3}{x}{v_1}$ from
    $\Gamma; s \vdash u_2 \myrel{\equiv} u_3$ and $\Gamma; \nil \vdash v_0 \myrel{\equiv} v_1$.
    
    The filled diagram is therefore the following:
    \begin{center}
      \begin{tikzcd}
      (\lambda x \myleq t . u_2) \, v_0 \arrow[r, "\equiv"] & \cas{u_3}{x}{v_1} \\
      (\lambda x \myleq t . u_0) \, v_0 \arrow[u, "\leq"] \arrow[r, "\equiv"] & \cas{u_1}{x}{v_1} \arrow[u, "\leq"]
      \end{tikzcd}
    \end{center}
  \end{description}
\end{proof}

\textsc{Lemma~\ref{lem:myrelequiv-has-diamond-property} ($\myrel{\equiv}$ has the diamond property).}
  Let $\Gamma_0; s_0$ be an extended context.
  Let $t_0$, $t_1$, and $t_2$ be terms.
  If $\Gamma_0; s_0 \vdash t_0 \myrel{\equiv} t_1$ and $\Gamma_0; s_0 \vdash t_0 \myrel{\equiv} t_2$,
  then for any extended contexts
  $\Gamma_1; s_1$ and $\Gamma_2; s_2$ such that $\Gamma_0; s_0 \rightarrowtail \Gamma_1; s_1$ and $\Gamma_0; s_0 \rightarrowtail \Gamma_2; s_2$,
  there exists a term $t_3$
  such that $\Gamma_1; s_1 \vdash t_1 \myrel{\equiv} t_3$ and $\Gamma_2; s_2 \vdash t_2 \myrel{\equiv} t_3$.

  Moreover, for any variable $x$, if in the derivation of $\Gamma_0; s_0 \vdash t_0 \myrel{\equiv} t_1$
  (respectively $\Gamma_0; s_0 \vdash t_0 \myrel{\equiv} t_2$)
  there isn't an application of the Rule \textsc{Me-Pro} that makes a promotion of variable $x$,
  then in the derivation $\Gamma_2; s_2 \vdash t_2 \myrel{\equiv} t_3$
  (respectively $\Gamma_1; s_1 \vdash t_1 \myrel{\equiv} t_3$)
  there won't be an application of the Rule \textsc{Me-Pro} that makes a promotion of variable $x$.

  Diagrammatically, the existence of the solid arrows implies the existence of the dashed arrows:
  \begin{center}
    \begin{tikzcd}
      t_2 \arrow[r, dashed, "\equiv"]   & t_3                       \\
      t_0 \arrow[r, "\equiv"] \arrow[u, "\equiv"] & t_1 \arrow[u, dashed, "\equiv"]
    \end{tikzcd}
  \end{center}

\begin{proof}
  We now proceed by induction on the derivation tree of $\Gamma_0; s_0 \vdash t_0 \myrel{\equiv} t_1$.
  We consider cases according to the last
  two rules that are applied respectively in the derivation of the horizontal and
  vertical edges of the diagram above, starting by the bottom-left corner. The base case
  corresponds to when these two rules are both \textsc{Me-Top} (respectively both \textsc{Me-Var}),
  for which the lemma holds by using the rule \textsc{Me-Top} (respectively \textsc{Me-Var}) for the missing edges.
  For the general case we distinguish the subcases below:

  \begin{description}
    \item [Case \textsc{Me-Pro} and \textsc{Me-Var}:] We need to prove the diagram below:
    \begin{center}
      \begin{tikzcd}
        \alpha_1 \arrow[r, dashed, "\equiv"] & . \\
        x \arrow[u, "\equiv"] \arrow[r, "\equiv"] & x \arrow[u, dashed, "\equiv"]
      \end{tikzcd}
    \end{center}
    
    By assumption, we have the following reductions:
    \begin{itemize}
      \item $\Gamma_0; s_0 \vdash x \myrel{\equiv} x$.
      \item $\Gamma_0; s_0 \vdash x \myrel{\equiv} \alpha_1$ with $x \myequiv \alpha_0 \in \Gamma_0$ and $\Gamma_0; s_0 \vdash \alpha_0 \myrel{\equiv} \alpha_1$.
    \end{itemize}

    Let $\Gamma_1; s_1$ and $\Gamma_2; s_2$ be extended contexts such that
    $\Gamma_0; s_0 \rightarrowtail \Gamma_1; s_1$ and $\Gamma_0; s_0 \rightarrowtail \Gamma_2; s_2$.

    By Lemma~\ref{lem:commutativity-context-weakening}, \\$\Gamma_0; s_0 \rightarrowtail \Gamma_2; s_2$ implies \\$\Gamma_0; \nil \rightarrowtail \Gamma_2; \nil$.
    Because $\Gamma_0$ is of the form $\Gamma_0', x \myequiv \alpha_0, \Gamma_0''$,\\
    we have $\Gamma_2 = \Gamma_2', x \myequiv \alpha_2, \Gamma_2''$.\\
    By multiple use of the rule \textsc{Ct-Ann}, we have $\Gamma_0'; \nil \vdash \alpha_0 \myrel{\equiv} \alpha_2$.\\
    By weakening (Lemma~\ref{lem:context-weakening}), we have $\Gamma_0; s_0 \vdash \alpha_0 \myrel{\equiv} \alpha_2$.

    By induction hypothesis on $\Gamma_0; s_0 \vdash \alpha_0 \myrel{\equiv} \alpha_1$, there exists a term $\alpha_3$ such that:
    \begin{itemize}
      \item $\Gamma_1; s_1 \vdash \alpha_1 \myrel{\equiv} \alpha_3$
      \item $\Gamma_2; s_2 \vdash \alpha_2 \myrel{\equiv} \alpha_3$
    \end{itemize}

    By rule \textsc{Me-Pro}, we have $\Gamma_1; s_1 \vdash x \myrel{\equiv} \alpha_3$ from
    $x \myequiv \alpha_1 \in \Gamma_1$ and $\Gamma_1; s_1 \vdash \alpha_1 \myrel{\equiv} \alpha_3$.

    The filled diagram is therefore the following:
    \begin{center}
      \begin{tikzcd}
        \alpha_1 \arrow[r, "\equiv"] & \alpha_3 \\
        x \arrow[u, "\equiv"] \arrow[r, "\equiv"] & x \arrow[u, "\equiv"]
      \end{tikzcd}
    \end{center}

    \item [Case \textsc{Me-Pro} and \textsc{Me-Pro}:] We need to prove the diagram below:
    \begin{center}
      \begin{tikzcd}
        \alpha_2 \arrow[r, dashed, "\equiv"] & . \\
        x \arrow[u, "\equiv"] \arrow[r, "\equiv"] & \alpha_1 \arrow[u, dashed, "\equiv"]
      \end{tikzcd}
    \end{center}

    By assumption, we have the following reductions:
    \begin{itemize}
      \item $\Gamma_0; s_0 \vdash \alpha_0 \myrel{\equiv} \alpha_1$ with $x \myequiv \alpha_0 \in \Gamma_0$ and $\Gamma_0; s_0 \vdash \alpha_0 \myrel{\equiv} \alpha_1$.
      \item $\Gamma_0; s_0 \vdash \alpha_0 \myrel{\equiv} \alpha_2$ with $x \myequiv \alpha_0 \in \Gamma_0$ and $\Gamma_0; s_0 \vdash \alpha_0 \myrel{\equiv} \alpha_2$.
    \end{itemize}

    Let $\Gamma_1; s_1$ and $\Gamma_2; s_2$ be extended contexts such that
    $\Gamma_0; s_0 \rightarrowtail \Gamma_1; s_1$ and $\Gamma_0; s_0 \rightarrowtail \Gamma_2; s_2$.

    By induction hypothesis on $\Gamma_0; s_0 \vdash \alpha_0 \myrel{\equiv} \alpha_1$, there exists a term $\alpha_3$ such that:
    \begin{itemize}
      \item $\Gamma_1; s_1 \vdash \alpha_1 \myrel{\equiv} \alpha_3$
      \item $\Gamma_2; s_2 \vdash \alpha_2 \myrel{\equiv} \alpha_3$
    \end{itemize}

    The filled diagram is therefore the following:
    \begin{center}
      \begin{tikzcd}
        \alpha_2 \arrow[r, "\equiv"] & \alpha_3 \\
        x \arrow[u, "\equiv"] \arrow[r, "\equiv"] & \alpha_1 \arrow[u, "\equiv"]
      \end{tikzcd}
    \end{center}

    \item [Case \textsc{Me-App} and \textsc{Me-App}:] We need to prove the diagram below:
    \begin{center}
      \begin{tikzcd}
        u_2\,v_2 \arrow[r, dashed, "\equiv"] & . \\
        u_0\,v_0 \arrow[u, "\equiv"] \arrow[r, "\equiv"] & u_1\,v_1 \arrow[u, dashed, "\equiv"]
      \end{tikzcd}
    \end{center}
    
    By assumption, we have the following reductions:
    \begin{itemize}
      \item $\Gamma_0; s_0 \vdash u_0\,v_0 \myrel{\equiv} u_1\,v_1$ by rule \textsc{Me-App}, which gives us:
      \begin{itemize}
        \item $\Gamma_0; v_0 :: s_0 \vdash u_0 \myrel{\equiv} u_1$
        \item $\Gamma_0; \nil \vdash v_0 \myrel{\equiv} v_1$
      \end{itemize}
      \item $\Gamma_0; s_0 \vdash u_0\,v_0 \myrel{\equiv} u_2\,v_2$ by rule \textsc{Me-App}, which gives us:
      \begin{itemize}
        \item $\Gamma_0; v_0 :: s_0 \vdash u_0 \myrel{\equiv} u_2$
        \item $\Gamma_0; \nil \vdash v_0 \myrel{\equiv} v_2$
      \end{itemize}
      \item $\Gamma_0; s_0 \rightarrowtail \Gamma_1; s_1$
      \item $\Gamma_0; s_0 \rightarrowtail \Gamma_2; s_2$
    \end{itemize}

    Let $\Gamma_1; s_1$ and $\Gamma_2; s_2$ be extended contexts such that
    $\Gamma_0; s_0 \rightarrowtail \Gamma_1; s_1$ and $\Gamma_0; s_0 \rightarrowtail \Gamma_2; s_2$.
    
    By rule \textsc{Ct-Stk}, we know:
    \begin{itemize}
      \item $\Gamma_0; v_0 :: s_0 \rightarrowtail \Gamma_1; v_1 :: s_1$ from $\Gamma_0; \nil \vdash v_0 \myrel{\equiv} v_1$
      \item $\Gamma_0; v_0 :: s_0 \rightarrowtail \Gamma_2; v_2 :: s_2$ from $\Gamma_0; \nil \vdash v_0 \myrel{\equiv} v_2$
    \end{itemize}
    
    By induction hypothesis on $\Gamma_0; v_0 :: s_0 \vdash u_0 \myrel{\equiv} u_1$, there exists a term $u_3$ such that:
    \begin{itemize}
      \item $\Gamma_1; v_1 :: s_1 \vdash u_1 \myrel{\equiv} u_3$
      \item $\Gamma_2; v_2 :: s_2 \vdash u_2 \myrel{\equiv} u_3$
    \end{itemize}
    
    By Lemma~\ref{lem:commutativity-context-weakening}, respectively $\Gamma_0; s_0 \rightarrowtail \Gamma_1; s_1$ and $\Gamma_0; s_0 \rightarrowtail \Gamma_2; s_2$
    implies respectively
    $\Gamma_0; \nil \rightarrowtail \Gamma_1; \nil$ and $\Gamma_0; \nil \rightarrowtail \Gamma_2; \nil$.
    By induction hypothesis on $\Gamma_0; \nil \vdash v_0 \myrel{\equiv} v_1$, there exists a term $v_3$ such that:
    \begin{itemize}
      \item $\Gamma_1; \nil \vdash v_1 \myrel{\equiv} v_3$
      \item $\Gamma_2; \nil \vdash v_2 \myrel{\equiv} v_3$
    \end{itemize}
    
    We now have:
    \begin{itemize}
      \item $\Gamma_1; s_1 \vdash u_1\,v_1 \myrel{\equiv} u_3\,v_3$ by rule \textsc{Me-App}
      from $\Gamma_1; v_1 :: s_1 \vdash u_1 \myrel{\equiv} u_3$ and $\Gamma_1; \nil \vdash v_1 \myrel{\equiv} v_3$
      \item $\Gamma_2; s_2 \vdash u_2\,v_2 \myrel{\equiv} u_3\,v_3$ by rule \textsc{Me-App}
      from $\Gamma_2; v_2 :: s_2 \vdash u_2 \myrel{\equiv} u_3$ and $\Gamma_2; \nil \vdash v_2 \myrel{\equiv} v_3$
    \end{itemize}
    
    The filled diagram is therefore the following:
    \begin{center}
      \begin{tikzcd}
        u_2\,v_2 \arrow[r, "\equiv"] & u_3\,v_3 \\
        u_0\,v_0 \arrow[u, "\equiv"] \arrow[r, "\equiv"] & u_1\,v_1 \arrow[u, "\equiv"]
      \end{tikzcd}
    \end{center}

    \item [Case \textsc{Me-App} with \textsc{Me-Bet}:] We need to prove the diagram below:
    \begin{center}
      \begin{tikzcd}
        \cas{u_2}{x}{v_2} \arrow[r, dashed, "\equiv"] & . \\
        (\lambda x \myleq t_0. u_0)v_0 \arrow[u, "\equiv"] \arrow[r, "\equiv"] & (\lambda x \myleq t_1. u_1)v_1 \arrow[u, dashed, "\equiv"]
      \end{tikzcd}
    \end{center}
    
    By assumption, we have the following reductions:
    \begin{itemize}
      \item $\Gamma_0; s_0 \vdash (\lambda x \myleq t_0. u_0)v_0 \myrel{\equiv} (\lambda x \myleq t_1. u_1)v_1$ by rule \textsc{Me-App}, which gives us:
      \begin{itemize}
        \item $\Gamma_0; v_0 :: s_0 \vdash \lambda x \myleq t_0. u_0 \myrel{\equiv} \lambda x \myleq t_1. u_1$.
        By rule \textsc{Me-FOp}, this implies $\Gamma_0; \nil \vdash t_0 \myrel{\equiv} t_1$ and $\Gamma_0, x \myequiv v_0; s_0 \vdash u_0 \myrel{\equiv} u_1$
        \item $\Gamma_0; \nil \vdash v_0 \myrel{\equiv} v_1$
      \end{itemize}
      \item $\Gamma_0; s_0 \vdash (\lambda x \myleq t_0. u_0)v_0 \myrel{\equiv} \cas{u_2}{x}{v_2}$ by rule \textsc{Me-Bet}, which gives us:
      \begin{itemize}
        \item $\Gamma_0; s_0 \vdash u_0 \myrel{\equiv} u_2$
        \item $\Gamma_0; \nil \vdash v_0 \myrel{\equiv} v_2$
      \end{itemize}
      \item $\Gamma_0; s_0 \rightarrowtail \Gamma_1; s_1$
      \item $\Gamma_0; s_0 \rightarrowtail \Gamma_2; s_2$
    \end{itemize}

    Let $\Gamma_1; s_1$ and $\Gamma_2; s_2$ be extended contexts such that
    $\Gamma_0; s_0 \rightarrowtail \Gamma_1; s_1$ and $\Gamma_0; s_0 \rightarrowtail \Gamma_2; s_2$.

    By weakening (Lemma~\ref{lem:context-weakening}), we obtain $\Gamma_0, x \myequiv v_0; s_0 \vdash u_0 \myrel{\equiv} u_2$
    from $\Gamma_0; s_0 \vdash u_0 \myrel{\equiv} u_2$,
    and we also know that the derivation $\Gamma_0, x \myequiv v_0; s_0 \vdash u_0 \myrel{\equiv} u_2$ do not make a promotion of $x$ to $v_0$.

    We have by rule \textsc{Ct-Ann},
    $\Gamma_0, x \myequiv v_0; s_0 \rightarrowtail \Gamma_1, x \myequiv v_1; s_1$ from $\Gamma_0; \nil \vdash v_0 \myrel{\equiv} v_1$.
    Similarly we have $\Gamma_0, x \myequiv v_0; s_0 \rightarrowtail \Gamma_2, x \myequiv v_2; s_2$.
    By induction hypothesis on $\Gamma_0, x \myequiv v_0; s_0 \vdash u_0 \myrel{\equiv} u_2$, there exists a term $u_3$ such that:
    \begin{itemize}
      \item $\Gamma_1, x \myequiv v_1; s_1 \vdash u_1 \myrel{\equiv} u_3$
      \item $\Gamma_2, x \myequiv v_2; s_2 \vdash u_2 \myrel{\equiv} u_3$
    \end{itemize}

    Because the derivation $\Gamma_0, x \myequiv v_0; s_0 \vdash u_0 \myrel{\equiv} u_2$ do not make any promotion of $x$ to $v_0$,
    then the induction process ensures that the derivation $\Gamma_1, x \myequiv v_1; s_1 \vdash u_1 \myrel{\equiv} u_3$
    do not make any promotion of $x$ to $v_1$.
    As a result we have $\Gamma_1; s_1 \vdash u_1 \myrel{\equiv} u_3$.

    By Lemma~\ref{lem:commutativity-context-weakening}, respectively $\Gamma_0; s_0 \rightarrowtail \Gamma_1; s_1$ and $\Gamma_0; s_0 \rightarrowtail \Gamma_2; s_2$
    implies respectively
    $\Gamma_0; \nil \rightarrowtail \Gamma_1; \nil$ and $\Gamma_0; \nil \rightarrowtail \Gamma_2; \nil$.
    By induction hypothesis on $\Gamma_0; \nil \vdash v_0 \myrel{\equiv} v_1$, there exists a term $v_3$ such that:
    \begin{itemize}
      \item $\Gamma_1; \nil \vdash v_1 \myrel{\equiv} v_3$
      \item $\Gamma_2; \nil \vdash v_2 \myrel{\equiv} v_3$
    \end{itemize}
    
    We now have:
    \begin{itemize}
      \item $\Gamma_1; s_1 \vdash (\lambda x \myleq t_1. u_1)v_1 \myrel{\equiv} \cas{u_3}{x}{v_3}$ by rule \textsc{Me-Bet}
      from $\Gamma_1; s_1 \vdash u_1 \myrel{\equiv} u_3$ and $\Gamma_1; \nil \vdash v_1 \myrel{\equiv} v_3$
      \item $\Gamma_2; s_2 \vdash \cas{u_2}{x}{v_2} \myrel{\equiv} \cas{u_3}{x}{v_3}$ by Lemma~\ref{lem:commutativity-reduction-under-substitution}
      from $\Gamma_2, x \myequiv v_2; s_2 \vdash u_2 \myrel{\equiv} u_3$ and $\Gamma_2; \nil \vdash v_2 \myrel{\equiv} v_3$
    \end{itemize}
    
    The filled diagram is therefore the following:
    \begin{center}
      \begin{tikzcd}
        \cas{u_2}{x}{v_2} \arrow[r, "\equiv"] & \cas{u_3}{x}{v_3} \\
        (\lambda x \myleq t_0. u_0)v_0 \arrow[u, "\equiv"] \arrow[r, "\equiv"] & (\lambda x \myleq t_1. u_1)v_1 \arrow[u, "\equiv"]
      \end{tikzcd}
    \end{center}

    \item [Case \textsc{Me-App} with \textsc{Me-TAp}:] We need to prove the diagram below:
    \begin{center}
      \begin{tikzcd}
        \T \arrow[r, dashed, "\equiv"] & \T \\
        \T \, u \arrow[u, "\equiv"] \arrow[r, "\equiv"] & \T \, u' \arrow[u, dashed, "\equiv"]
      \end{tikzcd}
    \end{center}
    
    The top edge holds by rule \textsc{Me-Top}.
    The right edge holds by rule \textsc{Me-TAp}.

    \item [Case \textsc{Me-Fun} with \textsc{Me-Fun}:] We need to prove the diagram below:
    \begin{center}
      \begin{tikzcd}
        \lambda x \myleq t_2. u_2 \arrow[r, dashed, "\equiv"] & . \\
        \lambda x \myleq t_0. u_0 \arrow[u, "\equiv"] \arrow[r, "\equiv"] & \lambda x \myleq t_1. u_1 \arrow[u, dashed, "\equiv"]
      \end{tikzcd}
    \end{center}
    
    By assumption, we have the following reductions:
    \begin{itemize}
      \item $\Gamma_0; \nil \vdash \lambda x \myleq t_0. u_0 \myrel{\equiv} \lambda x \myleq t_1. u_1$ by rule \textsc{Me-Fun}, which gives us:
      \begin{itemize}
        \item $\Gamma_0; \nil \vdash t_0 \myrel{\equiv} t_1$
        \item $\Gamma_0, x \myleq t_0; \nil \vdash u_0 \myrel{\equiv} u_1$
      \end{itemize}
      \item $\Gamma_0; \nil \vdash \lambda x \myleq t_0. u_0 \myrel{\equiv} \lambda x \myleq t_2. u_2$ by rule \textsc{Me-Fun}, which gives us:
      \begin{itemize}
        \item $\Gamma_0; \nil \vdash t_0 \myrel{\equiv} t_2$
        \item $\Gamma_0, x \myleq t_0; \nil \vdash u_0 \myrel{\equiv} u_2$
      \end{itemize}
    \end{itemize}

    Let $\Gamma_1; s_1$ and $\Gamma_2; s_2$ be extended contexts such that
    $\Gamma_0; s_0 \rightarrowtail \Gamma_1; s_1$ and $\Gamma_0; s_0 \rightarrowtail \Gamma_2; s_2$.
    Because by assumption of this case we have $s_0 = \nil$, it implies that $s_1 = \nil$ and $s_2 = \nil$,
    and therefore the relations above are $\Gamma_0; \nil \rightarrowtail \Gamma_1; \nil$ and $\Gamma_0; \nil \rightarrowtail \Gamma_2; \nil$.
    
    By induction hypothesis on $\Gamma_0; \nil \vdash t_0 \myrel{\equiv} t_1$, there exists a term $t_3$ such that:
    \begin{itemize}
      \item $\Gamma_1; \nil \vdash t_1 \myrel{\equiv} t_3$
      \item $\Gamma_2; \nil \vdash t_2 \myrel{\equiv} t_3$
    \end{itemize}
    
    By rule \textsc{Ct-Ann}, we know:
    \begin{itemize}
      \item $\Gamma_0, x \myleq t_0; \nil \rightarrowtail \Gamma_1, x \myleq t_1; \nil$ from $\Gamma_0; \nil \vdash t_0 \myrel{\equiv} t_1$
      \item $\Gamma_0, x \myleq t_0; \nil \rightarrowtail \Gamma_2, x \myleq t_2; \nil$ from $\Gamma_0; \nil \vdash t_0 \myrel{\equiv} t_2$
    \end{itemize}
    
    By induction hypothesis on $\Gamma_0, x \myleq t_0; \nil \vdash u_0 \myrel{\equiv} u_1$, there exists a term $u_3$ such that:
    \begin{itemize}
      \item $\Gamma_1, x \myleq t_1; \nil \vdash u_1 \myrel{\equiv} u_3$
      \item $\Gamma_2, x \myleq t_2; \nil \vdash u_2 \myrel{\equiv} u_3$
    \end{itemize}
    
    We now have:
    \begin{itemize}
      \item $\Gamma_1; \nil \vdash \lambda x \myleq t_1. u_1 \myrel{\equiv} \lambda x \myleq t_3. u_3$ \\by rule \textsc{Me-Fun}
      from $\Gamma_1; \nil \vdash t_1 \myrel{\equiv} t_3$ and $\Gamma_1, x \myleq t_1; \nil \vdash u_1 \myrel{\equiv} u_3$
      \item $\Gamma_2; \nil \vdash \lambda x \myleq t_2. u_2 \myrel{\equiv} \lambda x \myleq t_3. u_3$ \\by rule \textsc{Me-Fun}
      from $\Gamma_2; \nil \vdash t_2 \myrel{\equiv} t_3$ and $\Gamma_2, x \myleq t_2; \nil \vdash u_2 \myrel{\equiv} u_3$
    \end{itemize}
    
    The filled diagram is therefore the following:
    \begin{center}
      \begin{tikzcd}
        \lambda x \myleq t_2. u_2 \arrow[r, "\equiv"] & \lambda x \myleq t_3. u_3 \\
        \lambda x \myleq t_0. u_0 \arrow[u, "\equiv"] \arrow[r, "\equiv"] & \lambda x \myleq t_1. u_1 \arrow[u, "\equiv"]
      \end{tikzcd}
    \end{center}

    \item [Case \textsc{Me-FOp} with \textsc{Me-FOp}:] We need to prove the diagram below:
    \begin{center}
      \begin{tikzcd}
        \lambda x \myleq t_2. u_2 \arrow[r, dashed, "\equiv"] & . \\
        \lambda x \myleq t_0. u_0 \arrow[u, "\equiv"] \arrow[r, "\equiv"] & \lambda x \myleq t_1. u_1 \arrow[u, dashed, "\equiv"]
      \end{tikzcd}
    \end{center}
    
    By assumption, we have the following reductions:
    \begin{itemize}
      \item $\Gamma_0; \alpha_0 :: s_0 \vdash \lambda x \myleq t_0. u_0 \myrel{\equiv} \lambda x \myleq t_1. u_1$ by rule \textsc{Me-FOp}, which gives us:
      \begin{itemize}
      \item $\Gamma_0; \nil \vdash t_0 \myrel{\equiv} t_1$
      \item $\Gamma_0, x \myequiv \alpha_0; s_0 \vdash u_0 \myrel{\equiv} u_1$
      \end{itemize}
      \item $\Gamma_0; \alpha_0 :: s_0 \vdash \lambda x \myleq t_0. u_0 \myrel{\equiv} \lambda x \myleq t_2. u_2$ by rule \textsc{Me-FOp}, which gives us:
      \begin{itemize}
      \item $\Gamma_0; \nil \vdash t_0 \myrel{\equiv} t_2$
      \item $\Gamma_0, x \myequiv \alpha_0; s_0 \vdash u_0 \myrel{\equiv} u_2$
      \end{itemize}
      \item The stack of our case is $s = \alpha_0 :: s_0$
    \end{itemize}

    Let $\Gamma_1; \alpha_1 :: s_1$ and $\Gamma_2; \alpha_2 :: s_2$ be extended contexts such that
    $\Gamma_0; \alpha_0 :: s_0 \rightarrowtail \Gamma_1; \alpha_1 :: s_1$
    and $\Gamma_0; \alpha_0 :: s_0 \rightarrowtail \Gamma_2; \alpha_2 :: s_2$.
    By rule \textsc{Ct-Stk}, those relations respectively implies
    $\Gamma_0; \nil \vdash \alpha_0 \myrel{\equiv} \alpha_1$ and $\Gamma_0; \nil \vdash \alpha_0 \myrel{\equiv} \alpha_2$.
    
    By Lemma~\ref{lem:commutativity-context-weakening}, respectively $\Gamma_0; \alpha_0 :: s_0 \rightarrowtail \Gamma_1; \alpha_1 :: s_1$
    and $\Gamma_0; \alpha_0 :: s_0 \rightarrowtail \Gamma_2; \alpha_2 :: s_2$
    implies respectively
    $\Gamma_0; \nil \rightarrowtail \Gamma_1; \nil$ and $\Gamma_0; \nil \rightarrowtail \Gamma_2; \nil$.
    By induction hypothesis on $\Gamma_0; \nil \vdash t_0 \myrel{\equiv} t_1$, there exists a term $t_3$ such that:
    \begin{itemize}
      \item $\Gamma_1; \nil \vdash t_1 \myrel{\equiv} t_3$
      \item $\Gamma_2; \nil \vdash t_2 \myrel{\equiv} t_3$
    \end{itemize}
    
    By rule \textsc{Ct-Ann}, we know:
    \begin{itemize}
      \item $\Gamma_0, x \myequiv \alpha_0; s_0 \rightarrowtail \Gamma_1, x \myequiv \alpha_1; s_1$ from $\Gamma_0; \nil \vdash \alpha_0 \myrel{\equiv} \alpha_1$
      \item $\Gamma_0, x \myequiv \alpha_0; s_0 \rightarrowtail \Gamma_2, x \myequiv \alpha_2; s_2$ from $\Gamma_0; \nil \vdash \alpha_0 \myrel{\equiv} \alpha_2$
    \end{itemize}
    
    By induction hypothesis on $Gamma_0, x \myequiv \alpha_0; s_0 \vdash u_0 \myrel{\equiv} u_1$, there exists a term $u_3$ such that:
    \begin{itemize}
      \item $\Gamma_1, x \myequiv \alpha_1; s_1 \vdash u_1 \myrel{\equiv} u_3$
      \item $\Gamma_2, x \myequiv \alpha_2; s_2 \vdash u_2 \myrel{\equiv} u_3$
    \end{itemize}
    
    We now have:
    \begin{itemize}
      \item $\Gamma_1; \alpha_1 :: s_1 \vdash \lambda x \myleq t_1. u_1 \myrel{\equiv} \lambda x \myleq t_3. u_3$ by rule \textsc{Me-FOp}
      from $\Gamma_1; \nil \vdash t_1 \myrel{\equiv} t_3$ and $\Gamma_1, x \myequiv \alpha_1; s_1 \vdash u_1 \myrel{\equiv} u_3$
      \item $\Gamma_2; \alpha_2 :: s_2 \vdash \lambda x \myleq t_2. u_2 \myrel{\equiv} \lambda x \myleq t_3. u_3$ by rule \textsc{Me-FOp}
      from $\Gamma_2; \nil \vdash t_2 \myrel{\equiv} t_3$ and $\Gamma_2, x \myequiv \alpha_2; s_2 \vdash u_2 \myrel{\equiv} u_3$
    \end{itemize}
    
    The filled diagram is therefore the following:
    \begin{center}
      \begin{tikzcd}
        \lambda x \myleq t_2. u_2 \arrow[r, "\equiv"] & \lambda x \myleq t_3. u_3 \\
        \lambda x \myleq t_0. u_0 \arrow[u, "\equiv"] \arrow[r, "\equiv"] & \lambda x \myleq t_1. u_1 \arrow[u, "\equiv"]
      \end{tikzcd}
    \end{center}

    \item [Case \textsc{Me-Bet} with \textsc{Me-Bet}:] We need to prove the diagram below:
    \begin{center}
      \begin{tikzcd}
        \cas{u_2}{x}{v_2}\arrow[r, dashed, "\equiv"] & . \\
        (\lambda x \myleq t. u_0) v_0 \arrow[u, "\equiv"] \arrow[r, "\equiv"] & \cas{u_1}{x}{v_1}\arrow[u, dashed, "\equiv"]
      \end{tikzcd}
    \end{center}
    
    By assumption, we have the following reductions:
    \begin{itemize}
      \item $\Gamma_0; s_0 \vdash (\lambda x \myleq t. u_0) v_0 \myrel{\equiv} \cas{u_1}{x}{v_1}$ by rule \textsc{Me-Bet}, which gives us:
      \begin{itemize}
        \item $\Gamma_0; s_0 \vdash u_0 \myrel{\equiv} u_1$
        \item $\Gamma_0; \nil \vdash v_0 \myrel{\equiv} v_1$
      \end{itemize}
      \item $\Gamma_0; s_0 \vdash (\lambda x \myleq t. u_0) v_0 \myrel{\equiv} \cas{u_2}{x}{v_2}$ by rule \textsc{Me-Bet}, which gives us:
      \begin{itemize}
        \item $\Gamma_0; s_0 \vdash u_0 \myrel{\equiv} u_2$
        \item $\Gamma_0; \nil \vdash v_0 \myrel{\equiv} v_2$
      \end{itemize}
      \item $\Gamma_0; s_0 \rightarrowtail \Gamma_1; s_1$
      \item $\Gamma_0; s_0 \rightarrowtail \Gamma_2; s_2$
    \end{itemize}

    Let $\Gamma_1; s_1$ and $\Gamma_2; s_2$ be extended contexts such that
    $\Gamma_0; s_0 \rightarrowtail \Gamma_1; s_1$ and $\Gamma_0; s_0 \rightarrowtail \Gamma_2; s_2$.
    
    By induction hypothesis on $\Gamma_0; s_0 \vdash u_0 \myrel{\equiv} u_1$, there exists a term $u_3$ such that:
    \begin{itemize}
      \item $\Gamma_1; s_1 \vdash u_1 \myrel{\equiv} u_3$
      \item $\Gamma_2; s_2 \vdash u_2 \myrel{\equiv} u_3$
    \end{itemize}
    
    By Lemma~\ref{lem:commutativity-context-weakening}, respectively $\Gamma_0; s_0 \rightarrowtail \Gamma_1; s_1$ and $\Gamma_0; s_0 \rightarrowtail \Gamma_2; s_2$
    implies respectively
    $\Gamma_0; \nil \rightarrowtail \Gamma_1; \nil$ and $\Gamma_0; \nil \rightarrowtail \Gamma_2; \nil$.
    By induction hypothesis on $\Gamma_0; \nil \vdash v_0 \myrel{\equiv} v_1$, there exists a term $v_3$ such that:
    \begin{itemize}
      \item $\Gamma_1; \nil \vdash v_1 \myrel{\equiv} v_3$
      \item $\Gamma_2; \nil \vdash v_2 \myrel{\equiv} v_3$
    \end{itemize}

    By weakening (Lemma~\ref{lem:context-weakening}) we have:
    \begin{itemize}
      \item $\Gamma_1, x \myequiv v_1; s_1 \vdash u_1 \myrel{\equiv} u_3$
      \item $\Gamma_2, x \myequiv v_2; s_2 \vdash u_2 \myrel{\equiv} u_3$
    \end{itemize}
    
    By Lemma~\ref{lem:commutativity-reduction-under-substitution}, we have:
    \begin{itemize}
      \item $\Gamma_1; s_1 \vdash \cas{u_1}{x}{v_1} \myrel{\equiv} \cas{u_3}{x}{v_3}$
      from $\Gamma_1, x \myequiv v_1; s_1 \vdash u_1 \myrel{\equiv} u_3$ and $\Gamma_1; \nil \vdash v_1 \myrel{\equiv} v_3$
      \item $\Gamma_2; s_2 \vdash \cas{u_2}{x}{v_2} \myrel{\equiv} \cas{u_3}{x}{v_3}$
      from $\Gamma_2, x \myequiv v_2; s_2 \vdash u_2 \myrel{\equiv} u_3$ and $\Gamma_2; \nil \vdash v_2 \myrel{\equiv} v_3$
    \end{itemize}
    
    The filled diagram is therefore the following:
    \begin{center}
      \begin{tikzcd}
        \cas{u_2}{x}{v_2} \arrow[r, "\equiv"] & \cas{u_3}{x}{v_3} \\
        (\lambda x \myleq t. u_0) v_0 \arrow[u, "\equiv"] \arrow[r, "\equiv"] & \cas{u_1}{x}{v_1} \arrow[u, "\equiv"]
      \end{tikzcd}
    \end{center}

    \item [Case \textsc{Me-TAp} with \textsc{Me-TAp}:] The following diagram holds by rule \textsc{Me-Top}:
    \begin{center}
      \begin{tikzcd}
        \T \arrow[r, "\equiv"] & \T \\
        \T \, u \arrow[u, "\equiv"] \arrow[r, "\equiv"] & \T \arrow[u, "\equiv"]
      \end{tikzcd}
    \end{center}
    
  \end{description}
\end{proof}

\textsc{Theorem~\ref{lem:algorithmic-transitivity-elimination} (Transivity is admissible).}
Let $\Gamma; s$ be an extended context.
Let $u$ and $v$ be terms.
If $\Gamma; s \vdash u \leq^* v$ then $\Gamma; s \vdash u \leq v$.
\begin{proof}
  By induction on the number of intermediary steps of the derivation $\Gamma;s\vdash u\leq^* v$.
  The base, which is either 0 or 1 derivation yields the result $\Gamma; s \vdash u \leq v$.
  Else, it suffices to prove that for every
  terms $t$, $u$, and $v$, if $\Gamma;s\vdash t\leq u$ and $\Gamma;s\vdash u\leq v$, then
  $\Gamma;s\vdash t\leq v$. From $\Gamma;s\vdash t\leq u$ and $\Gamma;s\vdash u\leq v$ we
  obtain the diagram below
  \begin{center}
    \begin{tikzcd}
      v \arrow[r, two heads, "\equiv"] & .
      \arrow[r, two heads, dashed, "\equiv"] & . \\
      & u \arrow[u, two heads, "\leq"]
      \arrow[r, two heads, "\equiv"]  & .
      \arrow[u, two heads, dashed, "\leq"] \\
      & & t \arrow[u, two heads, "\leq"]
    \end{tikzcd}
  \end{center}
  where the solid edges correspond to the definition of the subtyping relation. The result
  $\Gamma;s\vdash t\leq v$ follows by completing the diagram with the dashed arrows, which
  can be done by commutativity of $\myrel{\equiv}$ and $\myrel{\leq}$ established in
  Theorem~\ref{thm:commutativity-theorem} below.
\end{proof}

\textsc{Theorem~\ref{thm:progress} (Progress).}
Let $t$ be a term. For every logical context $\Gamma$, if $\Gamma \vdash t~\wef$ then either $t$ is
in normal form, or there exists a term $t'$ such that $t \mapsto t'$.
\begin{proof}
  By structural induction on the term $t$. If $t$ is $\T$ or a variable $x$, then the result holds
  since $t$ is already in normal form. Therefore it suffices to consider the following cases:
  \begin{description}
  \item[Case $t =\lambda x \myleq u. v$ with $u\not\in\NF$:] By assumption we have
    $\Gamma \vdash \lambda x \myleq u. v~\wef$.  By rule \textsc{Wf-Fun},
    we obtain $\Gamma \vdash u ~\wef$.  By the induction hypothesis
    on $u$, there exists $u'$ such that $u \mapsto u'$, and as a result
    $t \mapsto \lambda x \myleq u'.v$ by rule \textsc{Os-Con}.
  \item[Case $t =\lambda x \myleq u_n.v$ with $v\not\in\NF$:] By assumption we have
    $\Gamma \vdash \lambda x \myleq u_n. v~\wef$. By rule \textsc{Wf-Fun},
    we have that $\Gamma,x\myleq u_n \vdash v ~\wef$.
    By using the induction hypothesis on $v$, there exists $v'$
    such that $v \mapsto v'$, and as a result $t \mapsto \lambda x \myleq u_n. v'$ by rule
    \textsc{Os-Con}.
  \item[Case $t =u\,v$ with $u\not\in\NF$:] By assumption we have
    $\Gamma \vdash u\,v~\wef$. By rule \textsc{Wf-App} and Proposition~\ref{prop:transitivity-preserves-well-formedness} we have
    $\Gamma \vdash u~\wef$. Since $u$ is not a normal form and by the induction
    hypothesis, there exists a term $u'$ such that $u \mapsto u'$, and therefore
    $t \mapsto u' \, v$ by rule \textsc{Os-Con}.
  \item[Case $t =u_n\,v$ with $v\not\in\NF$.] By assumption we have
    $\Gamma \vdash u_n\,v~\wef$.
    By rule \textsc{Wf-App} and Proposition~\ref{prop:transitivity-preserves-well-formedness} we have
    $\Gamma \vdash v~\wef$. Since $v$ is not a normal form and by the induction
    hypothesis, there exists a term $v'$ such that $v \mapsto v'$, and therefore
    $t \mapsto u_n \, v'$ by rule \textsc{Os-Con}.
  \item[Case $t=u_n\,v_n$ with $u_n\in\NF$ and $v_n\in\NF$.] Term $u_n$ is either $\T$, a
    function, or a neutral applied to multiple normal forms. If $u_n$ is a function, then $t$ is a redex can be reduced by
    rule \textsc{Os-Bet}.
    The case where $u_n$ is $\T$ is impossible, indeed
    from the fact that $t$ is well-formed, by rule \textsc{W-App} we have $u_n$ a subtype of a function $\lambda x \myleq t'. \T$.
    If $u_n$ is $\T$, it cannot have a function supertype by Lemma~\ref{lem:no-supertype-top}.
    Else, if $u_n$ is a neutral applied to multiple normal forms,
    then $t = u_n\,v_n$, because $v_n$ is in normal form, is also a neutral applied to multiple normal forms,
    and therefore $t$ is in normal form. \qedhere
  \end{description}
\end{proof}

\textsc{Theorem~\ref{thm:preservation} (Preservation).}
Let $\Gamma$ be a logical context.
Let $t$, $t'$, and $u$ be terms. If
$\Gamma \vdash t \leq^*_\wef u$ and $t \mapsto t'$, then
$\Gamma \vdash t' \leq^*_\wef u$.
\begin{proof}
  To prove $\Gamma \vdash t' \leq^*_\wef u$ we use rule \textsc{Ws-Trs}
  and prove $\Gamma \vdash t' \leq^*_\wef t$ and $\Gamma \vdash t \leq^*_\wef u$.
  We already have $\Gamma \vdash t \leq^*_\wef u$, so we only prove $\Gamma \vdash t' \leq^*_\wef t$.

  From the assumption $\Gamma \vdash t \leq^*_\wef u$ and by
  Proposition~\ref{prop:transitivity-preserves-well-formedness} we know that
  $\Gamma \vdash t~\wef$. We obtain $\Gamma \vdash t'~\wef$ from the $\Gamma \vdash t~\wef$ established
  earlier and by Lemma~\ref{lem:mapsto-preserves-wf} and $t \mapsto t'$.

  By rule \textsc{Ws-Rfl} and \textsc{Ws-Sub}, we have
  $\Gamma \vdash t' \leq^*_\wef t'$,
  hence $\Gamma \vdash t' \leq^*_\wef t$ by rule \textsc{Ws-Rgh}.
\end{proof}

\textsc{Lemma~\ref{lem:mapsto-preserves-wf} (Evaluation preserves well-formedness).}
  Let $\Gamma$ be a logical context.
  Let $t$ and $t'$ be terms such that $\Gamma \vdash t ~\wef$
  and $t \mapsto t'$.
  We have $\Gamma \vdash t' ~\wef$.
\begin{proof}
  We prove that $\Gamma \vdash t'~\wef$ by induction on $t \mapsto t'$.
  We distinguish based on the last rule used in the derivation $t \mapsto t'$.

  \begin{description}
    \item[Case $t=(\lambda x\myleq u.v)w$ and $t' = \cas{v}{x}{w}$ by rule \textsc{Os-Bet}:]
    By the assumption \\$\Gamma \vdash (\lambda x\leq u.v)w~\wef$
    and by rule \textsc{Wf-App}, we have
    $\Gamma \vdash \lambda x\leq u.v \leq^*_\wef \lambda x\leq z.\T$ and
    $\Gamma \vdash w\leq^*_\wef z$.
    By inversion (Lemma~\ref{lem:inversion-lemma}),
    we obtain $\Gamma \vdash u \equiv_\wef z$.
    By Lemma~\ref{lem:equiv-commutative}, we obtain $\Gamma \vdash z \equiv_\wef u$,
    hence by Lemma~\ref{lem:equiv-to-subtyping} $\Gamma \vdash z \leq_\wef u$.
    Because both $z$ and $u$ are well-formed in $\Gamma$, we have $\Gamma \vdash z \leq^*_\wef u$ by rule \textsc{Ws-Sub}.
    By transitivity, we now have $\Gamma \vdash w\leq^*_\wef u$.
    
    From $\Gamma \vdash \lambda x\leq u.v \leq^*_\wef \lambda x\leq z.\T$
    and Proposition~\ref{prop:transitivity-preserves-well-formedness},
    we obtain $\Gamma \vdash \lambda x\leq u.v ~\wef$.
    Hence by rule \textsc{Wf-Fun}, we obtain $\Gamma,x\myleq u\vdash v~\wef$.

    Now, by Lemma~\ref{lem:substitution-preserves-wf} on
    $\Gamma,x\myleq u\vdash v~\wef$ and $\Gamma \vdash w\leq^*_\wef u$,
    we obtain $\Gamma\vdash\cas{v}{x}{w}~\wef$.

  \item[Case $t = \lambda x \myleq u.v$ and $t' = \lambda x\myleq u'.v$ by rule \textsc{Os-Con}:]
    By assumption, $\Gamma\vdash\lambda x\myleq u.v~\wef$ holds, and we have $u \mapsto u'$.

    By assumption, we have $\Gamma\vdash\lambda x\myleq u.v~\wef$. By rule
    \textsc{Wf-Fun} we obtain $\Gamma,x\myleq u\vdash v~\wef$ and
    $\Gamma \vdash u ~\wef$.
    By using the induction hypothesis on $u \mapsto u'$ we obtain$\Gamma\vdash u'~\wef$.
    From Proposition~\ref{prop:mapsto-inclusion}, we have $\Gamma; \nil \vdash u \myrel{\equiv} u'$.
    By Lemma~\ref{lem:narrowing-context-wf}, we obtain $\Gamma,x\myleq u'\vdash v~\wef$. Finally, we
    obtain $\Gamma\vdash\lambda x\myleq u'.v~\wef$ by application of rule
    \textsc{Wf-Fun}.

  \item[Case $t = \lambda x \myleq u.v$ and $t' = \lambda x\myleq u.v'$ by rule \textsc{Os-Con}:]
    By assumption, $\Gamma\vdash\lambda x\myleq u.v~\wef$ holds, and we have $v \mapsto v'$.

    By rule \textsc{Wf-Fun} we have that $\Gamma,x\myleq u\vdash v~\wef$, and by the induction
    hypothesis on $v \mapsto v'$ we obtain $\Gamma,x\myleq u\vdash v'~\wef$ since
    $\Gamma,x\myleq u\vdash v~\wef$.
    Finally, we obtain $\Gamma\vdash\lambda x\myleq u.v'~\wef$ by application of rule \textsc{Wf-Fun}.
    
  \item[Case $t = u \, v$ and $t' = u' \, v$ by rule \textsc{Os-Con}:]
    By assumption,
    $\Gamma\vdash u\,v~\wef$ holds. By rule \textsc{Wf-App} and Proposition
   ~\ref{prop:transitivity-preserves-well-formedness}, we obtain that $\Gamma\vdash u~\wef$.
    By applying the induction hypothesis on $u \mapsto u'$,
    we obtain $\Gamma\vdash u'~\wef$.
    By rule \textsc{Wf-App}
    we obtain $\Gamma\vdash u\leq^*_{\wef} \lambda x\myleq w.\T$ and $\Gamma\vdash v\leq^*_{\wef} w$.
    Hence $\Gamma\vdash u'\leq^*_{\wef} \lambda x\myleq w.\T$,
    By Lemma~\ref{lem:composability-reverse-asleft}
    
    Finally, we obtain $\Gamma \vdash u'\,v~\wef$ by application of rules \textsc{Wf-App} from
    $\Gamma\vdash u'\leq^*_{\wef}\lambda x\myleq w.\T$ and $\Gamma\vdash v\leq^*_{\wef} w$.

  \item[Case $t = u \, v$ and $t' = u \, v'$ by rule \textsc{Os-Con}:]
    By assumption, $\Gamma\vdash u\,v~\wef$ holds. By rule \textsc{Wf-App},
    Propositions~\ref{prop:transitivity-preserves-well-formedness} and
     we obtain $\Gamma\vdash u~\wef$, $\Gamma \vdash u\leq^*_{\wef} \lambda x\myleq w.\T$, $\Gamma\vdash v~\wef$,
    $\Gamma\vdash v\leq^*_{\wef} w$ and $\Gamma\vdash w~\wef$. By the induction hypothesis on $v \mapsto v'$, we obtain
    $\Gamma\vdash v'~\wef$.
    By Lemma~\ref{lem:composability-reverse-asleft}, we also obtain $\Gamma\vdash v'\leq^*_{\wef} w$.

    The result finally
    $\Gamma\vdash u\,v'~\wef$ holds by application of rules \textsc{Wf-App}
    $\Gamma \vdash u\leq^*_{\wef} \lambda x\myleq w.\T$, and
    $\Gamma \vdash v'\leq^*_{\wef} w$.\qedhere
  \end{description}
\end{proof}

\textsc{Lemma~\ref{lem:substitution-preserves-wf} (Substitution preserves well-formedness).}
  Let $\Gamma$ and $\Gamma'$ be logical contexts, and let $t$, $u$ and $\alpha$ be terms.
  Let $x$ be a variable.
  If
  $\Gamma,x\myleq t,\Gamma' \vdash u~\wef$, and
  $\Gamma \vdash \alpha \leq^*_{\wef} t$,
  then
  $\Gamma,\cas{\Gamma'}{x}{\alpha} \vdash \cas{u}{x}{\alpha}~\wef$.
\begin{proof}
  First, we establish that $\Gamma,\cas{\Gamma'}{x}{\alpha}$ is prevalid by
  Lemma~\ref{lem:substitution-preserves-prevalidity}, which we'll use in most of the cases
  below. Now we proceed by induction on the derivation tree of
  $\Gamma,x\myleq t,\Gamma'\vdash u~\wef$. We distinguish cases on the last rule used:
  \begin{description}
  \item[Rule \textsc{Wf-PrS} or \textsc{Wf-PrE}, with $u = y$ and $y \myvartriangleleft t' \in \Gamma$:]
    By assumption, we have $y \myvartriangleleft t' \in \Gamma$ hence
    $y \myvartriangleleft t' \in \Gamma \cup \cas{\Gamma'}{x}{\alpha}$.
    As proven at the start of this lemma, we have $\Gamma,\cas{\Gamma'}{x}{\alpha}$ prevalid,
    hence $\Gamma,\cas{\Gamma'}{x}{\alpha} \vdash y~\wef$ by rule \textsc{Wf-PrS} or \textsc{Wf-PrE}.
  \item[Rule \textsc{Wf-PrS} or \textsc{Wf-PrE}, with $u = y$ and $y \myvartriangleleft t' \in \Gamma'$:]
    From the fact that $y \myvartriangleleft t'$ is a subtype annotation of $\Gamma'$, we have that
    $y \myvartriangleleft \cas{t'}{x}{\alpha}$ is a subtype annotation of $\cas{\Gamma'}{x}{\alpha}$ by
    definition of the substitution of extended context which doesn't change the domain of the logical context.
    We also have $y \myvartriangleleft \cas{t'}{x}{\alpha} \in \Gamma \cup \cas{\Gamma'}{x}{\alpha}$
    from $y \myvartriangleleft \cas{t'}{x}{\alpha} \in \cas{\Gamma'}{x}{\alpha}$.
    As proven at the start of this lemma, we have $\Gamma,\cas{\Gamma'}{x}{\alpha}$ prevalid,
    we can now conclude that
    $\Gamma,\cas{\Gamma'}{x}{\alpha} \vdash y~\wef$ by rule \textsc{Wf-PrS} or \textsc{Wf-PrE}.

  \item[Rule \textsc{Wf-PrS} or \textsc{Wf-PrE}, with $u = x$:]
    By assumption of our lemma, we have $\Gamma \vdash \alpha ~\wef$,
    hence by weakening (Lemma~\ref{lem:context-weakening}), $\Gamma,\cas{\Gamma'}{x}{\alpha} \vdash \alpha ~\wef$,
    which is the desired result of this case.

  \item[Rule \textsc{Wf-Top}, with $u = \T$:] We have
    $\Gamma,\cas{\Gamma'}{x}{\alpha} \vdash \T ~\wef$ since
    $\Gamma,\cas{\Gamma'}{x}{\alpha}$ is prevalid and by rule \textsc{Wf-Top}.

  \item[Rule \textsc{Wf-Fun}, with $u = \lambda y\myleq v.u'$:]
    Without loss of generality we can assume that $y\not=x$ because we can always alpha convert the variable $y$ to a different name.
    By assumption and by the rule used, we
    have $\Gamma,x\myleq t,\Gamma' \vdash v~\wef$ and
    $\Gamma,x\myleq t,\Gamma',y\myleq v \vdash u'~\wef$. By the induction hypothesis, we obtain
    $\Gamma,\cas{\Gamma'}{x}{\alpha} \vdash\cas{v}{x}{\alpha}~\wef$ and
    $\Gamma,\cas{\Gamma'}{x}{\alpha},y\myleq\cas{v}{x}{\alpha} \vdash\cas{u'}{x}{\alpha}~\wef$. 
    Now, we obtain 
    $\Gamma,\cas{\Gamma'}{x}{\alpha} \vdash\lambda y\myleq\cas{v}{x}{\alpha}.\cas{u'}{x}{\alpha}~\wef$ by rule \textsc{Wf-Fun},
    which equals $\Gamma,\cas{\Gamma'}{x}{\alpha} \vdash\cas{(\lambda y\myleq v.u')}{x}{\alpha}~\wef$ 
    since $y\not=x$ by definition of the substitution.

  \item[Rule \textsc{Wf-App} with $u = v\,u'$:] By \textsc{Wf-App} on our assumption, we have
    $\Gamma, x \myleq t, \Gamma' \vdash u' \leq^*_\wef z$ and
    $\Gamma, x \myleq t, \Gamma' \vdash v \leq^*_\wef \lambda y \myleq z. \T$.

    Let's prove by induction on
    $\Gamma, x \myleq t, \Gamma' \vdash u' \leq^*_\wef z$
    that we have
    $\Gamma, \cas{\Gamma'}{x}{\alpha} \vdash \cas{u'}{x}{\alpha} \leq^*_\wef \cas{z}{x}{\alpha}$.
    We distinguish cases based on the last rule used in the derivation $\Gamma, x \myleq t, \Gamma' \vdash u' \leq^*_\wef z$.
    \begin{description}
      \item[Rule \textsc{Ws-Sub}:] 
      In this case our derivation is $\Gamma, x \myleq t, \Gamma' \vdash u' \leq^*_\wef z$
      with premises $\Gamma, x \myleq t, \Gamma' \vdash u' \leq_\wef z$, $\Gamma, x \myleq t, \Gamma' \vdash u' ~\wef$
      and $\Gamma, x \myleq t, \Gamma' \vdash z ~\wef$.
      In that case, by the result we will prove below, we have
      $\Gamma, \cas{\Gamma'}{x}{\alpha} \vdash \cas{u'}{x}{\alpha} \leq^*_\wef \cas{z}{x}{\alpha}$.
      \item[Rule \textsc{Ws-Trs}:]
      In this case our derivation is $\Gamma, x \myleq t, \Gamma' \vdash u' \leq^*_\wef z$
      with premises $\Gamma, x \myleq t, \Gamma' \vdash u' \leq^*_\wef v$ and
      $\Gamma, x \myleq t, \Gamma' \vdash v \leq^*_\wef v$.
      By induction on these two derivations, we have
      $\Gamma, \cas{\Gamma'}{x}{\alpha} \vdash \cas{u'}{x}{\alpha} \leq^*_\wef \cas{v}{x}{\alpha}$ and
      $\Gamma, \cas{\Gamma'}{x}{\alpha} \vdash \cas{v}{x}{\alpha} \leq^*_\wef \cas{z}{x}{\alpha}$.
      By rule \textsc{Ws-Trs}, we now have $\Gamma, \cas{\Gamma'}{x}{\alpha} \vdash \cas{u'}{x}{\alpha} \leq^*_\wef \cas{z}{x}{\alpha}$.
    \end{description}

    We now prove that for any $\Gamma, x \myleq t, \Gamma' \vdash a \leq_\wef b$,
    subtree of $\Gamma, x \myleq t, \Gamma' \vdash u' \leq^*_\wef z$,
    with both $a$ and $b$ well-formed in $\Gamma, x \myleq t, \Gamma'$,
    then we have $\Gamma, \cas{\Gamma'}{x}{\alpha} \vdash \cas{a}{x}{\alpha} \leq^*_\wef \cas{b}{x}{\alpha}$.
    We first prove an auxiliary result by induction, and then conclude our desired derivation
    $\Gamma, \cas{\Gamma'}{x}{\alpha} \vdash \cas{a}{x}{\alpha} \leq^*_\wef \cas{b}{x}{\alpha}$.

    We now prove that for any $\Gamma, x \myleq t, \Gamma' \vdash a \leq_\wef b$ being a
    subtree of $\Gamma, x \myleq t, \Gamma' \vdash u' \leq^*_\wef z$,
    then there exists three terms $a'$, $b'$ and $c$ such that
    \begin{itemize}
      \item
        $\Gamma, \cas{\Gamma'}{x}{\alpha}; \nil \vdash \cas{b}{x}{\alpha} \mrelTwo{\equiv} \cas{c}{x}{\alpha}$,
      \item
        $\Gamma, \cas{\Gamma'}{x}{\alpha}; \nil \vdash \cas{b'}{x}{\alpha} \mrelTwo{\equiv} \cas{c}{x}{\alpha}$,
      \item
        $\Gamma, \cas{\Gamma'}{x}{\alpha}; \nil \vdash \cas{a}{x}{\alpha} \mrelTwo{\equiv} \cas{a'}{x}{\alpha}$,
      \item
        $\Gamma, \cas{\Gamma'}{x}{\alpha} \vdash \cas{a'}{x}{\alpha} \leq^*_\wef \cas{b'}{x}{\alpha}$ only if $a' \neq b'$.
    \end{itemize}
    It is crucial to work on a subtrees of $\Gamma, x \myleq t, \Gamma' \vdash u' \leq^*_\wef z$
    because we use the main induction hypothesis on well-formed derivation trees.
    Here is a diagram which illustrates the above conditions:
    \begin{center}
      \begin{tikzcd}
        \cas{b}{x}{\alpha} \arrow[r, two heads, "\equiv"]
        & \cas{c}{x}{\alpha} & \\
        & \cas{b'}{x}{\alpha} \arrow[u, two heads, "\equiv"] & \\
        & & \cas{a'}{x}{\alpha} \arrow[ul, dash, dashed, "\leq^*_\wef"] \\
        & & \cas{a}{x}{\alpha} \arrow[u, two heads, "\equiv"]
      \end{tikzcd}
    \end{center}

    We prove this result by induction on $\Gamma, x \myleq t, \Gamma' \vdash a \leq_\wef b$.
    We distinguish based on the last rule used in the derivation of $\Gamma, x \myleq t, \Gamma' \vdash a \leq_\wef b$:
    \begin{description}
      \item[Case \textsc{Ws-Rfl}:]
        In our case, we have $a = b$, and with $a' = b' = c = a$, we have the desired result
        because $\mrelTwo{\equiv}$ is reflexive.

      \item[Case \textsc{Ws-Lf1}:]
        In this case, we have $\Gamma, x \myleq t, \Gamma' \vdash a \leq_\wef b$ with premises
        $\Gamma, x \myleq t, \Gamma' \vdash a_0 \leq_\wef b$, and
        $\Gamma, x \myleq t, \Gamma'; \nil \vdash a \myrel{\equiv} a_0$.
        By using the subinduction, we know there exists three terms $a'$, $b'$ and $c$ such that
        $\Gamma, \cas{\Gamma'}{x}{\alpha}; \nil \vdash \cas{b}{x}{\alpha} \mrelTwo{\equiv} \cas{c}{x}{\alpha}$, and
        $\Gamma, \cas{\Gamma'}{x}{\alpha}; \nil \vdash \cas{b'}{x}{\alpha} \mrelTwo{\equiv} \cas{c}{x}{\alpha}$, and
        $\Gamma, \cas{\Gamma'}{x}{\alpha}; \nil \vdash \cas{a_0}{x}{\alpha} \mrelTwo{\equiv} \cas{a'}{x}{\alpha}$, and
        $\Gamma, \cas{\Gamma'}{x}{\alpha} \vdash \cas{a'}{x}{\alpha} \leq^*_\wef \cas{b'}{x}{\alpha}$ only if $a' \neq b'$.
        By Lemma~\ref{lem:reduction-under-substitution} on $\Gamma, x \myleq t, \Gamma'; \nil \vdash a \myrel{\equiv} a_0$,
        we obtain $\Gamma, \cas{\Gamma'}{x}{\alpha}; \nil \vdash \cas{a}{x}{\alpha} \myrel{\equiv} \cas{a_0}{x}{\alpha}$.
        By transitivity of $\mrelTwo{\equiv}$, we have $\Gamma, \cas{\Gamma'}{x}{\alpha}; \nil \vdash \cas{a}{x}{\alpha} \myrel{\equiv} \cas{a'}{x}{\alpha}$,
        which concludes this case.

      \item[Case \textsc{Ws-Lf2}:]
        In this case, we have \\$\Gamma, x \myleq t, \Gamma' \vdash a \leq_\wef b$ with premises
        $\Gamma, x \myleq t, \Gamma' \vdash a_0 \leq_\wef b$, and
        $\Gamma, x \myleq t, \Gamma'; \nil \vdash a \myrel{\leq} a_0$, and
        $\Gamma, x \myleq t, \Gamma' \vdash a ~\wef$, and
        $\Gamma, x \myleq t, \Gamma' \vdash a_0 ~\wef$.
        By applying the induction hypothesis on $\Gamma, x \myleq t, \Gamma' \vdash a ~\wef$, we obtain $\Gamma, \cas{\Gamma'}{x}{\alpha} \vdash \cas{a}{x}{\alpha} ~\wef$.
        By applying the induction hypothesis on $\Gamma, x \myleq t, \Gamma' \vdash a_0 ~\wef$, we obtain $\Gamma, \cas{\Gamma'}{x}{\alpha} \vdash \cas{a_0}{x}{\alpha} ~\wef$.
        By using the subinduction, we know there exists three terms $a'$, $b'$ and $c$ such that
        $\Gamma, \cas{\Gamma'}{x}{\alpha}; \nil \vdash \cas{b}{x}{\alpha} \mrelTwo{\equiv} \cas{c}{x}{\alpha}$, and
        $\Gamma, \cas{\Gamma'}{x}{\alpha}; \nil \vdash \cas{b'}{x}{\alpha} \mrelTwo{\equiv} \cas{c}{x}{\alpha}$, and
        $\Gamma, \cas{\Gamma'}{x}{\alpha}; \nil \vdash \cas{a_0}{x}{\alpha} \mrelTwo{\equiv} \cas{a'}{x}{\alpha}$, and
        $\Gamma, \cas{\Gamma'}{x}{\alpha} \vdash \cas{a'}{x}{\alpha} \leq^*_\wef \cas{b'}{x}{\alpha}$ only if $a' \neq b'$.
        By Lemma~\ref{lem:subtyping-under-substitution} on $\Gamma, x \myleq t, \Gamma'; \nil \vdash a \myrel{\leq} a_0$,
        we obtain $\Gamma, \cas{\Gamma'}{x}{\alpha} \vdash \cas{a}{x}{\alpha} \leq^*_\wef \cas{a_0}{x}{\alpha}$.

        We do a case disjunction on if $a' = b'$ or not.

        \paragraph*{If $a' = b'$:}
        If $a' = b'$, then
        the derivation of our assumption \\$\Gamma, \cas{\Gamma'}{x}{\alpha}; \nil \vdash \cas{b'}{x}{\alpha} \mrelTwo{\equiv} \cas{c}{x}{\alpha}$
        is actually $\Gamma, \cas{\Gamma'}{x}{\alpha}; \nil \vdash \cas{a'}{x}{\alpha} \mrelTwo{\equiv} \cas{c}{x}{\alpha}$.
        Therefore we have
        $\Gamma, \cas{\Gamma'}{x}{\alpha}; \nil \vdash \cas{a_0}{x}{\alpha} \mrelTwo{\equiv} \cas{c}{x}{\alpha}$
        by transitivity of $\mrelTwo{\equiv}$ from
        $\Gamma, \cas{\Gamma'}{x}{\alpha}; \nil \vdash \cas{a_0}{x}{\alpha} \mrelTwo{\equiv} \cas{a'}{x}{\alpha}$ and
        $\Gamma, \cas{\Gamma'}{x}{\alpha}; \nil \vdash \cas{a'}{x}{\alpha} \mrelTwo{\equiv} \cas{c}{x}{\alpha}$.
        then picking $a' = a$, $b' = a_0$, $c = c$ and $b = b$, we have the desired result.
        Indeed we have:
        \begin{itemize}
          \item
          $\Gamma, \cas{\Gamma'}{x}{\alpha}; \nil \vdash \cas{b}{x}{\alpha} \mrelTwo{\equiv} \cas{c}{x}{\alpha}$,
          which we obtained with our subinduction.
          \item
          $\Gamma, \cas{\Gamma'}{x}{\alpha}; \nil \vdash \cas{a_0}{x}{\alpha} \mrelTwo{\equiv} \cas{c}{x}{\alpha}$,
          which we obtained above.
          \item
          $\Gamma, \cas{\Gamma'}{x}{\alpha}; \nil \vdash \cas{a}{x}{\alpha} \mrelTwo{\equiv} \cas{a}{x}{\alpha}$.
          By reflexivity on $\mrelTwo{\equiv}$.
          \item
          $\Gamma, \cas{\Gamma'}{x}{\alpha} \vdash \cas{a}{x}{\alpha} \leq^*_\wef \cas{a_0}{x}{\alpha}$ only if $a \neq a_0$.
          We prove above that we have $\Gamma, \cas{\Gamma'}{x}{\alpha} \vdash \cas{a}{x}{\alpha} \leq^*_\wef \cas{a_0}{x}{\alpha}$.
        \end{itemize}

        \paragraph*{Else,}
        Now, we will transform \\$\Gamma, \cas{\Gamma'}{x}{\alpha}; \nil \vdash \cas{a_0}{x}{\alpha} \mrelTwo{\equiv} \cas{a'}{x}{\alpha}$
        into $\Gamma, \cas{\Gamma'}{x}{\alpha} \vdash \cas{a_0}{x}{\alpha} \leq^*_\wef \cas{a'}{x}{\alpha}$.
        First, by rule \textsc{Ws-Rfl}, we have
        $\Gamma, \cas{\Gamma'}{x}{\alpha} \vdash \cas{a'}{x}{\alpha} \leq_\wef \cas{a'}{x}{\alpha}$.
        By induction on the number of derivation of $\Gamma, \cas{\Gamma'}{x}{\alpha}; \nil \vdash \cas{a_0}{x}{\alpha} \mrelTwo{\equiv} \cas{a'}{x}{\alpha}$,
        we apply the rule \textsc{Ws-Lf1} at each step to transform
        $\Gamma, \cas{\Gamma'}{x}{\alpha} \vdash \cas{a'}{x}{\alpha} \leq_\wef \cas{a'}{x}{\alpha}$ into
        $\Gamma, \cas{\Gamma'}{x}{\alpha} \vdash \cas{a_0}{x}{\alpha} \leq_\wef \cas{a'}{x}{\alpha}$.
        By Lemma~\ref{prop:transitivity-preserves-well-formedness}, we know that
        both $a_0$ and $a'$ are well-formed in context $\Gamma, \cas{\Gamma'}{x}{\alpha}$ from respectively
        $\Gamma, \cas{\Gamma'}{x}{\alpha} \vdash \cas{a}{x}{\alpha} \leq^*_\wef \cas{a_0}{x}{\alpha}$ and
        $\Gamma, \cas{\Gamma'}{x}{\alpha} \vdash \cas{a'}{x}{\alpha} \leq^*_\wef \cas{b'}{x}{\alpha}$.
        Hence, by rule \textsc{Ws-Sub} $\Gamma, \cas{\Gamma'}{x}{\alpha} \vdash \cas{a_0}{x}{\alpha} \leq^*_\wef \cas{a'}{x}{\alpha}$.

        By transitivity (rule \textsc{Ws-Trs}), we know have $\Gamma, \cas{\Gamma'}{x}{\alpha} \vdash \cas{a}{x}{\alpha} \leq^*_\wef \cas{a'}{x}{\alpha}$,
        and then by transitivity again, we have $\Gamma, \cas{\Gamma'}{x}{\alpha} \vdash \cas{a}{x}{\alpha} \leq^*_\wef \cas{b'}{x}{\alpha}$.
        
        This effectively concludes the case, because we have\\
        $\Gamma, \cas{\Gamma'}{x}{\alpha}; \nil \vdash \cas{a}{x}{\alpha} \mrelTwo{\equiv} \cas{a}{x}{\alpha}$ by reflexivity,
        $\Gamma, \cas{\Gamma'}{x}{\alpha} \vdash \cas{a}{x}{\alpha} \leq^*_\wef \cas{b}{x}{\alpha}$, and
        $\Gamma, \cas{\Gamma'}{x}{\alpha}; \nil \vdash \cas{b}{x}{\alpha} \mrelTwo{\equiv} \cas{b'}{x}{\alpha}$.

      \item[Case \textsc{Ws-Rgh}:]
        In this case, we have $\Gamma, x \myleq t, \Gamma' \vdash a \leq_\wef b$ with premises
        $\Gamma, x \myleq t, \Gamma' \vdash a \leq_\wef b_0$, and
        $\Gamma, x \myleq t, \Gamma'; \nil \vdash b \myrel{\equiv} b_0$.
        By using the subinduction, we know there exists three terms $a'$, $b'$ and $c$ such that
        $\Gamma, \cas{\Gamma'}{x}{\alpha}; \nil \vdash \cas{b_0}{x}{\alpha} \mrelTwo{\equiv} \cas{c}{x}{\alpha}$, and
        $\Gamma, \cas{\Gamma'}{x}{\alpha}; \nil \vdash \cas{b'}{x}{\alpha} \mrelTwo{\equiv} \cas{c}{x}{\alpha}$, and
        $\Gamma, \cas{\Gamma'}{x}{\alpha}; \nil \vdash \cas{a}{x}{\alpha} \mrelTwo{\equiv} \cas{a'}{x}{\alpha}$, and
        $\Gamma, \cas{\Gamma'}{x}{\alpha} \vdash \cas{a'}{x}{\alpha} \leq^*_\wef \cas{b'}{x}{\alpha}$ only if $a' \neq b'$.
        By Lemma~\ref{lem:reduction-under-substitution} on $\Gamma, x \myleq t, \Gamma'; \nil \vdash b \myrel{\equiv} b_0$
        we obtain $\Gamma, \cas{\Gamma'}{x}{\alpha}; \nil \vdash \cas{b}{x}{\alpha} \myrel{\equiv} \cas{b_0}{x}{\alpha}$.
        Hence by transitivity of $\mrelTwo{\equiv}$, we have $\Gamma, \cas{\Gamma'}{x}{\alpha}; \nil \vdash \cas{b}{x}{\alpha} \myrel{\equiv} \cas{c}{x}{\alpha}$.
        This concludes this case.
    \end{description}
    Therefore, we obtained that for any sub-derivation $\Gamma, x \myleq t, \Gamma' \vdash a \leq_\wef b$
    of $\Gamma, x \myleq t, \Gamma' \vdash u' \leq^*_\wef z$,
    there exists three terms $a'$, $b'$ and $c$ such that
    $\Gamma, \cas{\Gamma'}{x}{\alpha}; \nil \vdash \cas{b}{x}{\alpha} \mrelTwo{\equiv} \cas{c}{x}{\alpha}$, and
    $\Gamma, \cas{\Gamma'}{x}{\alpha}; \nil \vdash \cas{b'}{x}{\alpha} \mrelTwo{\equiv} \cas{c}{x}{\alpha}$, and
    $\Gamma, \cas{\Gamma'}{x}{\alpha}; \nil \vdash \cas{a}{x}{\alpha} \mrelTwo{\equiv} \cas{a'}{x}{\alpha}$, and
    $\Gamma, \cas{\Gamma'}{x}{\alpha}; \nil \vdash \cas{a'}{x}{\alpha} \leq^*_\wef \cas{b'}{x}{\alpha}$.
    We also know that both $a$ and $b$ are well-formed in $\Gamma, x \myleq t, \Gamma'$.
    Let's finish by proving that we have $\Gamma, \cas{\Gamma'}{x}{\alpha} \vdash \cas{a}{x}{\alpha} \leq^*_\wef \cas{b}{x}{\alpha}$.

    By using the induction hypothesis, we obtain that both $\cas{a}{x}{\alpha}$ and $\cas{b}{x}{\alpha}$
    are well-formed in context $\Gamma, \cas{\Gamma'}{x}{\alpha}$.
    By Proposition~\ref{prop:transitivity-preserves-well-formedness}
    applied to $\Gamma, \cas{\Gamma'}{x}{\alpha} \vdash \cas{a'}{x}{\alpha} \leq^*_\wef \cas{b'}{x}{\alpha}$
    yields that both $\cas{a'}{x}{\alpha}$ and $\cas{b'}{x}{\alpha}$ are well-formed in context $\Gamma, \cas{\Gamma'}{x}{\alpha}$.

    We now build \\$\Gamma, \cas{\Gamma'}{x}{\alpha} \vdash \cas{a}{x}{\alpha} \leq^*_\wef \cas{a'}{x}{\alpha}$.
    By rule \textsc{Ws-Rfl}, we have $\Gamma, \cas{\Gamma'}{x}{\alpha} \vdash \cas{a'}{x}{\alpha} \leq_\wef \cas{a'}{x}{\alpha}$.
    By induction on the number of derivation of $\Gamma, \cas{\Gamma'}{x}{\alpha}; \nil \vdash \cas{a}{x}{\alpha} \mrelTwo{\equiv} \cas{a'}{x}{\alpha}$,
    we apply the rule \textsc{Ws-Lf1} at each step to transform
    $\Gamma, \cas{\Gamma'}{x}{\alpha} \vdash \cas{a'}{x}{\alpha} \leq_\wef \cas{a'}{x}{\alpha}$ into
    $\Gamma, \cas{\Gamma'}{x}{\alpha} \vdash \cas{a}{x}{\alpha} \leq_\wef \cas{a'}{x}{\alpha}$.
    Because both terms are well-formed, by rule \textsc{Ws-Sub}, we have
    $\Gamma, \cas{\Gamma'}{x}{\alpha} \vdash \cas{a}{x}{\alpha} \leq^*_\wef \cas{a'}{x}{\alpha}$.

    By a similar reasoning, and because both $\cas{b}{x}{\alpha}$ and $\cas{b'}{x}{\alpha}$
    are well-formed in $\Gamma, \cas{\Gamma'}{x}{\alpha}$ we have by a similar reasoning
    $\Gamma, \cas{\Gamma'}{x}{\alpha} \vdash \cas{b'}{x}{\alpha} \leq^*_\wef \cas{b}{x}{\alpha}$.

    Finally, from all above, by transitivity (rule \textsc{Ws-Trs}), we have
    $\Gamma, \cas{\Gamma'}{x}{\alpha} \vdash \cas{a}{x}{\alpha} \leq^*_\wef \cas{b}{x}{\alpha}$ the desired result.

    Let's now prove that $\Gamma, \cas{\Gamma'}{x}{\alpha} \vdash \cas{v}{x}{\alpha} \leq^*_\wef \cas{(\lambda y \myleq z. \T)}{x}{\alpha}$ from
    $\Gamma, x \myleq t, \Gamma' \vdash v \leq^*_\wef \lambda y \myleq z. \T$.
    As above, we first proceed by induction on the number of transitivity step in $\leq^*_\wef$,
    and prove each step by a subinduction as above which we omit it here,
    because the previous reasoning do not depend on the structure of the terms $u'$ and $z$, the same
    reasoning can be made on $\Gamma, x \myleq t, \Gamma' \vdash v \leq^*_\wef \lambda y \myleq z. \T$.
    So we obtain
    Therefore, we have $\Gamma, \cas{\Gamma'}{x}{\alpha} \vdash \cas{v}{x}{\alpha} \leq^*_\wef \cas{(\lambda y \myleq z. \T)}{x}{\alpha}$.
    
    Finally, we have
    $\Gamma, \cas{\Gamma'}{x}{\alpha} \vdash \cas{v}{x}{\alpha} \, \cas{u'}{x}{\alpha} ~\wef$ by rule \textsc{Wf-App}.
  \qedhere
  \end{description}
\end{proof}


\begin{lemma}
  \label{lem:subtyping-under-substitution}
  Let $\Gamma; s$ be an extended context, and $\Gamma'$ be a logical context,
  such that $\Gamma, \cas{\Gamma'}{x}{\alpha}; \nil$ is prevalid.
  Let $u$, $v$, $t$ and $\alpha$ be terms, such that
  both $\cas{u}{x}{\alpha}$ and $\cas{v}{x}{\alpha}$ are well-formed in context $\Gamma, \cas{\Gamma'}{x}{\alpha}$.
  
  If $\Gamma, x \myleq t, \Gamma'; \nil \vdash u \myrel{\leq} v$,
  and $\Gamma \vdash \alpha \leq^*_{\wef} t$.
  
  Then $\Gamma, \cas{\Gamma'}{x}{\alpha} \vdash \cas{u}{x}{\alpha} \leq^*_\wef \cas{v}{x}{\alpha}$.
\end{lemma}
\begin{proof}
  In this lemma, we have two cases, either the derivation $\Gamma, x \myleq t, \Gamma'; \nil \vdash u \myrel{\leq} v$
  depending on if our derivation is of the form $\Gamma, x \myleq t, \Gamma'; \nil \vdash \Po[x] \myrel{\leq} \Po[t]$ for some covariant context $\Po$.

  \begin{description}
    \item[If it is:]
  By assumption, we have both $\cas{u}{x}{\alpha}$ and $\cas{v}{x}{\alpha}$ well-formed in context $\Gamma, \cas{\Gamma'}{x}{\alpha}$,
  because the derivation $\Gamma, x \myleq t, \Gamma'; \nil \vdash u \myrel{\leq} v$
  is $\Gamma, x \myleq t, \Gamma'; \nil \vdash \Po[x] \myrel{\leq} \Po[t]$,
  these terms are $\Po_{\cas{}{x}{\alpha}}[\alpha]$ and $\Po_{\cas{}{x}{\alpha}}[t]$.
  From $\Gamma \vdash \alpha \leq^*_{\wef} t$ we have by weakening (Lemma~\ref{lem:context-weakening})
  $\Gamma, x \myleq t, \cas{\Gamma'}{x}{\alpha} \vdash \alpha \leq^*_\wef t$.
  We therefore have $\Gamma, x \myleq t, \cas{\Gamma'}{x}{\alpha} \vdash \alpha \leq^*_\wef t$,
  and both $\Po_{\cas{}{x}{\alpha}}[\alpha]$ and $\Po_{\cas{}{x}{\alpha}}[t]$ well-formed in $\Gamma, x \myleq t, \cas{\Gamma'}{x}{\alpha}$.
  Hence by Conjecture~\ref{conj:stack-elimination}, we obtain
  $\Gamma, x \myleq t, \cas{\Gamma'}{x}{\alpha} \vdash \Po_{\cas{}{x}{\alpha}}[\alpha] \leq^*_\wef \Po_{\cas{}{x}{\alpha}}[t]$.
  Because there is no instance of the variable $x$ in neither $\alpha$ nor in $t$,
  this last derivation is
  $\Gamma, x \myleq t, \cas{\Gamma'}{x}{\alpha} \vdash \cas{u}{x}{\alpha} \leq^*_\wef \cas{v}{x}{\alpha}$.
  By prevalidity of $\Gamma, x \myleq t, \cas{\Gamma'}{x}{\alpha}$, we know that there is no instance of the variable $x$
  in $\Gamma$. As a result, there is no instance of this variable anywhere.
  As a result, we have
  $\Gamma, \cas{\Gamma'}{x}{\alpha} \vdash \cas{u}{x}{\alpha} \leq^*_\wef \cas{v}{x}{\alpha}$ the desired result.

    \item[Else:]
  We apply Lemma~\ref{lem:subtyping-under-substitution-aux2} to obtain
  $\Gamma, \cas{\Gamma'}{x}{\alpha}; \nil \vdash \cas{u}{x}{\alpha} \myrel{\leq} \cas{v}{x}{\alpha}$.
  By rule \textsc{Ws-Rfl}, we have
  $\Gamma, \cas{\Gamma'}{x}{\alpha} \vdash \cas{v}{x}{\alpha} \leq_\wef \cas{v}{x}{\alpha}$.
  Hence, by rule \textsc{Ws-Lft},
  $\Gamma, \cas{\Gamma'}{x}{\alpha} \vdash \cas{u}{x}{\alpha} \leq_\wef \cas{v}{x}{\alpha}$.
  Because both terms are well-formed by assumption, we have by rule \textsc{Ws-Sub},
  $\Gamma, \cas{\Gamma'}{x}{\alpha} \vdash \cas{u}{x}{\alpha} \leq^*_\wef \cas{v}{x}{\alpha}$
  the desired result.
  \end{description}
\end{proof}

\begin{lemma}[Inversion lemma.]
  \label{lem:inversion-lemma}
  Let $\Gamma$ be a logical context.
  Let $\lambda x\myleq t.u$ and $\lambda x\myleq t'.u'$ be terms.
  If $\Gamma \vdash(\lambda x\myleq t.u)\leq^*_\wef(\lambda x\myleq t'.u')$ then
  $\Gamma \vdash t \equiv_\wef t'$.
\end{lemma}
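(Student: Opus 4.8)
The plan is to strip transitivity, move to the diagrammatic characterisation of $\leq$ and $\equiv$, and then read off the result from a routine inversion on the subtyping and equivalence reduction rules for functions.

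First, from the hypothesis $\Gamma;s\vdash(\lambda x\myleq t.u)\leq^*(\lambda x\myleq t'.u')$, applying transitivity elimination (Theorem~\ref{lem:algorithmic-transivity-elimination}) gives a transitivity-free derivation $\Gamma;s\vdash(\lambda x\myleq t.u)\leq(\lambda x\myleq t'.u')$. By Proposition~\ref{lem:def-subtype}, instantiated with $\vartriangleleft$ taken to be $\leq$, there is a term $w$ with $\Gamma;s\vdash(\lambda x\myleq t.u)\mrelTwo{\leq}w$ and $\Gamma;s\vdash(\lambda x\myleq t'.u')\mrelTwo{\equiv}w$.

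The heart of the argument is two structural facts, each proved by induction on the length of the reduction. (i) If $\Gamma;s\vdash(\lambda x\myleq r.b)\mrelTwo{\leq}w$ then either $w=\T$, or $w=\lambda x\myleq r_1.b_1$ for some $r_1$, $b_1$ with $r\mrelTwo{\equiv}r_1$. Indeed, inverting one $\myrel{\leq}$-step out of a function, the only applicable rules are \textsc{Srs-Top} (which yields $\T$; and from $\T$ the only $\myrel{\leq}$-reduct is again $\T$, since for subject $\T$ only \textsc{Srs-Top} and \textsc{Srs-Eq} can apply and the latter forces $\T\myrel{\equiv}\T$ via \textsc{Cr-Top}), \textsc{Srs-Fun} and \textsc{Srs-FunOp} (which leave the annotation unchanged), and \textsc{Srs-Eq} (inverting $\myrel{\equiv}$ on a function can only be \textsc{Cr-Fun}, so the reduct is a function whose annotation is one $\myrel{\equiv}$-step further); rules \textsc{Srs-Prom} and \textsc{Srs-App} cannot apply since the subject is neither a variable nor an application. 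Composing the single $\myrel{\equiv}$-perturbations uses that $\mrelTwo{\equiv}$ is transitive, being a reflexive-transitive closure. (ii) If $\Gamma;s\vdash(\lambda x\myleq r.b)\mrelTwo{\equiv}w$ then $w=\lambda x\myleq r_1.b_1$ with $r\mrelTwo{\equiv}r_1$: a single $\myrel{\equiv}$-step out of a (non-application) function can only be \textsc{Cr-Fun}.

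Now apply fact~(i) to $\Gamma;s\vdash(\lambda x\myleq t.u)\mrelTwo{\leq}w$ and fact~(ii) to $\Gamma;s\vdash(\lambda x\myleq t'.u')\mrelTwo{\equiv}w$. By~(ii) the term $w$ is a function, so the alternative $w=\T$ of~(i) is ruled out; hence $w=\lambda x\myleq r_1.b_1$ with both $t\mrelTwo{\equiv}r_1$ and $t'\mrelTwo{\equiv}r_1$. Instantiating Proposition~\ref{lem:def-subtype} once more, this time with $\vartriangleleft$ taken to be $\equiv$ and witness $r_1$, yields $\Gamma;s\vdash t\equiv t'$, as required. The step I expect to need the most care is fact~(i): one has to check that the escape to $\T$ is a genuine sink under $\myrel{\leq}$ and that every other step perturbs the annotation by only a single equivalence step, so that the accumulated perturbation stays inside $\mrelTwo{\equiv}$.
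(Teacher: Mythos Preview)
Your proof is correct and follows essentially the same route as the paper's: eliminate transitivity, pass to the diagrammatic characterisation via Proposition~\ref{lem:def-subtype}, then invert both reduction chains on abstractions to extract a common $\mrelTwo{\equiv}$-reduct of the two annotations. The paper's version is terser---it first uses the $\mrelTwo{\equiv}$-chain to pin down that the meeting point $w$ is an abstraction, then says ``by the same argument'' for the $\mrelTwo{\leq}$-chain---whereas you make the $\T$-sink alternative in fact~(i) explicit and rule it out afterwards; your treatment is the more careful of the two on exactly the point you flagged as needing care.
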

\begin{proof}
  By induction on the number of transitivity step of
  $\Gamma \vdash(\lambda x\myleq t.u)\leq^*_\wef(\lambda x\myleq t'.u')$,
  we obtain $\Gamma \vdash(\lambda x\myleq t.u)\leq^*(\lambda x\myleq t'.u')$
  by applying Proposition~\ref{prop:trans-well-to-trans} on each of these transitivity steps.
  By transitivity elimination (Lemma~\ref{lem:algorithmic-transitivity-elimination}), we then obtain
  $\Gamma; \nil \vdash (\lambda x\myleq t.u)\leq (\lambda x\myleq t'.u')$.
  By definition, there exists a term $z$ such that
  $\Gamma; \nil \vdash (\lambda x\myleq t.u)\mrelTwo{\equiv} z$ and
  $\Gamma; \nil \vdash (\lambda x\myleq t'.u') \mrelTwo{\leq} z$.
  On the first derivation, based on the structure of the term $\lambda x\myleq t.u$,
  only the rule \textsc{Me-Fun} can be applied, and therefore there exists a function
  $\lambda x\myleq t_z.u_z$ such that $z = \lambda x\myleq t_z.u_z$.
  Because only the rule \textsc{Me-Fun} has been applied, we also have
  $\Gamma; \nil \vdash t \mrelTwo{\equiv} t_z$.
  We apply the same reasoning on the derivation $\Gamma; \nil \vdash (\lambda x\myleq t'.u') \mrelTwo{\leq} z$,
  which consists only of use of the rule \textsc{Ms-Fun} or \textsc{Ms-Equ} with \textsc{Me-Fun},
  and therefore we have as well $\Gamma; \nil \vdash t' \mrelTwo{\equiv} t_z$.
  
  By rule \textsc{Ws-Rfl}, we have $\Gamma \vdash t_z \equiv_\wef t_z$.
  We now show that we have $\Gamma \vdash t \equiv_\wef t_z$ by induction on the number of derivation steps of
  $\Gamma; \nil \vdash t \mrelTwo{\equiv} t_z$. If there is no derivation step, then we have $t = t_z$ and the result holds.
  Otherwise, there exists a term $t'$ such that $\Gamma; \nil \vdash t \myrel{\equiv} t'$
  and $\Gamma; \nil \vdash t' \mrelTwo{\equiv} t_z$. By the induction hypothesis on the number of derivation steps of
  $\Gamma; \nil \vdash t' \mrelTwo{\equiv} t_z$, we obtain $\Gamma \vdash t' \equiv_\wef t_z$.
  By rule \textsc{Ws-Lf1} on $\Gamma; \nil \vdash t \myrel{\equiv} t'$ and $\Gamma \vdash t' \equiv_\wef t_z$,
  we obtain $\Gamma \vdash t \equiv_\wef t_z$.
  By the same reasoning on the number of derivation steps of $\Gamma; \nil \vdash t' \mrelTwo{\equiv} t_z$,
  we obtain $\Gamma \vdash t \equiv t'$ from $\Gamma \vdash t \equiv_\wef t_z$ and the rule \textsc{Ws-Rgh}.
\end{proof}

\begin{theorem}[No supertype of Top.]
  \label{lem:no-supertype-top}
  Let $\Gamma;s$ be an extended context.
  Let $\lambda x\myleq t.u$ be a term. We cannot
  have $\Gamma;s\vdash\T\leq^*\lambda x\myleq t.u$.
\end{theorem}
\begin{proof}
  First we transform $\Gamma;s\vdash\T\leq^*\lambda x\myleq t.u$
  into $\Gamma;s\vdash\T\leq\lambda x\myleq t.u$ by transitivity elimination (Lemma~\ref{lem:algorithmic-transitivity-elimination}).
  Assume towards a contradiction that
  $\Gamma;s\vdash\T\leq\lambda x\myleq t.u$. By definition of the subtyping relation, there exists $w$
  such that $\Gamma;s\vdash\T\mrelTwo{\leq} w$ and $\Gamma;s\vdash\lambda x\myleq t.u\mrelTwo{\equiv}w$.
  From $\Gamma;s\vdash\T\mrelTwo{\leq} w$, we obtain that $w=\T$
  since $\T$ can only promote to itself, with rules \textsc{Ms-Equ} and
  \textsc{Ms-Top}. Now from $\Gamma;s\vdash\lambda x\myleq t.u\mrelTwo{\equiv}w$, we know that $w$ is an
  abstraction since the only rule for equivalence reduction that can be applied to
  $\lambda x\myleq t.u$ is rule \textsc{Ms-Fun} or \textsc{Ms-FOp}. But this contradicts the $w=\T$ established
  earlier and we are done.
\end{proof}


\section{Rest of the appendix}


\begin{proposition}[Well-formedness extraction]
  \label{prop:transitivity-preserves-well-formedness}
  Let $\Gamma$ be an extended context and $u$ and $v$ be terms. If
  $\Gamma\vdash u\leq^*_{\wef}v$ then both derivation $\Gamma\vdash u~\wef$ and
  $\Gamma\vdash v~\wef$ exist in the derivation tree $\Gamma\vdash u\leq^*_{\wef}v$.
\end{proposition}
\begin{proof}
  Note: the requirement that the derivation trees ($\Gamma \vdash u~\wef$ and $\Gamma \vdash v~\wef$)
  exist in the original derivation tree is used on other lemmas because we do induction on derivation trees.
  It is not enough to know that $\Gamma \vdash u~\wef$ and $\Gamma \vdash v~\wef$ exist,
  they need to be a subtree of our original tree. Hence this formulation.

  By induction on the derivation tree of $\Gamma \vdash u\leq^*_{\wef}v$. If the last
  rule applied is \textsc{Ws-Sub}, then the result holds from the assumption of the rule.
  Else, the last rule applied is \textsc{Wf-Trans}, then there exists a term $t$ such that
  $\Gamma \vdash u\leq^*_{\wef}t$ and $\Gamma \vdash t\leq^*_{\wef}v$, and the result
  holds by the induction hypothesis.    
\end{proof}

\begin{proposition}[From well-subtyping to subtyping.]
  \label{prop:trans-well-to-trans}
  Let $\Gamma$ be a logical context and $u$ and $v$ be terms. If
  $\Gamma \vdash u\leq_{\wef} v$ then $\Gamma; \nil \vdash u\leq v$.
\end{proposition}
\begin{proof}
  By induction on the derivation tree of $\Gamma \vdash u\leq_{\wef} v$.
  We distinguish cases based on the last rule used in the derivation tree.
  \begin{description}
    \item[Rule \textsc{Ws-Rfl}:]
      In this case, we have $u = v = t$.
      By Proposition~\ref{prop:algorithmic-refl}, we have $\Gamma; \nil \vdash t \myrel{\equiv} t$.
      By rule \textsc{Ms-Equ}, we have $\Gamma; \nil \vdash t \myrel{\leq} t$.
      By definition of the subtype relation $\leq$, we have $\Gamma; \nil \vdash t \leq t$.

    \item[Rule \textsc{Ws-Lf1} or \textsc{Ws-Lf2}:]
      In this case, we have as premises $\Gamma \vdash v' \leq_{\wef} t$,
      $\Gamma; \nil \vdash v \myrel{\vartriangleleft} v'$.
      By induction hypothesis, we obtain $\Gamma; \nil \vdash v' \leq t$.
      By definition of $\leq$, there exists $z$ such that $\Gamma; \nil \vdash v' \mrelTwo{\leq} z$ and $\Gamma; \nil \vdash t \mrelTwo{\equiv} z$.
      The reduction $\Gamma; \nil \vdash v \myrel{\vartriangleleft} v'$ is
      either $\Gamma; \nil \vdash v \myrel{\equiv} v'$ or $\Gamma; \nil \vdash v \myrel{\leq} v'$.
      In the former case, by rule \textsc{Ms-Equ}, we have $\Gamma; \nil \vdash v \myrel{\leq} v'$.
      Hence we have $\Gamma; \nil \vdash v \myrel{\leq} v'$.
      Hence $\Gamma; \nil \vdash v \mrelTwo{\leq} v'$.
      We now have $\Gamma; \nil \vdash v \mrelTwo{\leq} v'$ and $\Gamma; \nil \vdash v' \mrelTwo{\leq} z$, hence $\Gamma; \nil \vdash v \mrelTwo{\leq} z$.
      By definition of $\leq$, we have $\Gamma; \nil \vdash v \leq t$.

    \item[Rule \textsc{Ws-Rgh}:]
      In this case, we have as premises $\Gamma \vdash v \leq_{\wef} t'$,
      $\Gamma; \nil \vdash t \myrel{\equiv} t'$.
      By induction hypothesis, we obtain $\Gamma; \nil \vdash v \leq t'$.
      By definition of $\leq$, there exists $z$ such that $\Gamma; \nil \vdash v \mrelTwo{\leq} z$ and $\Gamma; \nil \vdash t' \mrelTwo{\equiv} z$.
      By definition of $\myrel{\equiv}$, we have $\Gamma; \nil \vdash t \myrel{\leq} t'$ and $\Gamma; \nil \vdash t' \myrel{\leq} t$.
      Hence $\Gamma; \nil \vdash t \mrelTwo{\leq} t'$.
      We now have $\Gamma; \nil \vdash t' \mrelTwo{\equiv} z$, hence $\Gamma; \nil \vdash t' \mrelTwo{\leq} z$ and  $\Gamma; \nil \vdash z \mrelTwo{\leq} t'$.
      We have $\Gamma; \nil \vdash t \mrelTwo{\leq} t'$ and $\Gamma; \nil \vdash t' \mrelTwo{\leq} z$, hence $\Gamma; \nil \vdash t \mrelTwo{\leq} z$.
      By definition of $\leq$, we have $\Gamma; \nil \vdash v \leq t$.
  \end{description}
\end{proof}

\begin{proposition}[From well-equivalence to equivalence.]
  \label{prop:transequiv-well-to-trans}
  Let $\Gamma$ be a logical context and $u$ and $v$ be terms. If
  $\Gamma \vdash u\equiv_{\wef} v$ then $\Gamma; \nil \vdash u\equiv v$.
\end{proposition}
\begin{proof}
  By induction on the derivation tree of $\Gamma \vdash u\equiv_{\wef} v$.
  We distinguish cases based on the last rule used in the derivation tree.
  \begin{description}
    \item[Rule \textsc{Ws-Rfl}:]
      In this case, we have $u = v = t$.
      By Proposition~\ref{prop:algorithmic-refl}, we have $\Gamma; \nil \vdash t \myrel{\equiv} t$.
      By rule \textsc{Ms-Equ}, we have $\Gamma; \nil \vdash t \myrel{\equiv} t$.
      By definition of the subtype relation $\equiv$, we have $\Gamma; \nil \vdash t \equiv t$.

    \item[Rule \textsc{Ws-Lf1}:]
      In this case, we have as premises $\Gamma \vdash v' \equiv_{\wef} t$,
      $\Gamma; \nil \vdash v \myrel{\equiv} v'$.
      By induction hypothesis, we obtain $\Gamma; \nil \vdash v' \equiv t$.
      By definition of $\equiv$, there exists $z$ such that $\Gamma; \nil \vdash v' \mrelTwo{\equiv} z$ and $\Gamma; \nil \vdash t \mrelTwo{\equiv} z$.
      From $\Gamma; \nil \vdash v \myrel{\equiv} v'$ we have $\Gamma; \nil \vdash v \mrelTwo{\equiv} v'$.
      We now have $\Gamma; \nil \vdash v \mrelTwo{\equiv} v'$ and $\Gamma; \nil \vdash v' \mrelTwo{\equiv} z$, hence $\Gamma; \nil \vdash v \mrelTwo{\equiv} z$.
      By definition of $\equiv$, we have $\Gamma; \nil \vdash v \equiv t$.

    \item[Rule \textsc{Ws-Rgh}:]
      In this case, we have as premises $\Gamma \vdash v \equiv_{\wef} t'$,
      $\Gamma; \nil \vdash t \myrel{\equiv} t'$.
      By induction hypothesis, we obtain $\Gamma; \nil \vdash v \equiv t'$.
      By definition of $\equiv$, there exists $z$ such that $\Gamma; \nil \vdash v \mrelTwo{\equiv} z$ and $\Gamma; \nil \vdash t' \mrelTwo{\equiv} z$.
      By definition of $\myrel{\equiv}$, we have $\Gamma; \nil \vdash t \myrel{\equiv} t'$ and $\Gamma; \nil \vdash t' \myrel{\equiv} t$.
      Hence $\Gamma; \nil \vdash t \mrelTwo{\equiv} t'$.
      We now have $\Gamma; \nil \vdash t' \mrelTwo{\equiv} z$, hence $\Gamma; \nil \vdash t' \mrelTwo{\equiv} z$ and  $\Gamma; \nil \vdash z \mrelTwo{\equiv} t'$.
      We have $\Gamma; \nil \vdash t \mrelTwo{\equiv} t'$ and $\Gamma; \nil \vdash t' \mrelTwo{\equiv} z$, hence $\Gamma; \nil \vdash t \mrelTwo{\equiv} z$.
      By definition of $\equiv$, we have $\Gamma; \nil \vdash v \equiv t$.
  \end{description}
\end{proof}

\begin{lemma}
  \label{lem:equiv-commutative}
  Let $\Gamma$ be a logical context and $u$ and $v$ be terms. If
  $\Gamma \vdash u \equiv_{\wef} v$ then $\Gamma \vdash v \equiv_{\wef} u$.
\end{lemma}
\begin{proof}
  By induction on the derivation tree of $\Gamma \vdash u \equiv_{\wef} v$.
  We distinguish cases based on the last rule used in the derivation tree.
  \begin{description}
    \item[Rule \textsc{Ws-Rfl}:]
      In this case, we have $u = v$.
      By rule \textsc{Ws-Rfl}, we have $\Gamma \vdash u \equiv_{\wef} u$.

    \item[Rule \textsc{Ws-Lf1}:]
      In this case, we have as premises $\Gamma \vdash u' \equiv_{\wef} v$,
      $\Gamma; \nil \vdash u \myrel{\equiv} u'$.
      By induction hypothesis, we obtain $\Gamma \vdash v \equiv_{\wef} u'$.
      By rule \textsc{Ws-Rgh}, we obtain $\Gamma \vdash v \equiv_{\wef} u$.

    \item[Rule \textsc{Ws-Rgh}:]
      In this case, we have as premises $\Gamma \vdash u \equiv_{\wef} v'$,
      $\Gamma; \nil \vdash v \myrel{\equiv} v'$.
      By induction hypothesis, we obtain $\Gamma \vdash v' \equiv_{\wef} u$.
      By rule \textsc{Ws-Lf1}, we obtain $\Gamma \vdash v \equiv_{\wef} u$.
  \end{description}
\end{proof}

\begin{lemma}
  \label{lem:equiv-to-subtyping}
  Let $\Gamma$ be a logical context and $u$ and $v$ be terms. If
  $\Gamma \vdash u \equiv_{\wef} v$ then $\Gamma \vdash u \leq_{\wef} v$.
\end{lemma}
\begin{proof}
  By induction on the derivation tree of $\Gamma \vdash u \equiv_{\wef} v$.
  We distinguish cases based on the last rule used in the derivation tree.
  \begin{description}
    \item[Rule \textsc{Ws-Rfl}:]
      In this case, we have $u = v$.
      By rule \textsc{Ws-Rfl}, we have $\Gamma \vdash u \leq_{\wef} u$.

    \item[Rule \textsc{Ws-Lf1}:]
      In this case, we have as premises $\Gamma \vdash u' \equiv_{\wef} v$,
      $\Gamma; \nil \vdash u \myrel{\equiv} u'$.
      By induction hypothesis, we obtain $\Gamma \vdash u' \leq_{\wef} v$.
      Hence by rule \textsc{Ws-Lf1}, we obtain $\Gamma \vdash u \leq_{\wef} v$.

    \item[Rule \textsc{Ws-Rgh}:]
      In this case, we have as premises $\Gamma \vdash u \equiv_{\wef} v'$,
      $\Gamma; \nil \vdash v \myrel{\equiv} v'$.
      By induction hypothesis, we obtain $\Gamma \vdash u \leq_{\wef} v'$.
      Hence by rule \textsc{Ws-Rgh}, we obtain $\Gamma \vdash u \leq_{\wef} v$.
  \end{description}
\end{proof}

\begin{proposition}[From reduction semantics to equivalence reduction.]
  \label{prop:mapsto-inclusion}
  Let $u$ and $v$ be terms.
  If $u \mapsto v$ then $\Gamma; s \vdash u \myrel{\equiv} v$ for all extended context $\Gamma; s$.
\end{proposition}
\begin{proof}
  By induction on the derivation tree of $u \mapsto v$.  The base case is when the last
  rule used is \textsc{Os-Bet}, and the result holds by rule \textsc{Me-Bet} and
  reflexivity of $\myrel{\equiv}$ (Proposition~\ref{prop:algorithmic-refl}).

  The induction cases is when the last rule used is \textsc{Os-Con}, and the result holds
  by induction and rules \textsc{Me-Fun}, \textsc{Me-FOp} and \textsc{Me-App}.
\end{proof}

\begin{proposition}[Reflexivity.]
  \label{prop:algorithmic-refl}
  Let $\Gamma; s$ be an extended context, and $u$ be a term,
  we have $\Gamma; s \vdash u \myrel{\leq} u$, and $\Gamma; s \vdash u \myrel{\equiv} u$.
\end{proposition}
\begin{proof}
  By straightforward induction on the derivation trees of
  $\Gamma;s\vdash u\myrel{\equiv}u$, noting that the
  only applicable rules at the leaves are \textsc{Me-Top} and \textsc{Me-Var},
  and that at the rest of the nodes only rules
  \textsc{Me-App}, \textsc{Me-Fun}, and \textsc{Me-FOp} occur.
  To now prove $\Gamma;s\vdash u\myrel{\leq}u$, we use rule \textsc{Ms-Equ}.
\end{proof}


\begin{lemma}[Weakening of context]
  \label{lem:context-weakening}
  Let $\Gamma; s$ and $\Gamma'; s'$ be two extended contexts such that $\Gamma, \Gamma'; s @ s'$ is prevalid.
  Let $u$ and $v$ be terms.
  We have the following relations:
  
  If $\Gamma \vdash u ~\wef$ then $\Gamma, \Gamma' \vdash u ~\wef$.

  If $\Gamma \vdash u \leq_{\wef} v$ then $\Gamma, \Gamma' \vdash u \leq_{\wef} v$.

  If $\Gamma \vdash u \leq^*_{\wef} v$ then $\Gamma, \Gamma' \vdash u \leq^*_{\wef} v$.

  If $\Gamma; s \vdash u \leq v$ then $\Gamma, \Gamma'; s \vdash u \leq v$.

  If $\Gamma; s \vdash u \leq^* v$ then $\Gamma, \Gamma'; s \vdash u \leq^* v$.

  If $\Gamma; s \vdash u \myrel{\leq} v$ then $\Gamma, \Gamma'; s \vdash u \myrel{\leq} v$.

  If $\Gamma; s \vdash u \mrelTwo{\leq} v$ then $\Gamma, \Gamma'; s \vdash u \mrelTwo{\leq} v$.
  
  If $\Gamma; s \vdash u \myrel{\equiv} v$ then $\Gamma, \Gamma'; s @ s' \vdash u \myrel{\equiv} v$.
  
  If $\Gamma; s \vdash u \mrelTwo{\equiv} v$ then $\Gamma, \Gamma'; s @ s' \vdash u \mrelTwo{\equiv} v$.
\end{lemma}
\begin{proof}
  Note: this lemma is a group of different lemmas which are proved separately.
  This proof dispatches the proof of each of these lemmas to their respective proof.

  We do a case disjunction based on the case we are in:
    \begin{description}
      \item[Case $\Gamma \vdash u ~\wef$ or $\Gamma \vdash u \leq_{\wef} v$ or $\Gamma \vdash u \leq^*_{\wef} v$:]
        Done in Lemma~\ref{lem:context-weakening-aux1}.

      \item[Case $\Gamma; s \vdash u \leq v$:]
        We need to prove that if $\Gamma; s \vdash u \leq v$ then $\Gamma, \Gamma'; s \vdash u \leq v$.
        By definition of the relation $\leq$, there exists a term $z$ such that
        $\Gamma; s \vdash u \mrelTwo{\leq} z$ and $\Gamma; s \vdash v \mrelTwo{\equiv} z$.
        By using respectively Lemma~\ref{lem:context-weakening-subtyping-reduction} and Lemma~\ref{lem:context-weakening-equivalence-reduction}
        on respectively $\Gamma; s \vdash u \mrelTwo{\leq} z$ and $\Gamma; s \vdash v \mrelTwo{\equiv} z$,
        we respectively obtain $\Gamma, \Gamma'; s \vdash u \mrelTwo{\leq} z$ and $\Gamma, \Gamma'; s @ s' \vdash v \mrelTwo{\equiv} z$.
        By definition of the relation $\leq$, we obtain $\Gamma, \Gamma'; s \vdash u \leq v$.

      \item[Case $\Gamma; s \vdash u \leq^* v$:]
        We need to prove that if $\Gamma; s \vdash u \leq^* v$ then $\Gamma, \Gamma'; s \vdash u \leq^* v$.
        By definition of the relation $\leq^*$,
        there exists $u_1, \dots, u_n$ such that
        $\Gamma; s \vdash u \leq u_1$, $\Gamma; s \vdash u_1 \leq u_2, \cdots, \Gamma; s \vdash u_n \leq v$.
        By using the reasoning of the case above on each $\Gamma; s \vdash u_i \leq u_{i+1}$, we obtain
        $\Gamma, \Gamma'; s \vdash u \leq u_1$, $\Gamma, \Gamma'; s \vdash u_1 \leq u_2, \cdots, \Gamma, \Gamma'; s \vdash u_n \leq v$.
        By definition of the relation $\leq^*$, we obtain $\Gamma, \Gamma'; s \vdash u \leq^* v$.

      \item[Case $\Gamma; s \vdash u \myrel{\leq} v$:]
        We need to prove that if $\Gamma; s \vdash u \myrel{\leq} v$ then $\Gamma, \Gamma'; s \vdash u \myrel{\leq} v$.
        We proceed by induction on the derivation tree of $\Gamma; s \vdash u \myrel{\leq} v$.
        The proof can be found in Lemma~\ref{lem:context-weakening-subtyping-reduction}.

      \item[Case $\Gamma; s \vdash u \mrelTwo{\leq} v$:]
        We need to prove that if $\Gamma; s \vdash u \mrelTwo{\leq} v$ then $\Gamma, \Gamma'; s \vdash u \mrelTwo{\leq} v$.
        By definition of the relation $\mrelTwo{\leq}$, there exists $u_1, \dots, u_n$ such that
        $\Gamma; s \vdash u \myrel{\leq} u_1$, $\Gamma; s \vdash u_1 \myrel{\leq} u_2, \cdots, \Gamma; s \vdash u_n \myrel{\leq} v$.
        By using Lemma~\ref{lem:context-weakening-subtyping-reduction} multiple times, we obtain
        $\Gamma, \Gamma'; s \vdash u \myrel{\leq} u_1$, $\Gamma, \Gamma'; s \vdash u_1 \myrel{\leq} u_2, \cdots, \Gamma, \Gamma'; s \vdash u_n \myrel{\leq} v$.
        By definition of the relation $\mrelTwo{\leq}$, we obtain $\Gamma, \Gamma'; s \vdash u \mrelTwo{\leq} v$.

      \item[Case $\Gamma; s \vdash u \myrel{\equiv} v$:]
        We need to prove that if $\Gamma; s \vdash u \myrel{\equiv} v$ then $\Gamma, \Gamma'; s @ s' \vdash u \myrel{\equiv} v$.
        We proceed by induction on the derivation tree of $\Gamma; s \vdash u \myrel{\equiv} v$.
        The proof can be found in Lemma~\ref{lem:context-weakening-equivalence-reduction}.

      \item[Case $\Gamma; s \vdash u \mrelTwo{\equiv} v$:]
        We need to prove that if $\Gamma; s \vdash u \mrelTwo{\equiv} v$ then $\Gamma, \Gamma'; s @ s' \vdash u \mrelTwo{\equiv} v$.
        By definition of the relation $\mrelTwo{\equiv}$, there exists $u_1, \dots, u_n$ such that
        $\Gamma; s \vdash u \myrel{\equiv} u_1$, $\Gamma; s \vdash u_1 \myrel{\equiv} u_2, \cdots, \Gamma; s \vdash u_n \myrel{\equiv} v$.
        By using Lemma~\ref{lem:context-weakening-equivalence-reduction} multiple times, we obtain
        $\Gamma, \Gamma'; s @ s' \vdash u \myrel{\equiv} u_1$, $\Gamma, \Gamma'; s @ s' \vdash u_1 \myrel{\equiv} u_2, \cdots, \Gamma, \Gamma'; s @ s' \vdash u_n \myrel{\equiv} v$.
        By definition of the relation $\mrelTwo{\equiv}$, we obtain $\Gamma, \Gamma'; s @ s' \vdash u \mrelTwo{\equiv} v$.
    \end{description}
\end{proof}

\begin{lemma}[Weakening of context - Aux]
  \label{lem:context-weakening-aux1}
  Let $\Gamma$ and $\Gamma'$ be two logical contexts such that $\Gamma, \Gamma'$ is prevalid.
  Let $u$ and $v$ be terms.
  We have the following relations:

  \begin{enumerate}
    \item 
    If $\Gamma \vdash u ~\wef$ then $\Gamma, \Gamma' \vdash u ~\wef$.

    \item 
    If $\Gamma \vdash u \leq_{\wef} v$ then $\Gamma, \Gamma' \vdash u \leq_{\wef} v$.
    
    \item 
    If $\Gamma \vdash u \leq^*_{\wef} v$ then $\Gamma, \Gamma' \vdash u \leq^*_{\wef} v$.
  \end{enumerate}
\end{lemma}
\begin{proof}
  We need to prove all of these relations together, because any derivation tree of any of these relations
  will contain a derivation tree of the other relations.
  Consider for instance $\Gamma \vdash (\lambda x \myleq \T. x) \, \T ~\wef$.
  One of the premise of this derivation tree is $\Gamma \vdash \T \leq^*_{\wef} \T$,
  which will ultimately have as premise $\Gamma \vdash \T \leq_{\wef} \T$ and $\Gamma \vdash \T ~\wef$.
  The proof therefore consider all of these relations together.
  The proof works by induction on the derivation tree,
  for the example above for instance, $\Gamma \vdash \T ~\wef$ is considered a base case, and therefore
  we have $\Gamma, \Gamma' \vdash \T ~\wef$, from which we will be able to conclude $\Gamma, \Gamma' \vdash \T \leq_{\wef} \T$,
  and so on.

  Assume any of the cases of this lemma.
  We do an induction on the size of the derivation tree of either
  $(\Gamma \vdash u ~\wef)_1$, $(\Gamma \vdash u \leq_{\wef} v)_2$ or $(\Gamma \vdash u \leq^*_{\wef} v)_3$
  which we index by $i$ for $i \in \{1, 2, 3\}$ as a indexed family,
  one of the three possible cases of this lemma.
  The induction hypothesis can be called on any element of this family, as long as the size of its derivation tree
  is smaller.
  For instance, consider a derivation tree ending by the rule \textsc{Wf-App}
  (for instance$(\Gamma \vdash u \, v ~\wef)_1$), its premises are
  derivations of the form $(\Gamma \vdash v \leq^*_{\wef} w)_3$, on which we can apply the induction hypothesis
  because their size are smaller despite not being in the same index.

  Suppose we are now in the case $1$, that is $(\Gamma \vdash u ~\wef)_1$, let's prove $(\Gamma, \Gamma' \vdash u ~\wef)_1$.
  We distinguish cases based on the last rule used in the derivation tree $(\Gamma \vdash u ~\wef)_1$.
  \begin{description}
    \item[Rule \textsc{Wf-PrS} or rule \textsc{Wf-PrE} with $u = x$:]
      In this case, we have $x \myvartriangleleft t \in \Gamma$,
      and therefore $x \myvartriangleleft t \in \Gamma \cup \Gamma'$.
      By assumption, we have $\Gamma, \Gamma'; s @ s'$ prevalid, hence $\Gamma, \Gamma'$ prevalid.
      Hence by rule \textsc{Wf-PrS} or rule \textsc{Wf-PrE}, we obtain $\Gamma, \Gamma' \vdash x ~\wef$.
    \item[Rule \textsc{Wf-Top}:]
      In this case, we have as premises $\Gamma ~\prevalid$.
      By assumption, we have $\Gamma, \Gamma'; s @ s'$ prevalid, hence $\Gamma, \Gamma'$ prevalid.
      We now have $\Gamma, \Gamma' \vdash \T ~\wef$ by rule \textsc{Wf-Top}.
    \item[Rule \textsc{Wf-Fun} with $u = \lambda x \myleq a. b$:]
      In this case, we have as premises $(\Gamma \vdash a ~\wef)_1$ and $(\Gamma, x \myleq a \vdash b ~\wef)_1$.
      Because $x$ is a fresh variable, we can always alpha convert it to another name that does not exist already,
      and as such the variable $x$ is not used in $\Gamma'$: no annotation of $x$ exist in $\Gamma'$,
      and all the annotations of $\Gamma'$ does not reference $x$.
      By induction hypothesis on the subtree $(\Gamma \vdash a ~\wef)_1$ of our assumption $(\Gamma \vdash u ~\wef)_1$,
      we obtain $(\Gamma, \Gamma' \vdash a ~\wef)_1$.
      By induction hypothesis on the subtree $(\Gamma, x \myleq a \vdash b ~\wef)_1$ of our assumption $(\Gamma \vdash u ~\wef)_1$,
      we obtain $(\Gamma, x \myleq a, \Gamma' \vdash b ~\wef)_1$.
      Because $x$ does not exist in $\Gamma'$, we have $(\Gamma, \Gamma', x \myleq a \vdash b ~\wef)_1$.
      By rule \textsc{Wf-Fun}, we obtain $\Gamma, \Gamma' \vdash \lambda x \myleq a. b ~\wef$.
    \item[Rule \textsc{Wf-App} with $u = a \, b$:]
      In this case, we have as premises $(\Gamma \vdash a \leq^*_{\wef} \lambda x \myleq w. \T)_3$ and $(\Gamma \vdash b \leq^*_{\wef} w)_3$.
      By induction hypothesis on the subtree $(\Gamma \vdash a \leq^*_{\wef} \lambda x \myleq w. \T)_3$ of our assumption $(\Gamma \vdash u ~\wef)_1$,
      we obtain $(\Gamma, \Gamma' \vdash a \leq^*_{\wef} \lambda x \myleq w. \T)_3$.
      By induction hypothesis on the subtree $(\Gamma \vdash b \leq^*_{\wef} w)_3$ of our assumption $(\Gamma \vdash u ~\wef)_1$,
      we obtain $(\Gamma, \Gamma' \vdash b \leq^*_{\wef} w)_3$.
      By rule \textsc{Wf-App}, we obtain $\Gamma, \Gamma' \vdash a \, b ~\wef$.
  \end{description}

  \medskip

  Suppose we are now in the case $2$, that is $(\Gamma \vdash u \leq_{\wef} v)_2$, let's prove $(\Gamma, \Gamma' \vdash u \leq_{\wef} v)_2$.
  We distinguish cases based on the last rule used in the derivation tree $(\Gamma \vdash u \leq_{\wef} v)_2$.
  \begin{description}
    \item[Rule \textsc{Ws-Rfl}:]
      In this case, we have $u = v = t$.
      By using the induction hypothesis on our assumption $(\Gamma \vdash u \leq_{\wef} v)_2$, we obtain $\Gamma, \Gamma' \vdash t \leq_{\wef} t$.

    \item[Rule \textsc{Ws-Lf1}:]
      In this case, we have as premises $(\Gamma \vdash v' \vartriangleleft_{\wef} t)_2$,
      $(\Gamma; \nil \vdash v \myrel{\equiv} v')_i$.
      By induction hypothesis on the subtree $(\Gamma \vdash v' \vartriangleleft_{\wef} t)_2$ of our assumption $(\Gamma \vdash u \leq_{\wef} v)_2$,
      we obtain $(\Gamma, \Gamma' \vdash v' \vartriangleleft_{\wef} t)_2$.
      By using Lemma~\ref{lem:context-weakening-equivalence-reduction}
      we obtain $\Gamma, \Gamma'; \nil \vdash v \myrel{\equiv} v'$
      from $\Gamma; \nil \vdash v \myrel{\equiv} v'$.
      By rule \textsc{Ws-Lf1}, we obtain $\Gamma, \Gamma' \vdash v \vartriangleleft_{\wef} t$.

    \item[Rule \textsc{Ws-Lf2}:]
      In this case, we have as premises $(\Gamma \vdash v' \leq_{\wef} t)_2$,
      $(\Gamma; \nil \vdash v \myrel{\leq} v')_i$, $(\Gamma \vdash v ~\wef)_1$ and $(\Gamma \vdash v' ~\wef)_1$.
      By induction hypothesis on the subtree $(\Gamma \vdash v' \leq_{\wef} t)_2$ of our assumption $(\Gamma \vdash u \leq_{\wef} v)_2$,
      we obtain $(\Gamma, \Gamma' \vdash v' \leq_{\wef} t)_2$.
      By using the induction hypothesis on the subtree $(\Gamma \vdash v ~\wef)_1$ of our assumption $(\Gamma \vdash u \leq_{\wef} v)_2$,
      we obtain $(\Gamma, \Gamma' \vdash v ~\wef)_1$.
      By using the induction hypothesis on the subtree $(\Gamma \vdash v' ~\wef)_1$ of our assumption $(\Gamma \vdash u \leq_{\wef} v)_2$,
      we obtain $(\Gamma, \Gamma' \vdash v' ~\wef)_1$.
      By using Lemma~\ref{lem:context-weakening-subtyping-reduction},
      we obtain $\Gamma, \Gamma'; \nil \vdash v \myrel{\leq} v'$
      from $\Gamma; \nil \vdash v \myrel{\leq} v'$.
      By rule \textsc{Ws-Lf2}, we obtain $\Gamma, \Gamma' \vdash v \leq_{\wef} t$.

    \item[Rule \textsc{Ws-Rgh}:]
      In this case, we have as premises $(\Gamma \vdash v \vartriangleleft_{\wef} t')_2$,
      $(\Gamma; \nil \vdash t \myrel{\equiv} t')_i$.
      By induction hypothesis on the subtree $(\Gamma \vdash v \vartriangleleft_{\wef} t')_2$ of our assumption $(\Gamma \vdash u \vartriangleleft_{\wef} v)_2$,
      we obtain $(\Gamma, \Gamma' \vdash v \vartriangleleft_{\wef} t')_2$.
      By using Lemma~\ref{lem:context-weakening-equivalence-reduction}, we obtain $\Gamma, \Gamma'; \nil \vdash t \myrel{\equiv} t'$.
      By rule \textsc{Ws-Rgh}, we obtain $\Gamma, \Gamma' \vdash v \vartriangleleft_{\wef} t$.
  \end{description}

  \medskip

  Suppose we are now in the case $3$, that is $(\Gamma \vdash u \leq^*_{\wef} v)_3$, let's prove $(\Gamma, \Gamma' \vdash u \leq^*_{\wef} v)_3$.
  We distinguish cases based on the last rule used in the derivation tree $(\Gamma \vdash u \leq^*_{\wef} v)_3$.
  \begin{description}
    \item[Rule \textsc{Ws-Sub}:]
      In this case, we have as premises $(\Gamma \vdash u \leq_{\wef} v)_2$.
      By using the induction hypothesis on the subtree $(\Gamma \vdash u \leq_{\wef} v)_2$ of our assumption $(\Gamma \vdash u \leq^*_{\wef} v)_3$,
      we obtain $(\Gamma, \Gamma' \vdash u \leq_{\wef} v)_2$.
      By rule \textsc{Ws-Sub}, we obtain $\Gamma, \Gamma' \vdash u \leq^*_{\wef} v$.

    \item[Rule \textsc{Ws-Trs}:]
      In this case, we have as premises $(\Gamma \vdash u \leq^*_{\wef} w)_3$ and $(\Gamma \vdash w \leq^*_{\wef} v)_3$.
      By induction hypothesis on the subtree $(\Gamma \vdash u \leq^*_{\wef} w)_3$ of our assumption $(\Gamma \vdash u \leq^*_{\wef} v)_3$,
      we obtain $(\Gamma, \Gamma' \vdash u \leq^*_{\wef} w)_3$.
      By induction hypothesis on the subtree $(\Gamma \vdash w \leq^*_{\wef} v)_3$ of our assumption $(\Gamma \vdash u \leq^*_{\wef} v)_3$,
      we obtain $(\Gamma, \Gamma' \vdash w \leq^*_{\wef} v)_3$.
      By rule \textsc{Ws-Trs}, we obtain $\Gamma, \Gamma' \vdash u \leq^*_{\wef} v$.
  \end{description}
\end{proof}

\begin{lemma}[Weakening of context - Subtyping reduction]
  \label{lem:context-weakening-subtyping-reduction}
  Let $\Gamma; s$ and $\Gamma'$ be two contexts such that $\Gamma, \Gamma'; s$ is prevalid.
  Let $u$ and $v$ be terms.
  We have the following relations:
  
  If $\Gamma; s \vdash u \myrel{\leq} v$ then $\Gamma, \Gamma'; s \vdash u \myrel{\leq} v$.
\end{lemma}
\begin{proof}
  By induction on the derivation tree of $\Gamma; s \vdash u \myrel{\leq} v$.
  We distinguish based on the last rule used.
  \begin{description}
    \item[Rule \textsc{Ms-Pro}:]
      In this case, we have $u = x$ and $v = t$ with $x \myleq t \in \Gamma$.
      Because $x \myleq t \in \Gamma$, we have $x \myleq t \in \Gamma \cup \Gamma'$.
      By assumption, we have $\Gamma, \Gamma'; s @ s'$ prevalid, hence $\Gamma, \Gamma'; s$ prevalid.
      Hence by rule \textsc{Ms-Pro}, we obtain $\Gamma, \Gamma'; s \vdash x \myrel{\leq} t$.

    \item[Rule \textsc{Ms-Top}:]
      In this case, we have $\Gamma; s \vdash u \myrel{\leq} \T$ and
      the derivation has no premises other than the prevalidity of the context.
      By assumption, we know that $\Gamma, \Gamma'; s$ is prevalid.
      Hence, by rule \textsc{Ms-Top}, we obtain $\Gamma, \Gamma'; s \vdash u \myrel{\leq} \T$.

    \item[Rule \textsc{Ms-Equ}:]
      In this case, we have $\Gamma; s \vdash u \myrel{\leq} v$ with premise
      $\Gamma; s \vdash u \myrel{\equiv} v$.
      By Lemma~\ref{lem:context-weakening-equivalence-reduction}, we obtain $\Gamma, \Gamma'; s \vdash u \myrel{\equiv} v$.
      Hence, by rule \textsc{Ms-Equ}, we conclude that $\Gamma, \Gamma'; s \vdash u \myrel{\leq} v$.

    \item[Rule \textsc{Ms-App}:]
      In this case, we have $u = a\,b$ and $v = a'\,b$ with premise
      $\Gamma; b :: s \vdash a \myrel{\leq} a'$.
      By the induction hypothesis, we obtain $\Gamma, \Gamma'; b :: s \vdash a \myrel{\leq} a'$.
      Hence, by rule \textsc{Ms-App}, we conclude that $\Gamma, \Gamma'; s \vdash a\,b \myrel{\leq} a'\,b$.

    \item[Rule \textsc{Ms-FOp}:]
      In this case, $u = \lambda x \myleq t . a$, $v = \lambda x \myleq t . a'$ and $s = \alpha :: s_0$
      with premise $\Gamma, x \myequiv \alpha; s_0 \vdash a \myrel{\leq} a'$.
      By the induction hypothesis we obtain $\Gamma, \Gamma', x \myequiv \alpha; s_0 \vdash a \myrel{\leq} a'$.
      From the fact that $x$ is fresh
      (hence can be renamed so that it never appears in $\Gamma'$), we obtain
      $\Gamma, \Gamma', x \myequiv \alpha; s_0 \vdash a \myrel{\leq} a'$.
      Therefore, by rule \textsc{Ms-FOp}, we deduce that
      $\Gamma, \Gamma'; \alpha :: s_0 \vdash \lambda x \myleq t . a \myrel{\leq} \lambda x \myleq t . a'$.

    \item[Rule \textsc{Ms-Fun}:]
      In this case, $u = \lambda x \myleq t . a$, $v = \lambda x \myleq t . a'$ and $s = \nil$
      with premise $\Gamma, x \myleq t; \nil \vdash a \myrel{\leq} a'$.
      By the induction hypothesis we obtain $\Gamma, \Gamma', x \myleq t; \nil \vdash a \myrel{\leq} a'$.
      From the fact that $x$ is fresh
      (hence can be renamed so that it never appears in $\Gamma'$), we obtain
      $\Gamma, \Gamma', x \myleq t; \nil \vdash a \myrel{\leq} a'$.
      Therefore, by rule \textsc{Ms-Fun}, we deduce that
      $\Gamma, \Gamma'; \nil \vdash \lambda x \myleq t . a \myrel{\leq} \lambda x \myleq t . a'$.
  \end{description}
\end{proof}

\begin{lemma}[Weakening of context - Equivalence reduction]
  \label{lem:context-weakening-equivalence-reduction}
  Let $\Gamma; s$ and $\Gamma'; s'$ be two extended contexts such that $\Gamma, \Gamma'; s @ s'$ is prevalid.
  Let $u$ and $v$ be terms.
  We have the following relations:

  If $\Gamma; s \vdash u \myrel{\equiv} v$ then $\Gamma, \Gamma'; s @ s' \vdash u \myrel{\equiv} v$.
\end{lemma}
\begin{proof}
  By induction on the derivation tree of $\Gamma; s \vdash u \myrel{\equiv} v$.
  We distinguish based on the last rule used.
  \begin{description}
    \item[Rule \textsc{Me-Var}:]  
      In this case, we have $u = x$ and $v = x$.
      By assumption, we have $\Gamma, \Gamma'; s @ s'$ prevalid.
      Hence by rule \textsc{Me-Var}, we obtain $\Gamma, \Gamma'; s @ s' \vdash x \myrel{\equiv} x$.

    \item[Rule \textsc{Me-Pro}:]
      In this case, we have $u = x$ and $v = \alpha'$ with $x \myequiv \alpha' \in \Gamma$
      and $\Gamma; s \vdash \alpha \myrel{\equiv} \alpha'$.
      By induction, we have $\Gamma, \Gamma'; s @ s' \vdash \alpha \myrel{\equiv} \alpha'$.
      Because $x \myequiv \alpha \in \Gamma$, we have $x \myequiv \alpha \in \Gamma \cup \Gamma'$.
      By assumption, we have $\Gamma, \Gamma'; s @ s'$ prevalid.
      Hence by rule \textsc{Me-Pro}, we obtain $\Gamma, \Gamma'; s @ s' \vdash x \myrel{\equiv} \alpha'$.

    \item[Rule \textsc{Me-App}:]
      In this case, we have $u = a\,b$ and $v = a'\,b'$ with premises
      $\Gamma; b::s \vdash a \myrel{\equiv} a'$ and $\Gamma; \nil \vdash b \myrel{\equiv} b'$.
      By the induction hypothesis, we obtain
      $\Gamma, \Gamma'; b :: s @ s' \vdash a \myrel{\equiv} a'$ and
      $\Gamma, \Gamma'; \nil \vdash b \myrel{\equiv} b'$.
      Hence, by rule \textsc{Me-App}, we conclude that
      $\Gamma, \Gamma'; s @ s' \vdash a\,b \myrel{\equiv} a'\,b'$.

    \item[Rule \textsc{Me-Top}:]  
      In this case, we have $\Gamma; s \vdash \T \myrel{\equiv} \T$ and 
      the derivation has no premises other than the prevalidity of the context.
      By assumption, we know that $\Gamma, \Gamma'; s @ s'$ is prevalid.
      Hence, by rule \textsc{Me-Top}, we obtain $\Gamma, \Gamma'; s @ s' \vdash \T \myrel{\equiv} \T$.
      
    \item[Rule \textsc{Me-Fun}:]
      In this case, $u = \lambda x \myleq a. b$, $v = \lambda x \myleq a'. b'$ and $s = \nil$
      with premises $\Gamma; \nil \vdash a \myrel{\equiv} a'$ and
      $\Gamma, x \myleq a; \nil \vdash b \myrel{\equiv} b'$. 
      By the induction hypothesis we obtain
      $\Gamma, \Gamma'; \nil \vdash a \myrel{\equiv} a'$ and
      $\Gamma, x \myleq a, \Gamma'; \nil \vdash b \myrel{\equiv} b'$.
      From the fact that $x$ is fresh
      (hence can be renamed so that it never appears in $\Gamma'$), we obtain
      $\Gamma, \Gamma', x \myleq a; \nil \vdash b \myrel{\equiv} b'$.
      Therefore, by rule \textsc{Me-Fun}, we deduce that
      $\Gamma, \Gamma'; \nil \vdash \lambda x \myleq a. b \myrel{\equiv} \lambda x \myleq a'. b'$.
      
    \item[Rule \textsc{Me-FOp}:]
      In this case, $u = \lambda x \myleq a. b$, $v = \lambda x \myleq a'. b'$ and $s = \alpha :: s_0$
      with premises $\Gamma; \nil \vdash a \myrel{\equiv} a'$ and
      $\Gamma; x \myequiv \alpha; s_0 \vdash b \myrel{\equiv} b'$.
      By the induction hypothesis we obtain
      $\Gamma, \Gamma'; \nil \vdash a \myrel{\equiv} a'$ and
      $\Gamma, x \myequiv \alpha, \Gamma'; s_0 @ s' \vdash b \myrel{\equiv} b'$.
      From the fact that $x$ is fresh
      (hence can be renamed so that it never appears in $\Gamma'$), we obtain
      $\Gamma, \Gamma', x \myequiv \alpha; s_0 @ s' \vdash b \myrel{\equiv} b'$.
      Therefore, by rule \textsc{Me-FOp}, we deduce that
      $\Gamma, \Gamma'; \alpha :: s_0 @ s' \vdash \lambda x \myleq a. b \myrel{\equiv} \lambda x \myleq a'. b'$.

    \item[Rule \textsc{Me-Bet}:]
      In this case, $u = (\lambda x \myleq a. b) \, c$ and $v = \cas{b'}{x}{c'}$
      with premises $\Gamma; s \vdash b \myrel{\equiv} b'$ and
      $\Gamma; \nil \vdash c \myrel{\equiv} c'$.
      By the induction hypothesis we obtain
      $\Gamma, \Gamma'; s @ s' \vdash b \myrel{\equiv} b'$ and
      $\Gamma, \Gamma'; \nil \vdash c \myrel{\equiv} c'$.
      Therefore, by rule \textsc{Me-Bet}, we deduce that
      $\Gamma, \Gamma'; s @ s' \vdash (\lambda x \myleq a. b) \, c \myrel{\equiv} \cas{b'}{x}{c'}$.

    \item[Rule \textsc{Me-TAp}:]
      In this case, $u = \T \, b$ and $v = \T$ with premise $\Gamma; s ~\prevalid$.
      By assumption, we know that $\Gamma, \Gamma'; s @ s'$ is prevalid.
      Hence, by rule \textsc{Me-TAp}, we obtain $\Gamma, \Gamma'; s @ s' \vdash \T \, b \myrel{\equiv} \T$.
  \end{description}
\end{proof}


\begin{lemma}[Narrowing of context in well-formedness derivation]
  \label{lem:narrowing-context-wf}
  Let $\Gamma$ and $\Gamma'$ be two contexts.
  Let $t$, $t'$, and $u$ be terms.
  If $\Gamma, x \myleq t, \Gamma' \vdash u~\wef$ and $\Gamma; \nil \vdash t \myrel{\equiv} t'$ and $\Gamma \vdash t' ~\wef$,
  then $\Gamma, x \myleq t', \Gamma' \vdash u~\wef$.
\end{lemma}
\begin{proof}
  By induction on the derivation tree of $\Gamma, x \myleq t, \Gamma' \vdash u~\wef$.
  We distinguish cases on the last rule used in the derivation tree.
  \begin{description}
    \item[Rule \textsc{Wf-PrS} or rule \textsc{Wf-PrE} with $u = y$:]
    By assumption, we have $y \myvartriangleleft t_y \in \Gamma, x \myleq t, \Gamma'$ for some $t_y$,
    and $\Gamma, x \myleq t, \Gamma' ~\prevalid$.
    By Lemma~\ref{lem:narrowing-prevalidity}, $\Gamma, x \myleq t', \Gamma' ~\prevalid$.
    Either this subtype annotation is $x \myleq t$, or it's another one.
    In the former case, we have $y = x$,
    and $x \myleq t' \in \Gamma, x \myleq t', \Gamma'$, hence $\Gamma, x \myleq t', \Gamma' \vdash u~\wef$
    by rule \textsc{Wf-PrS}.
    In the latter case, we have $y \myvartriangleleft t_y \in \Gamma \cup \Gamma'$,
    hence $y \myvartriangleleft t_y \in \Gamma, x \myleq t', \Gamma'$
    hence $\Gamma, x \myleq t', \Gamma' \vdash u~\wef$ by rule \textsc{Wf-PrS} or \textsc{Wf-PrE}.

    \item[Rule \textsc{Wf-Top}:]
    By assumption, $u = \T$ and \\$\Gamma, x \myleq t, \Gamma' ~\prevalid$.
    By Lemma~\ref{lem:narrowing-prevalidity}, $\Gamma, x \myleq t', \Gamma' ~\prevalid$.
    Hence, $\Gamma, x \myleq t', \Gamma' \vdash \T ~\wef$ by rule \textsc{Wf-Top}.

    \item[Rule \textsc{Wf-Fun} with $u = \lambda y \myleq v. w$:]
    We have $\Gamma, x \myleq t, \Gamma' \vdash \lambda y \myleq v. w~\wef$ with premise \\$\Gamma, x \myleq t, \Gamma', y \myleq v \vdash w~\wef$.
    By the induction hypothesis, $\Gamma, x \myleq t', \Gamma', y \myleq v \vdash w~\wef$.
    Hence, $\Gamma, x \myleq t', \Gamma' \vdash \lambda y \myleq v. w~\wef$ by \textsc{Wf-Fun}.

    \item[Rule \textsc{Wf-App} with $u = a\,b$:]
    \item[] 
    By assumption, there exist some term $z$ such that
    $\Gamma, x \myleq t, \Gamma' \vdash a \leq^*_\wef \lambda y \myleq z. \T$ and $\Gamma, x \myleq t, \Gamma' \vdash b \leq^*_\wef z$.
    We prove the case $\Gamma, x \myleq t, \Gamma' \vdash a \leq^*_\wef \lambda y \myleq z. \T$, the other one is done by the exact same reasoning.
    We wish to prove $\Gamma, x \myleq t', \Gamma' \vdash a \leq^*_\wef \lambda y \myleq z. \T$
    We first decompose $\Gamma, x \myleq t, \Gamma' \vdash a \leq^*_\wef \lambda y \myleq z. \T$
    into $\Gamma, x \myleq t, \Gamma' \vdash a \leq_\wef a_0 \leq_\wef \cdots \leq_\wef a_n \leq_\wef \lambda y \myleq z. \T$,
    to then prove we can transform each of these step of this derivation
    $\Gamma, x \myleq t, \Gamma' \vdash b \leq_\wef c$ into
    $\Gamma, x \myleq t', \Gamma' \vdash b \leq^*_\wef c$.
    To do so, we do the same reasoning as in Lemma~\ref{lem:substitution-preserves-wf} by using instead Lemmas
    \ref{lem:narrowing-context-subtyping-reduction} and \ref{lem:narrowing-context-equivalence-reduction}.
    Hence by using the rule \textsc{Ws-Trs}, we later obtain $\Gamma, x \myleq t', \Gamma' \vdash a \leq^*_\wef \lambda y \myleq z. \T$.
    By using the same reasoning on $\Gamma, x \myleq t, \Gamma' \vdash b \leq^*_\wef z$,
    we obtain $\Gamma, x \myleq t', \Gamma' \vdash b \leq^*_\wef z$.
    Hence $\Gamma, x \myleq t', \Gamma' \vdash a \, b ~\wef$ by rule \textsc{Wf-App}.
  \end{description}
\end{proof}

\begin{lemma}[Narrowing of context in subtyping reductions]
  \label{lem:narrowing-context-subtyping-reduction}
  Let $\Gamma; s$ be an extended context, let $\Gamma'$ be an additional context.
  Let $u$, $v$, $t$ and $t'$ be terms.
  If $\Gamma, x \myleq t, \Gamma'; \nil \vdash u \myrel{\leq} v$ and $\Gamma; \nil \vdash t \myrel{\equiv} t'$.
  Assume that both $u$, $v$ are well-formed in context $\Gamma, x \myleq t', \Gamma'$.
  Assume also $\Gamma \vdash t' ~\wef$.
  Then there exists a term $v'$ such that we have $\Gamma, x \myleq t', \Gamma'; \nil \vdash u \myrel{\leq} v'$,
  and $\Gamma, x \myleq t', \Gamma'; \nil \vdash v \myrel{\leq} v'$,
  and $\Gamma, x \myleq t', \Gamma' \vdash v' ~\wef$.
\end{lemma}
\begin{proof}
  If the derivation $\Gamma, x \myleq t, \Gamma'; \nil \vdash u \myrel{\leq} v$
  does not make a promotion of $x$ to $t$, then we have
  $\Gamma, x \myleq t', \Gamma'; \nil \vdash u \myrel{\leq} v$
  and $v' = v$ satisfy the lemma.

  Else, if the derivation $\Gamma, x \myleq t, \Gamma'; \nil \vdash u \myrel{\leq} v$
  make a promotion of $x$ to $t$, then we have
  $\Gamma, x \myleq t', \Gamma'; \nil \vdash \Po[x] \myrel{\leq} \Po[t']$ with
  $\Gamma, x \myleq t', \Gamma'; \nil \vdash \Po[t'] \myrel{\equiv} \Po[t]$
  by weakening to adapt the context and obtain the last derivation from $\Gamma; \nil \vdash t \myrel{\equiv} t'$.
  We now need to show that $v'$ is well-formed in context $\Gamma, x \myleq t', \Gamma'$
  by induction on the structure of $v'$.
  \begin{description}
    \item[Case $v' = x$:] A variable is well-formed by rule \textsc{Wf-PrS} or \textsc{Wf-PrE}.

    \item[Case $v' = \T$:]
    We have $\Gamma, x \myleq t', \Gamma' \vdash \T ~\wef$ by rule \textsc{Wf-Top}.

    \item[Case $v' = \lambda y \myleq t'' . w$:]
    We have $\Gamma, x \myleq t', \Gamma' \vdash \lambda y \myleq t'' . w ~\wef$
    by rule \textsc{Wf-Fun} with premise $\Gamma, x \myleq t', \Gamma', y \myleq t'' \vdash w ~\wef$.
    By the induction hypothesis, we have $\Gamma, x \myleq t', \Gamma', y \myleq t'' \vdash w ~\wef$.

    \item[Case $v' = a\,b$:]
    Because $v'$ and $u$ are the same term, except for a promotion of $x$ to $t$ in head position,
    we know in our induction that $u = a'\,b'$ with $b' = b$.
    Because $u$ is well-formed, we have $\Gamma, x \myleq t', \Gamma' \vdash a' \leq^*_\wef \lambda y \myleq w . \T$
    and $\Gamma, x \myleq t', \Gamma' \vdash b' \leq^*_\wef w$.
    We now prove $\Gamma, x \myleq t', \Gamma' \vdash a \leq^*_\wef \lambda y \myleq w . \T$,
    which will allow us to conclude that $v'$ is well-formed by rule \textsc{Wf-App}.
    We have $\Gamma, x \myleq t', \Gamma' \vdash a' \leq^*_\wef \lambda y \myleq w . \T$.
    By using the induction hypothesis on $a$, we know that it's well-formed in context $\Gamma, x \myleq t', \Gamma'$.
    Because $a'$ is $a$ but with a promotion of $x$ to $t$ in head position, we have
    $\Gamma, x \myleq t', \Gamma' \vdash a \leq_\wef a'$ from rule \textsc{Ws-Lf2}.
    Hence we can conclude that $\Gamma, x \myleq t', \Gamma' \vdash a \leq^*_\wef \lambda y \myleq w . \T$.
  \end{description}
\end{proof}

\begin{lemma}[Narrowing of context in equivalence reductions]
    \label{lem:narrowing-context-equivalence-reduction}
    Let $\Gamma; s$ be an extended context, let $\Gamma'$ be an additional context.
    Let $u$, $v$, $t$ and $t'$ be terms.
    If $\Gamma, x \myleq t, \Gamma'; s \vdash u \myrel{\equiv} v$ and $\Gamma; \nil \vdash t \myrel{\equiv} t'$.
    Then we have $\Gamma, x \myleq t', \Gamma' \vdash u \myrel{\equiv} v$.
\end{lemma}
\begin{proof}
  We do an induction on the derivation tree of $\Gamma, x \myleq t, \Gamma'; s \vdash u \myrel{\equiv} v$.
  We distinguish cases on the last rule used.
  \begin{description}
    \item[Rule \textsc{Me-Var}:]
    We have $u = v = y$ some variable.
    The derivation $\Gamma, x \myleq t, \Gamma'; s \vdash u \myrel{\equiv} v$ hence holds by rule \textsc{Me-Var}.
  
    \item[Rule \textsc{Me-Pro}:]
    We have $u = y$ and $v = \alpha'$ with $y \myequiv \alpha \in \Gamma, x \myleq t, \Gamma'$ and $\Gamma, x \myleq t, \Gamma' \vdash \alpha \myrel{\equiv} \alpha'$.
    By prevalidity of the context, we know that $y \neq x$, and as a result we have $y \myequiv \alpha \in \Gamma, \Gamma'$,
    hence $y \myequiv \alpha \in \Gamma, x \myleq t', \Gamma'$.
    By rule \textsc{Me-Pro}, we therefore have $\Gamma, x \myleq t', \Gamma' \vdash y \myrel{\equiv} \alpha'$.

    \item[Rule \textsc{Me-Top}:]
    By assumption, we have \\$\Gamma, x \myleq t, \Gamma' \vdash \T \myrel{\equiv} \T$ with premise\\ $\Gamma, x \myleq t, \Gamma' ~\prevalid$.
    By Lemma~\ref{lem:narrowing-prevalidity}, $\Gamma, x \myleq t', \Gamma'; s ~\prevalid$.\\
    By rule \textsc{Ms-Top}, we have $\Gamma, x \myleq t', \Gamma'; s \vdash \T \myrel{\equiv} \T$.
  
    \item[Rule \textsc{Me-App}:]
    We have $\Gamma, x \myleq t, \Gamma'; s \vdash a\,b \myrel{\equiv} a'\,b'$ with premises $\Gamma, x \myleq t, \Gamma'; b::s \vdash a \myrel{\equiv} a'$ and $\Gamma, x \myleq t, \Gamma'; \nil \vdash b \myrel{\equiv} b'$.
    By the induction hypothesis, $\Gamma, x \myleq t', \Gamma'; b::s \vdash a \myrel{\equiv} a'$ and $\Gamma, x \myleq t', \Gamma'; \nil \vdash b \myrel{\equiv} b'$.
    Hence $\Gamma, x \myleq t', \Gamma' \vdash a\,b \myrel{\equiv} a'\,b'$ by \textsc{Ms-App}.
  
    \item[Rule \textsc{Me-Fun}:]
    We have $\Gamma, x \myleq t, \Gamma'; \nil \vdash \lambda y \myleq w. u \myrel{\equiv} \lambda y \myleq w'. u'$
    with $\Gamma, x \myleq t, \Gamma', y \myleq w; \nil \vdash u \myrel{\equiv} u'$ and $\Gamma, x \myleq t, \Gamma'; \nil \vdash w \myrel{\equiv} w'$.
    By the induction hypothesis, $\Gamma, x \myleq t', \Gamma', y \myleq w; \nil \vdash u \myrel{\equiv} u'$
    and $\Gamma, x \myleq t', \Gamma'; \nil \vdash w \myrel{\equiv} w'$.
    Thus, $\Gamma, x \myleq t', \Gamma' \vdash \lambda y \myleq w. u \myrel{\equiv} \lambda y \myleq w'. u'$ by \textsc{Me-Fun}.
  
    \item[Rule \textsc{Me-FOp}:]
    We have \\$\Gamma, x \myleq t, \Gamma'; \alpha :: s \vdash \lambda y \myleq w. u \myrel{\equiv} \lambda y \myleq w'. u'$
    with premise $\Gamma, x \myleq t, \Gamma', y \equiv \alpha; s \vdash u \myrel{\equiv} u'$ and $\Gamma, x \myleq t, \Gamma'; \nil \vdash w \myrel{\equiv} w'$.
    By the induction hypothesis, $\Gamma, x \myleq t', \Gamma', y \equiv \alpha; s \vdash u \myrel{\equiv} u'$ and $\Gamma, x \myleq t', \Gamma'; \nil \vdash w \myrel{\equiv} w'$.
    Hence $\Gamma, x \myleq t', \Gamma' \vdash \lambda y \myleq w. u \myrel{\equiv} \lambda y \myleq w'. u'$ by \textsc{Me-FOp}.
  
    \item[Rule \textsc{Me-Bet}:]
    We have $\Gamma, x \myleq t, \Gamma'; s \vdash (\lambda y \myleq a. b) \, c \myrel{\equiv} \cas{b'}{y}{c'}$ with premises
    $\Gamma, x \myleq t, \Gamma'; s \vdash b \myrel{\equiv} b'$ and $\Gamma, x \myleq t, \Gamma'; \nil \vdash c \myrel{\equiv} c'$.
    By the induction hypothesis, $\Gamma, x \myleq t', \Gamma'; s \vdash b \myrel{\equiv} b'$ and $\Gamma, x \myleq t', \Gamma'; \nil \vdash c \myrel{\equiv} c'$.
    Hence $\Gamma, x \myleq t', \Gamma' \vdash (\lambda y \myleq a. b) \, c \myrel{\equiv} \cas{b'}{y}{c'}$ by \textsc{Me-Bet}.
  
    \item[Rule \textsc{Me-TAp}:]
    We have $\Gamma, x \myleq t, \Gamma'; s \vdash \T \, b \myrel{\equiv} \T$ with premise $\Gamma, x \myleq t, \Gamma'; s ~\prevalid$.
    By Lemma~\ref{lem:narrowing-prevalidity}, $\Gamma, x \myleq t', \Gamma'; s ~\prevalid$.
    Hence $\Gamma, x \myleq t', \Gamma' \vdash \T \, b \myrel{\equiv} \T$ by \textsc{Me-TAp}.
  \end{description}
\end{proof}

\begin{lemma}[Narrowing prevalidity]
  \label{lem:narrowing-prevalidity}
  Let $\Gamma, x \myvartriangleleft t, \Gamma' s$ be an extended prevalid context,
  and $t'$ such that $\Gamma; s \vdash t \myrel{\equiv} t'$.
  Then $\Gamma, x \myvartriangleleft t', \Gamma'; s$ is prevalid.
\end{lemma}
\begin{proof}
  By induction on the derivation tree of $\Gamma, x \myvartriangleleft t, \Gamma' s ~\prevalid$.
  We distinguish cases on the last rule used.
  \begin{description}
    \item[Rule \textsc{Pv-Emp}:]
      In this case, we have $\Gamma, x \myvartriangleleft t, \Gamma' = \varepsilon$.
      Hence $\Gamma = \varepsilon$ and $\Gamma' = \varepsilon$.
      By assumption, we have $\Gamma; s \vdash t \myrel{\equiv} t'$.
      We need to prove that $\Gamma, x \myvartriangleleft t', \Gamma'; s$ is prevalid,
      which is equivalent to proving that $x \myvartriangleleft t'; s$ is prevalid.
      Because $\Gamma = \varepsilon$, we have $\fv(t') = \emptyset$.
      Hence $x \myvartriangleleft t'; s$ is prevalid.

    \item[Rule \textsc{Pv-Ctx}:]
      In this case, we have $\Gamma, x \myvartriangleleft t, \Gamma' = \Gamma_0, y \myleq t_y$
      with $\Gamma_0 ~\prevalid$ and $y \not\in \dom(\Gamma_0)$ and $\fv(t_y) \subseteq \dom(\Gamma_0)$.
      Either $y \myleq t_y$ is $x \myvartriangleleft t$, or it is in $\Gamma_0$ or $\Gamma'$.

      If $y \myleq t_y$ is $x \myvartriangleleft t$, then $\Gamma_0 = \Gamma$ and $\Gamma' = \varepsilon$.
      By assumption, we have $\Gamma; s \vdash t \myrel{\equiv} t'$.
      We need to prove that $\Gamma, x \myvartriangleleft t', \Gamma'; s$ is prevalid,
      which is equivalent to proving that $\Gamma, x \myvartriangleleft t'; s$ is prevalid.
      Because $\Gamma_0 = \Gamma$, we have $\Gamma ~\prevalid$.
      Because $x \not\in \dom(\Gamma)$, we have $x \not\in \dom(\Gamma)$.
      Because $\fv(t) \subseteq \dom(\Gamma)$ and $\Gamma; s \vdash t \myrel{\equiv} t'$,
      we have $\fv(t') \subseteq \dom(\Gamma)$.
      Hence $\Gamma, x \myvartriangleleft t'; s$ is prevalid.

      If $y \myleq t_y$ is in $\Gamma_0$, then $\Gamma_0 = \Gamma_1, y \myleq t_y$ for some $\Gamma_1$.
      We have $\Gamma_1 ~\prevalid$ and $y \not\in \dom(\Gamma_1)$ and $\fv(t_y) \subseteq \dom(\Gamma_1)$.
      We also have $\Gamma = \Gamma_1, y \myleq t_y$ and $\Gamma' = \varepsilon$.
      By induction hypothesis, $\Gamma_1, x \myvartriangleleft t', \Gamma'; s$ is prevalid.
      Hence $\Gamma_1, x \myvartriangleleft t', \Gamma', y \myleq t_y; s$ is prevalid.

      If $y \myleq t_y$ is in $\Gamma'$, then $\Gamma' = \Gamma_1, y \myleq t_y$ for some $\Gamma_1$.
      We have $\Gamma_1 ~\prevalid$ and $y \not\in \dom(\Gamma_1)$ and $\fv(t_y) \subseteq \dom(\Gamma_1)$.
      We also have $\Gamma = \Gamma$ and $\Gamma' = \Gamma_1, y \myleq t_y$.
      By induction hypothesis, $\Gamma, x \myvartriangleleft t', \Gamma_1; s$ is prevalid.
      Hence $\Gamma, x \myvartriangleleft t', \Gamma_1, y \myleq t_y; s$ is prevalid.

    \item[Rule \textsc{Pv-EqA}:]
      In this case, we have $\Gamma, x \myvartriangleleft t, \Gamma' = \Gamma_0, y \myequiv \alpha$
      with $\Gamma_0 ~\prevalid$ and $y \not\in \dom(\Gamma_0)$ and $\fv(\alpha) \subseteq \dom(\Gamma_0)$.
      Either $y \myequiv \alpha$ is $x \myvartriangleleft t$, or it is in $\Gamma_0$ or $\Gamma'$.

      If $y \myequiv \alpha$ is $x \myvartriangleleft t$, then $\Gamma_0 = \Gamma$ and $\Gamma' = \varepsilon$.
      By assumption, we have $\Gamma; s \vdash t \myrel{\equiv} t'$.
      We need to prove that $\Gamma, x \myvartriangleleft t', \Gamma'; s$ is prevalid,
      which is equivalent to proving that $\Gamma, x \myvartriangleleft t'; s$ is prevalid.
      Because $\Gamma_0 = \Gamma$, we have $\Gamma ~\prevalid$.
      Because $x \not\in \dom(\Gamma)$, we have $x \not\in \dom(\Gamma)$.
      Because $\fv(t) \subseteq \dom(\Gamma)$ and $\Gamma; s \vdash t \myrel{\equiv} t'$,
      we have $\fv(t') \subseteq \dom(\Gamma)$.
      Hence $\Gamma, x \myvartriangleleft t'; s$ is prevalid.

      If $y \myequiv \alpha$ is in $\Gamma_0$, then $\Gamma_0 = \Gamma_1, y \myequiv \alpha$ for some $\Gamma_1$.
      We have $\Gamma_1 ~\prevalid$ and $y \not\in \dom(\Gamma_1)$ and $\fv(\alpha) \subseteq \dom(\Gamma_1)$.
      We also have $\Gamma = \Gamma_1, y \myequiv \alpha$ and $\Gamma' = \varepsilon$.
      By induction hypothesis, $\Gamma_1, x \myvartriangleleft t', \Gamma'; s$ is prevalid.
      Hence $\Gamma_1, x \myvartriangleleft t', \Gamma', y \myequiv \alpha; s$ is prevalid.

      If $y \myequiv \alpha$ is in $\Gamma'$, then $\Gamma' = \Gamma_1, y \myequiv \alpha$ for some $\Gamma_1$.
      We have $\Gamma_1 ~\prevalid$ and $y \not\in \dom(\Gamma_1)$ and $\fv(\alpha) \subseteq \dom(\Gamma_1)$.
      We also have $\Gamma = \Gamma$ and $\Gamma' = \Gamma_1, y \myequiv \alpha$.
      By induction hypothesis, $\Gamma, x \myvartriangleleft t', \Gamma_1; s$ is prevalid.
      Hence $\Gamma, x \myvartriangleleft t', \Gamma_1, y \myequiv \alpha; s$ is prevalid.

    \item[Rule \textsc{Pv-Nil}:]
      In this case, we have $\Gamma, x \myvartriangleleft t, \Gamma'; s = \Gamma, x \myvartriangleleft t, \Gamma'; \nil$
      with $\Gamma, x \myvartriangleleft t, \Gamma' ~\prevalid$.
      By induction hypothesis, $\Gamma, x \myvartriangleleft t', \Gamma' ~\prevalid$.
      Hence $\Gamma, x \myvartriangleleft t', \Gamma'; \nil$ is prevalid.

    \item[Rule \textsc{Pv-Sta}:]
      In this case, we have \\$\Gamma, x \myvartriangleleft t, \Gamma'; s = \Gamma, x \myvartriangleleft t, \Gamma'; \alpha :: s_0$
      with $\Gamma, x \myvartriangleleft t, \Gamma'; s_0 ~\prevalid$ and $\fv(\alpha) \subseteq \dom(\Gamma, x \myvartriangleleft t, \Gamma')$.
      By induction hypothesis, $\Gamma, x \myvartriangleleft t', \Gamma'; s_0 ~\prevalid$.
      Because $\fv(\alpha) \subseteq \dom(\Gamma, x \myvartriangleleft t, \Gamma')$,
      we have $\fv(\alpha) \subseteq \dom(\Gamma, x \myvartriangleleft t', \Gamma')$.
      Hence $\Gamma, x \myvartriangleleft t', \Gamma'; \alpha :: s_0$ is prevalid.
  \end{description}
\end{proof}


\begin{proposition}[Reduction preserves subtyping derivation]
  \label{lem:composability-reverse-asleft}
  Let $\Gamma; s$ be an extended context.
  Let $u$, $u'$ and $v$ be terms such that
  $\Gamma \vdash u \leq^*_{\wef} v$ and
  $u \mapsto u'$, and
  $\Gamma \vdash u' ~\wef$.

  Then $\Gamma \vdash u' \leq^*_{\wef} v$.
\end{proposition}
\begin{proof}
  By~\ref{prop:mapsto-inclusion}, we have $\Gamma; \nil \vdash u \myrel{\equiv} u'$.
  By rule \textsc{Ws-Rfl} and \textsc{Ws-Lf1}, we have $\Gamma \vdash u' \leq_{\wef} u$.
  From Proposition~\ref{prop:transitivity-preserves-well-formedness}, we have $\Gamma \vdash u ~\wef$
  from $\Gamma \vdash u \leq^*_{\wef} v$.
  Hence rule \textsc{Ws-Sub}, we have $\Gamma \vdash u' \leq^*_{\wef} u$.
  By \textsc{Ws-Trs}, we then have $\Gamma \vdash u' \leq^*_{\wef} v$.
\end{proof}



\begin{lemma}[Substitution preserves prevalidity]
  \label{lem:substitution-preserves-prevalidity}
  Let $\Gamma,x \myvartriangleleft t,\Gamma'; s$ be an extended context.

  If $\Gamma,x \myvartriangleleft t,\Gamma'; s$ is prevalid, then $\Gamma, \cas{\Gamma'}{x}{t}; \cas{s}{x}{t}$ is prevalid.

  If $\Gamma,x \myvartriangleleft t,\Gamma'$ is prevalid, then $\Gamma, \cas{\Gamma'}{x}{t}$ is prevalid.
\end{lemma}
\begin{proof}
  We prove the first result, the second is similar.
  We now do an induction on the derivation tree $\Gamma,x \myvartriangleleft t,\Gamma'; s$ prevalid.
  We distinguish cases on the last rule used in the derivation tree of $\Gamma,x \myvartriangleleft t,\Gamma'; s$ prevalid.
  \begin{description}
    \item[Rule \textsc{Pv-Emp}:]
      In this case, we have $\Gamma, x \myvartriangleleft t, \Gamma' = \varepsilon$.
      Hence $\Gamma = \varepsilon$ and $\Gamma' = \varepsilon$.
      The result to prove in this case is $\Gamma, \cas{\Gamma'}{x}{t} = \varepsilon$ is prevalid,
      which holds by rule \textsc{Pv-Emp}.

    \item[Rule \textsc{Pv-Ctx}:]
      In this case, we have $\Gamma, x \myvartriangleleft t, \Gamma' = \Gamma_0, y \myleq t_y$
      with $\Gamma_0 ~\prevalid$ and $y \not\in \dom(\Gamma_0)$ and $\fv(t_y) \subseteq \dom(\Gamma_0)$.
      Either $y \myleq t_y$ is $x \myvartriangleleft t$, or it is in $\Gamma_0$ or $\Gamma'$.

      If $y \myleq t_y$ is $x \myvartriangleleft t$, then $\Gamma_0 = \Gamma$ and $\Gamma' = \varepsilon$.
      We need to prove that $\Gamma, \cas{\Gamma'}{x}{t} = \Gamma$ is prevalid,
      which holds by assumption.

      If $y \myleq t_y$ is in $\Gamma_0$, then $\Gamma_0 = \Gamma_1, y \myleq t_y$ for some $\Gamma_1$.
      We have $\Gamma_1 ~\prevalid$ and $y \not\in \dom(\Gamma_1)$ and $\fv(t_y) \subseteq \dom(\Gamma_1)$.
      We also have $\Gamma = \Gamma_1, y \myleq t_y$ and $\Gamma' = \varepsilon$.
      By induction hypothesis, $\Gamma_1, \cas{\Gamma'}{x}{t} = \Gamma_1$ is prevalid.
      Because $y \not\in \dom(\Gamma_1)$, we have $y \not\in \dom(\Gamma_1, \cas{\Gamma'}{x}{t})$.
      Because $\fv(t_y) \subseteq \dom(\Gamma_1)$, we have $\fv(\cas{t_y}{x}{t}) \subseteq \dom(\Gamma_1, \cas{\Gamma'}{x}{t})$.
      Hence $\Gamma_1, \cas{\Gamma'}{x}{t}, y \myleq \cas{t_y}{x}{t}$ is prevalid.

      If $y \myleq t_y$ is in $\Gamma'$, then $\Gamma' = \Gamma_1, y \myleq t_y$ for some $\Gamma_1$.
      We have $\Gamma_1 ~\prevalid$ and $y \not\in \dom(\Gamma_1)$ and $\fv(t_y) \subseteq \dom(\Gamma_1)$.
      We also have $\Gamma = \Gamma$ and $\Gamma' = \Gamma_1, y \myleq t_y$.
      By induction hypothesis, $\Gamma, \cas{\Gamma_1}{x}{t}$ is prevalid.
      Because $y \not\in \dom(\Gamma_1)$, we have $y \not\in \dom(\Gamma, \cas{\Gamma_1}{x}{t})$.
      Because $\fv(t_y) \subseteq \dom(\Gamma_1)$, we have $\fv(\cas{t_y}{x}{t}) \subseteq \dom(\Gamma, \cas{\Gamma_1}{x}{t})$.
      Hence $\Gamma, \cas{\Gamma_1}{x}{t}, y \myleq \cas{t_y}{x}{t}$ is prevalid.

    \item[Rule \textsc{Pv-EqA}:]
      In this case, we have $\Gamma, x \myvartriangleleft t, \Gamma' = \Gamma_0, y \myequiv \alpha$
      with $\Gamma_0 ~\prevalid$ and $y \not\in \dom(\Gamma_0)$ and $\fv(\alpha) \subseteq \dom(\Gamma_0)$.
      Either $y \myequiv \alpha$ is $x \myvartriangleleft t$, or it is in $\Gamma_0$ or $\Gamma'$.

      If $y \myequiv \alpha$ is $x \myvartriangleleft t$, then $\Gamma_0 = \Gamma$ and $\Gamma' = \varepsilon$.
      We need to prove that $\Gamma, \cas{\Gamma'}{x}{t} = \Gamma$ is prevalid,
      which holds by assumption.

      If $y \myequiv \alpha$ is in $\Gamma_0$, then $\Gamma_0 = \Gamma_1, y \myequiv \alpha$ for some $\Gamma_1$.
      We have $\Gamma_1 ~\prevalid$ and $y \not\in \dom(\Gamma_1)$ and $\fv(\alpha) \subseteq \dom(\Gamma_1)$.
      We also have $\Gamma = \Gamma_1, y \myequiv \alpha$ and $\Gamma' = \varepsilon$.
      By induction hypothesis, $\Gamma_1, \cas{\Gamma'}{x}{t} = \Gamma_1$ is prevalid.
      Because $y \not\in \dom(\Gamma_1)$, we have $y \not\in \dom(\Gamma_1, \cas{\Gamma'}{x}{t})$.
      Because $\fv(\alpha) \subseteq \dom(\Gamma_1)$, we have $\fv(\cas{\alpha}{x}{t}) \subseteq \dom(\Gamma_1, \cas{\Gamma'}{x}{t})$.
      Hence $\Gamma_1, \cas{\Gamma'}{x}{t}, y \myequiv \cas{\alpha}{x}{t}$ is prevalid.

      If $y \myequiv \alpha$ is in $\Gamma'$, then $\Gamma' = \Gamma_1, y \myequiv \alpha$ for some $\Gamma_1$.
      We have $\Gamma_1 ~\prevalid$ and $y \not\in \dom(\Gamma_1)$ and $\fv(\alpha) \subseteq \dom(\Gamma_1)$.
      We also have $\Gamma = \Gamma$ and $\Gamma' = \Gamma_1, y \myequiv \alpha$.
      By induction hypothesis, $\Gamma, \cas{\Gamma_1}{x}{t}$ is prevalid.
      Because $y \not\in \dom(\Gamma_1)$, we have $y \not\in \dom(\Gamma, \cas{\Gamma_1}{x}{t})$.
      Because $\fv(\alpha) \subseteq \dom(\Gamma_1)$, we have $\fv(\cas{\alpha}{x}{t}) \subseteq \dom(\Gamma, \cas{\Gamma_1}{x}{t})$.
      Hence $\Gamma, \cas{\Gamma_1}{x}{t}, y \myequiv \cas{\alpha}{x}{t}$ is prevalid.

    \item[Rule \textsc{Pv-Nil}:]
      In this case, we have $\Gamma, x \myvartriangleleft t, \Gamma'; s = \Gamma, x \myvartriangleleft t, \Gamma'; \nil$
      with $\Gamma, x \myvartriangleleft t, \Gamma' ~\prevalid$.
      By induction hypothesis, $\Gamma, \cas{\Gamma'}{x}{t} ~\prevalid$.
      Hence $\Gamma, \cas{\Gamma'}{x}{t}; \nil$ is prevalid.

    \item[Rule \textsc{Pv-Sta}:]
      In this case, we have \\$\Gamma, x \myvartriangleleft t, \Gamma'; s = \Gamma, x \myvartriangleleft t, \Gamma'; \alpha :: s_0$
      with $\Gamma, x \myvartriangleleft t, \Gamma'; s_0 ~\prevalid$ and $\fv(\alpha) \subseteq \dom(\Gamma, x \myvartriangleleft t, \Gamma')$.
      By induction hypothesis, $\Gamma, \cas{\Gamma'}{x}{t}; \cas{s_0}{x}{t} ~\prevalid$.
      Because $\fv(\alpha) \subseteq \dom(\Gamma, x \myvartriangleleft t, \Gamma')$,
      we have $\fv(\cas{\alpha}{x}{t}) \subseteq \dom(\Gamma, \cas{\Gamma'}{x}{t})$.
      Hence $\Gamma, \cas{\Gamma'}{x}{t}; \cas{\alpha}{x}{t} :: \cas{s_0}{x}{t}$ is prevalid.
  \end{description}
\end{proof}

\begin{lemma}[Promotion under substitution outside of covariant contexts]
  \label{lem:subtyping-under-substitution-aux1}
  Let $\Gamma; s$ be an extended context, and $\Gamma'$ be a logical context,
  such that $\Gamma, \cas{\Gamma'}{x}{\alpha}; \nil$ is prevalid.
  Let $t$ and $\alpha$ be terms.

  Let $\Gamma, x \myleq t, \Gamma'; \nil \vdash \Po[x] \myrel{\leq} \Po[t]$ be a derivation 
  for some covariant context $\Po$.

  Then $\Gamma, x \myleq t, \cas{\Gamma'}{x}{\alpha}; \nil \vdash \Po_{\cas{}{x}{\alpha}}[x] \myrel{\leq} \Po_{\cas{}{x}{\alpha}}[t]$.
\end{lemma}
\begin{proof}
  By induction on $u$. We distinguish cases based on the structure of $u$.
  \begin{description}
    \item[$u = \T$:] Impossible by assumption.
    \item[$u = x$:] Nothing to do.
    \item[$u = \lambda y \myleq z. u'$:]
      By assumption, we have $\Gamma, x \myleq t, \Gamma'; \nil \vdash \Po[\lambda y \myleq z. u'] \myrel{\leq} \Po[\lambda y \myleq z. v]$.\\
      Let $P'$ be such that $\Po[\lambda y \myleq z. u'] = \Po'[u']$.\\
      Our derivation is now $\Gamma, x \myleq t, \Gamma'; \nil \vdash \Po'[u'] \myrel{\leq} \Po'[v]$.\\
      By induction, we have $\Gamma, x \myleq t, \cas{\Gamma'}{x}{\alpha}; \nil \vdash \Po'_{\cas{}{x}{\alpha}}[u'] \myrel{\leq} \Po'_{\cas{}{x}{\alpha}}[v]$.\\
      By definition of $\Po'$, this is \\
      $\Gamma, x \myleq t, \cas{\Gamma'}{x}{\alpha}; \nil \vdash \Po_{\cas{}{x}{\alpha}}[\lambda y \myleq z. \cas{u'}{x}{\alpha}] \myrel{\leq} \Po_{\cas{}{x}{\alpha}}[\lambda y \myleq z. \cas{v}{x}{\alpha}]$.\\
      To conclude we need to prove that\\
      $\Gamma, x \myleq t, \cas{\Gamma'}{x}{\alpha}; \nil \vdash \Po_{\cas{}{x}{\alpha}}[\lambda y \myleq \cas{z}{x}{\alpha}. \cas{u'}{x}{\alpha}] \myrel{\leq} \Po_{\cas{}{x}{\alpha}}[\lambda y \myleq \cas{z}{x}{\alpha}. \cas{v}{x}{\alpha}]$
      holds.\\
      Because this derivation make a promotion of $x$ to $t$, it doesn't do a promotion of $y$ to its type annotation,
      as a result this derivation holds.\\
      By alpha conversion, the term $\lambda y \myleq \cas{z}{x}{\alpha}. \cas{u'}{x}{\alpha}$\\
      is equal to $\cas{(\lambda y \myleq z. u')}{x}{\alpha}$, and similarly for $\lambda y \myleq \cas{z}{x}{\alpha}. \cas{v}{x}{\alpha}$.\\
      Hence the result.
    \item[$u = u' \, u''$:]
      By assumption, we have \\$\Gamma, x \myleq t, \Gamma'; \nil \vdash \Po[u' \, u''] \myrel{\leq} \Po[v' \, v'']$.\\
      Let $P'$ be such that $\Po[u' \, u''] = \Po'[u']$.\\
      Our derivation is now $\Gamma, x \myleq t, \Gamma'; \nil \vdash \Po'[](u') \myrel{\leq} \Po'[](v')$.\\
      By induction, we have $\Gamma, x \myleq t, \cas{\Gamma'}{x}{\alpha}; \nil \vdash \cas{(\Po'[])}{x}{\alpha}(u') \myrel{\leq} \cas{(\Po'[])}{x}{\alpha}(v')$.\\
      By definition of $P'$, this is $\Gamma, x \myleq t, \cas{\Gamma'}{x}{\alpha}; \nil \vdash \Po_{\cas{}{x}{\alpha}}[u' \, \cas{u''}{x}{\alpha}] \myrel{\leq} \Po_{\cas{}{x}{\alpha}}[v' \, \cas{v''}{x}{\alpha}]$.\\
      To conclude we need to prove that\\
      $\Gamma, x \myleq t, \cas{\Gamma'}{x}{\alpha}; \nil \vdash \Po_{\cas{}{x}{\alpha}}[\cas{u'}{x}{\alpha} \, \cas{u''}{x}{\alpha}] \myrel{\leq} \Po_{\cas{}{x}{\alpha}}[\cas{v'}{x}{\alpha} \, \cas{v''}{x}{\alpha}]$
      holds.\\
      Because this derivation make a promotion of $x$ to $t$, it doesn't use the annotation of $u''$ in the stack,
      as a result this derivation holds.\\
      By definition of the substitution, the term $\cas{u'}{x}{\alpha} \, \cas{u''}{x}{\alpha}$\\
      is equal to $\cas{(u' \, u'')}{x}{\alpha}$, and similarly for $\cas{v'}{x}{\alpha} \, \cas{v''}{x}{\alpha}$,
      hence the result.
  \end{description}
\end{proof}

\begin{lemma}[Promotion under substitution inside of covariant contexts]
  \label{lem:subtyping-under-substitution-aux2}
  Let $\Gamma; s$ be an extended context.
  Let $\Gamma'$ be a logical context,
  such that $\Gamma, \cas{\Gamma'}{x}{\alpha}; \cas{s}{x}{\alpha}$ is prevalid.
  Let $u$, $v$, $t$ and $\alpha$ be terms.
  Let $x$ be a variable.
  If $\Gamma, x \myleq t, \Gamma'; s \vdash u \myrel{\leq} v$,
  such that this derivation is NOT of the form $\Gamma, x \myleq t, \Gamma'; s \vdash \Po[x] \myrel{\leq} \Po[t]$
  for some covariant context $P$.
  Then we have $\Gamma, \cas{\Gamma'}{x}{\alpha}; \cas{s}{x}{\alpha} \vdash \cas{u}{x}{\alpha} \myrel{\leq} \cas{v}{x}{\alpha}$.
\end{lemma}
\begin{proof}
  By induction on the derivation tree of $\Gamma, x \myleq t, \Gamma'; s \vdash u \myrel{\leq} v$.
  We distinguish cases on the last rule used.
  \begin{description}
    \item[Rule \textsc{Ms-Pro}:]
      In this case, we have $u = y$ and $v = t_y$ with $y \myleq t_y \in \Gamma, x \myleq t, \Gamma'$
      and $\Gamma, x \myleq t, \Gamma'$ prevalid.
      We know that $y \not= x$ from the assumption that our derivation does not make a promotion of $x$ to $t$.
      If $y \myleq t_y \in \Gamma$, by prevalidity of $\Gamma, x \myleq t, \Gamma'$ we know
      there is no instance of $x$ in $t_y$, and as such we have $\cas{t_y}{x}{\alpha} = t_y$.
      Now we have $\Gamma, \cas{\Gamma'}{x}{\alpha}; \cas{s}{x}{\alpha} \vdash y \myrel{\leq} t_y$
      by rule \textsc{Ms-Pro}, which is the desired result.
      If $y \myleq t_y \in \Gamma'$, we have by definition of the substitution on context,
      $y \myleq \cas{t_y}{x}{\alpha} \in \cas{\Gamma'}{x}{\alpha}$,
      and as a result we have $\Gamma, \cas{\Gamma'}{x}{\alpha}; \cas{s}{x}{\alpha} \vdash y \myrel{\leq} \cas{t_y}{x}{\alpha}$
      by rule \textsc{Ms-Pro}.

    \item[Rule \textsc{Ms-Top}:]
      In this case, we have \\$u = y$ and $v = \T$.
      We need to show $\Gamma, \cas{\Gamma'}{x}{\alpha}; \cas{s}{x}{\alpha} \vdash \cas{u}{x}{\alpha} \myrel{\leq} \cas{\T}{x}{\alpha}$.
      Since $\cas{\T}{x}{\alpha} = \T$,
      we need to show $\Gamma, \cas{\Gamma'}{x}{\alpha}; \cas{s}{x}{\alpha} \vdash \cas{u}{x}{\alpha} \myrel{\leq} \T$,
      which holds by rule \textsc{Ms-Top}.

    \item[Rule \textsc{Ms-Equ}:]
      In this case, we have \\$\Gamma, x \myleq t, \Gamma'; s \vdash u \myrel{\equiv} v$.
      By Lemma~\ref{lem:reduction-under-substitution}, $\Gamma, \cas{\Gamma'}{x}{\alpha}; \cas{s}{x}{\alpha} \vdash \cas{u}{x}{\alpha} \myrel{\equiv} \cas{v}{x}{\alpha}$.
      Hence, $\Gamma, \cas{\Gamma'}{x}{\alpha}; \cas{s}{x}{\alpha} \vdash \cas{u}{x}{\alpha} \myrel{\leq} \cas{v}{x}{\alpha}$ by \textsc{Ms-Equ}.

    \item[Rule \textsc{Ms-App}:]
      In this case, we have $u = a\,b$ and $v = a'\,b$ with premise $\Gamma, x \myleq t, \Gamma'; b::s \vdash a \myrel{\leq} a'$.
      By the induction hypothesis, $\Gamma, \cas{\Gamma'}{x}{\alpha}; \cas{b}{x}{\alpha}::\cas{s}{x}{\alpha} \vdash \cas{a}{x}{\alpha} \myrel{\leq} \cas{a'}{x}{\alpha}$.
      Hence $\Gamma, \cas{\Gamma'}{x}{\alpha}; \cas{s}{x}{\alpha} \vdash \cas{a}{x}{\alpha} \, \cas{b}{x}{\alpha} \myrel{\leq} \cas{a'}{x}{\alpha} \, \cas{b}{x}{\alpha}$ by \textsc{Ms-App},
      which is by definition of the substitution,
      equivalent to $\Gamma, \cas{\Gamma'}{x}{\alpha}; \cas{s}{x}{\alpha} \vdash \cas{(a \, b)}{x}{\alpha} \myrel{\leq} \cas{(a' \, b)}{x}{\alpha}$.

    \item[Rule \textsc{Ms-Fun}:]
      In this case, $u = \lambda y \myleq w. a$, $v = \lambda y \myleq w. a'$ with\\ $\Gamma, x \myleq t, \Gamma', y \myleq w; \nil \vdash a \myrel{\leq} a'$.
      By the induction hypothesis, $\Gamma, \cas{\Gamma'}{x}{\alpha}, y \myleq \cas{w}{x}{\alpha}; \nil \vdash \cas{a}{x}{\alpha} \myrel{\leq} \cas{a'}{x}{\alpha}$.
      Thus, $\Gamma, \cas{\Gamma'}{x}{\alpha}; \nil \vdash \lambda y \myleq \cas{w}{x}{\alpha}. \cas{a}{x}{\alpha} \myrel{\leq} \lambda y \myleq \cas{w}{x}{\alpha}. \cas{a'}{x}{\alpha}$ by \textsc{Ms-Fun},
      which is by definition of the substitution, and because by alpha conversion we have $y \not= x$,
      equivalent to $\Gamma, \cas{\Gamma'}{x}{\alpha}; \nil \vdash \cas{(\lambda y \myleq w. a)}{x}{\alpha} \myrel{\leq} \cas{(\lambda y \myleq w. a')}{x}{\alpha}$.

    \item[Rule \textsc{Ms-FOp}:]
      In this case, $u = \lambda y \myleq w. a$, $v = \lambda y \myleq w. a'$ with $\Gamma, x \myleq t, \Gamma', y \equiv \beta; s \vdash a \myrel{\leq} a'$.
      By the induction hypothesis, $\Gamma, \cas{\Gamma'}{x}{\alpha}, y \equiv \cas{\beta}{x}{\alpha}; \cas{s}{x}{\alpha} \vdash \cas{a}{x}{\alpha} \myrel{\leq} \cas{a'}{x}{\alpha}$.
      Thus, $\Gamma, \cas{\Gamma'}{x}{\alpha}; \cas{s}{x}{\alpha} \vdash \lambda y \myleq \cas{w}{x}{\alpha}. \cas{a}{x}{\alpha} \myrel{\leq} \lambda y \myleq \cas{w}{x}{\alpha}. \cas{a'}{x}{\alpha}$ by \textsc{Ms-FOp},
      which is by definition of the substitution, and because by alpha conversion we have $y \not= x$,
      equivalent to $\Gamma, \cas{\Gamma'}{x}{\alpha}; \cas{s}{x}{\alpha} \vdash \cas{(\lambda y \myleq w. a)}{x}{\alpha} \myrel{\leq} \cas{(\lambda y \myleq w. a')}{x}{\alpha}$.
  \end{description}
\end{proof}

\begin{lemma}[Reduction under substitution]
  \label{lem:reduction-under-substitution}
  Let $\Gamma; s$ be an extended context.
  Let $\Gamma'$ be a logical context,
  such that $\Gamma, \cas{\Gamma'}{x}{\alpha}; \cas{s}{x}{\alpha}$ is prevalid.
  Let $u$, $v$, $t$ and $\alpha$ be terms.
  Let $x$ be a variable.
  If $\Gamma, x \myleq t, \Gamma'; s \vdash u \myrel{\equiv} v$,
  then $\Gamma, \cas{\Gamma'}{x}{\alpha}; \cas{s}{x}{\alpha} \vdash \cas{u}{x}{\alpha} \myrel{\equiv} \cas{v}{x}{\alpha}$.
\end{lemma}
\begin{proof}
  By induction on the derivation tree of $\Gamma, x \myleq t, \Gamma'; s \vdash u \myrel{\equiv} v$.
  We distinguish cases on the last rule used.
  \begin{description}
    \item[Rule \textsc{Me-Var}:]
      In this case, we have $u = y$ and $v = y$ with $\Gamma, x \myleq t, \Gamma'$ prevalid.
      If $y = x$, then $\cas{u}{x}{\alpha} = \alpha$ and $\cas{v}{x}{\alpha} = \alpha$.
      We need to show $\Gamma, \cas{\Gamma'}{x}{\alpha}; \cas{s}{x}{\alpha} \vdash \alpha \myrel{\equiv} \alpha$,
      which holds by reflexivity (Proposition~\ref{prop:algorithmic-refl}).
      If $y \neq x$, then $\cas{u}{x}{\alpha} = y$ and $\cas{v}{x}{\alpha} = y$.
      We need to show $\Gamma, \cas{\Gamma'}{x}{\alpha}; \cas{s}{x}{\alpha} \vdash y \myrel{\equiv} y$,
      which holds by rule \textsc{Me-Var}.

    \item[Rule \textsc{Me-Pro}:]
      In this case, we have $u = y$ and $v = \alpha''$
      with $y \myequiv \alpha' \in \Gamma, x \myleq t, \Gamma'$,
      and $\Gamma, x \myleq t, \Gamma' \vdash \alpha' \myrel{\equiv} \alpha''$,
      and $\Gamma, x \myleq t, \Gamma'$ prevalid.
      Because $\Gamma, x \myleq t, \Gamma'$ is prevalid, there is a single annotation of $x$,
      and because it's $x \myleq t$ it cannot be $y \myequiv \alpha'$ and as such we have $y \not= x$.
      If $y \myequiv \alpha' \in \Gamma$, by prevalidity of $\Gamma, x \myleq t, \Gamma'$ we know
      there is no instance of $x$ in $\alpha''$, and as such we have $\cas{\alpha''}{x}{\alpha} = \alpha'$ and
      $y \myequiv \alpha' \in \Gamma, \cas{\Gamma'}{x}{\alpha}$.
      Now we have $\Gamma, \cas{\Gamma'}{x}{\alpha}; \cas{s}{x}{\alpha} \vdash y \myrel{\equiv} \alpha''$
      by rule \textsc{Me-Pro}, which is the desired result.
      If $y \myequiv \alpha' \in \Gamma'$, we have by definition of the substitution on context,
      $y \myequiv \cas{\alpha'}{x}{\alpha} \in \cas{\Gamma'}{x}{\alpha}$,
      hence $y \myequiv \cas{\alpha'}{x}{\alpha} \in \Gamma, \cas{\Gamma'}{x}{\alpha}$.
      By induction, we have $\Gamma, \cas{\Gamma'}{x}{\alpha}; \cas{s}{x}{\alpha} \vdash \cas{\alpha'}{x}{\alpha} \myrel{\equiv} \cas{\alpha''}{x}{\alpha}$,
      hence by rule \textsc{Me-Pro}, we have $\Gamma, \cas{\Gamma'}{x}{\alpha}; \cas{s}{x}{\alpha} \vdash y \myrel{\equiv} \cas{\alpha''}{x}{\alpha}$.

    \item[Rule \textsc{Me-App}:]
      In this case, we have $u = a\,b$ and\\ $v = a'\,b'$ with premises
      $\Gamma, x \myleq t, \Gamma'; b::s \vdash a \myrel{\equiv} a'$ and\\
      $\Gamma, x \myleq t, \Gamma'; \nil \vdash b \myrel{\equiv} b'$.
      By the induction hypothesis, we obtain
      $\Gamma, \cas{\Gamma'}{x}{\alpha}; \cas{b}{x}{\alpha}::\cas{s}{x}{\alpha} \vdash \cas{a}{x}{\alpha} \myrel{\equiv} \cas{a'}{x}{\alpha}$ and
      $\Gamma, \cas{\Gamma'}{x}{\alpha}; \nil \vdash \cas{b}{x}{\alpha} \myrel{\equiv} \cas{b'}{x}{\alpha}$.
      Hence, by rule \textsc{Me-App}, we conclude that
      $\Gamma, \cas{\Gamma'}{x}{\alpha}; \cas{s}{x}{\alpha} \vdash \cas{a}{x}{\alpha} \, \cas{b}{x}{\alpha} \myrel{\equiv} \cas{a'}{x}{\alpha} \,  \cas{b'}{x}{\alpha}$.

    \item[Rule \textsc{Me-Top}:]
      In this case, we have $u = \T$ and \\$v = \T$.
      We need to show $\Gamma, \cas{\Gamma'}{x}{\alpha}; \cas{s}{x}{\alpha} \vdash \cas{\T}{x}{\alpha} \myrel{\equiv} \cas{\T}{x}{\alpha}$.
      Since $\cas{\T}{x}{\alpha} = \T$,
      we need to show $\Gamma, \cas{\Gamma'}{x}{\alpha}; \cas{s}{x}{\alpha} \vdash \T \myrel{\equiv} \T$,
      which holds by rule \textsc{Me-Top}.

    \item[Rule \textsc{Me-Fun}:]
      In this case, $u = \lambda y \myleq a. b$,\\ $v = \lambda y \myleq a'. b'$ and $s = \nil$
      with premises $\Gamma, x \myleq t, \Gamma'; \nil \vdash a \myrel{\equiv} a'$ and
      $\Gamma, x \myleq t, \Gamma', y \myleq a; \nil \vdash b \myrel{\equiv} b'$.
      By the induction hypothesis we obtain
      $\Gamma, \cas{\Gamma'}{x}{\alpha}; \nil \vdash \cas{a}{x}{\alpha} \myrel{\equiv} \cas{a'}{x}{\alpha}$ and
      $\Gamma, \cas{\Gamma'}{x}{\alpha}, y \myleq \cas{a}{x}{\alpha}; \nil \vdash \cas{b}{x}{\alpha} \myrel{\equiv} \cas{b'}{x}{\alpha}$.
      Therefore, by rule \textsc{Me-Fun}, we deduce that
      $\Gamma, \cas{\Gamma'}{x}{\alpha}; \nil \vdash \lambda y \myleq \cas{a}{x}{\alpha}. \cas{b}{x}{\alpha} \myrel{\equiv} \lambda y \myleq \cas{a'}{x}{\alpha}. \cas{b'}{x}{\alpha}$,
      which is by definition of the substitution, and because by alpha conversion we have $y \not= x$,
      $\Gamma, \cas{\Gamma'}{x}{\alpha}; \nil \vdash \cas{(\lambda y \myleq a. b)}{x}{\alpha} \myrel{\equiv} \cas{(\lambda y \myleq a'. b')}{x}{\alpha}$.

    \item[Rule \textsc{Me-FOp}:]
      In this case, $u = \lambda y \myleq a. b$, \\$v = \lambda y \myleq a'. b'$ and $s = \beta :: s_0$\\
      with premises $\Gamma, x \myleq t, \Gamma'; \nil \vdash a \myrel{\equiv} a'$ and
      $\Gamma, x \myleq t, \Gamma'; y \myequiv \beta; s_0 \vdash b \myrel{\equiv} b'$.\\
      By the induction hypothesis we obtain
      $\Gamma, \cas{\Gamma'}{x}{\alpha}; \nil \vdash \cas{a}{x}{\alpha} \myrel{\equiv} \cas{a'}{x}{\alpha}$ and\\
      $\Gamma, \cas{\Gamma'}{x}{\alpha}, y \myequiv \cas{\beta}{x}{\alpha}; \cas{s_0}{x}{\alpha} \vdash \cas{b}{x}{\alpha} \myrel{\equiv} \cas{b'}{x}{\alpha}$.
      Therefore, by rule \textsc{Me-FOp}, we deduce that
      $\Gamma, \cas{\Gamma'}{x}{\alpha}; \cas{(\beta :: s_0)}{x}{\alpha} \vdash \lambda y \myleq \cas{a}{x}{\alpha}. \cas{b}{x}{\alpha} \myrel{\equiv} \lambda y \myleq \cas{a'}{x}{\alpha}. \cas{b'}{x}{\alpha}$,
      which is by definition of the substitution, and because by alpha conversion we have $y \not= x$,
      $\Gamma, \cas{\Gamma'}{x}{\alpha}; \cas{s}{x}{\alpha} \vdash \cas{(\lambda y \myleq a. b)}{x}{\alpha} \myrel{\equiv} \cas{(\lambda y \myleq a'. b')}{x}{\alpha}$.

    \item[Rule \textsc{Me-Bet}:]
      In this case, $u = (\lambda y \myleq a. b) \, c$ and \\$v = \cas{b'}{y}{c'}$
      with premises \\$\Gamma, x \myleq t, \Gamma'; s \vdash b \myrel{\equiv} b'$ and
      $\Gamma, x \myleq t, \Gamma'; \nil \vdash c \myrel{\equiv} c'$.\\
      By the induction hypothesis we obtain
      $\Gamma, \cas{\Gamma'}{x}{\alpha}; \cas{s}{x}{\alpha} \vdash \cas{b}{x}{\alpha} \myrel{\equiv} \cas{b'}{x}{\alpha}$ and\\
      $\Gamma, \cas{\Gamma'}{x}{\alpha}; \nil \vdash \cas{c}{x}{\alpha} \myrel{\equiv} \cas{c'}{x}{\alpha}$.
      Therefore, by rule \textsc{Me-Bet}, we deduce that\\
      $\Gamma, \cas{\Gamma'}{x}{\alpha}; \cas{s}{x}{\alpha} \vdash (\lambda y \myleq \cas{a}{x}{\alpha}. \cas{b}{x}{\alpha}) \, \cas{c}{x}{\alpha} \myrel{\equiv} \cas{\cas{b}{x}{\alpha}}{y}{\cas{c'}{x}{\alpha}}$.\\
      This last term, according to Barendregt's substitution lemma \cite{Bar92}, is $\cas{\cas{b'}{y}{c'}}{x}{\alpha}$,
      Hence $\Gamma, \cas{\Gamma'}{x}{\alpha}; \cas{s}{x}{\alpha} \vdash \cas{(\lambda y \myleq a. b)}{x}{\alpha} \, \cas{c}{x}{\alpha} \myrel{\equiv} \cas{\cas{b'}{y}{c'}}{x}{\alpha}$.

    \item[Rule \textsc{Me-TAp}:]
      In this case, $u = \T \, b$ and $v = \T$ with premise $\Gamma, x \myleq t, \Gamma'; s ~\prevalid$.
      We need to show $\Gamma, \cas{\Gamma'}{x}{\alpha}; \cas{s}{x}{\alpha} \vdash \cas{\T \, b}{x}{\alpha} \myrel{\equiv} \cas{\T}{x}{\alpha}$.
      Since $\cas{\T}{x}{\alpha} = \T$,
      we need to show $\Gamma, \cas{\Gamma'}{x}{\alpha}; \cas{s}{x}{\alpha} \vdash \T \, \cas{b}{x}{\alpha} \myrel{\equiv} \T$,
      which holds by rule \textsc{Me-TAp}.
  \end{description}
\end{proof}

\begin{lemma}[Reduction under substitution - auxiliary for the commutation theorem]
  \label{lem:commutativity-reduction-under-substitution}
  Let $\Gamma; s$ be an extended context.
  Let $u$, $u'$, $v$ and $v'$ be terms.
  Let $x$ be a variable.
  If $\Gamma, x \myequiv v, \Gamma'; s \vdash u \myrel{\equiv} u'$ and $\Gamma; \nil \vdash v \myrel{\equiv} v'$,
  then $\Gamma, \cas{\Gamma'}{x}{v}; \cas{s}{x}{v} \vdash \cas{u}{x}{v} \myrel{\equiv} \cas{u'}{x}{v'}$.
\end{lemma}
\begin{proof}
  By induction on the derivation tree of $\Gamma, x \myequiv v, \Gamma'; s \vdash u \myrel{\equiv} v$.
  We distinguish cases on the last rule used.
  \begin{description}
    \item[Rule \textsc{Me-Var}:]
      In this case, we have $u = y$ and $v = y$ with $\Gamma, x \myequiv v, \Gamma'$ prevalid.
      If $y = x$, then $\cas{u}{x}{v} = v$ and $\cas{v}{x}{v} = v$.
      We need to show $\Gamma, \cas{\Gamma'}{x}{v}; \cas{s}{x}{v} \vdash v \myrel{\equiv} v$,
      which holds by reflexivity (Proposition~\ref{prop:algorithmic-refl}).
      If $y \neq x$, then $\cas{u}{x}{v} = y$ and $\cas{v}{x}{v} = y$.
      We need to show $\Gamma, \cas{\Gamma'}{x}{v}; \cas{s}{x}{v} \vdash y \myrel{\equiv} y$,
      which holds by rule \textsc{Me-Var}.

    \item[Rule \textsc{Me-Pro} with $u = y \neq x$:]
      In this case, we have $u = y$ and $v = \alpha''$
      with $y \myequiv \alpha' \in \Gamma, x \myequiv v, \Gamma'$,
      and $\Gamma, x \myequiv v, \Gamma' \vdash \alpha' \myrel{\equiv} \alpha''$,
      and $\Gamma, x \myequiv v, \Gamma'$ prevalid.
      If $y \myequiv \alpha' \in \Gamma$, by prevalidity of $\Gamma, x \myequiv v, \Gamma'$ we know
      there is no instance of $x$ in $\alpha''$, and as such we have $\cas{\alpha''}{x}{v} = \alpha'$ and
      $y \myequiv \alpha' \in \Gamma, \cas{\Gamma'}{x}{v}$.
      Now we have $\Gamma, \cas{\Gamma'}{x}{v}; \cas{s}{x}{v} \vdash y \myrel{\equiv} \alpha''$
      by rule \textsc{Me-Pro}, which is the desired result.
      If $y \myequiv \alpha' \in \Gamma'$, we have by definition of the substitution on context,
      $y \myequiv \cas{\alpha'}{x}{v} \in \cas{\Gamma'}{x}{v}$,
      hence $y \myequiv \cas{\alpha'}{x}{v} \in \Gamma, \cas{\Gamma'}{x}{v}$.
      By induction, we have $\Gamma, \cas{\Gamma'}{x}{v}; \cas{s}{x}{v} \vdash \cas{\alpha'}{x}{v} \myrel{\equiv} \cas{\alpha''}{x}{v}$,
      hence by rule \textsc{Me-Pro}, we have $\Gamma, \cas{\Gamma'}{x}{v}; \cas{s}{x}{v} \vdash y \myrel{\equiv} \cas{\alpha''}{x}{v}$.

    \item[Rule \textsc{Me-Pro} with $u = x$:] 
      In this case, we have $u = x$ and $v = \alpha$ with premise $\Gamma, x \myequiv v, \Gamma'; s \vdash v \myrel{\equiv} \alpha$.
      By prevalidity, we know that there is no instance of $x$ in both $v$ and $\alpha$. As a result we have both $\cas{v}{x}{v} = v$ and $\cas{\alpha}{x}{v} = \alpha$.
      Therefore induction we have $\Gamma, \cas{\Gamma'}{x}{v}; \cas{s}{x}{v} \vdash v \myrel{\equiv} \alpha$.
      Because $\cas{u}{x}{v} = v$, this is the desired result.

    \item[Rule \textsc{Me-App}:]
      In this case, we have $u = a\,b$ and \\$v = a'\,b'$ with premises
      $\Gamma, x \myequiv v, \Gamma'; b::s \vdash a \myrel{\equiv} a'$ and\\
      $\Gamma, x \myequiv v, \Gamma'; \nil \vdash b \myrel{\equiv} b'$.
      By the induction hypothesis, we obtain
      $\Gamma, \cas{\Gamma'}{x}{v}; \cas{b}{x}{v}::\cas{s}{x}{v} \vdash \cas{a}{x}{v} \myrel{\equiv} \cas{a'}{x}{v}$ and
      $\Gamma, \cas{\Gamma'}{x}{v}; \nil \vdash \cas{b}{x}{v} \myrel{\equiv} \cas{b'}{x}{v}$.
      Hence, by rule \textsc{Me-App}, we conclude that
      $\Gamma, \cas{\Gamma'}{x}{v}; \cas{s}{x}{v} \vdash \cas{a}{x}{v} \, \cas{b}{x}{v} \myrel{\equiv} \cas{a'}{x}{v} \,  \cas{b'}{x}{v}$.

    \item[Rule \textsc{Me-Top}:]
      In this case, we have $u = \T$ and \\$v = \T$.
      We need to show $\Gamma, \cas{\Gamma'}{x}{v}; \cas{s}{x}{v} \vdash \cas{\T}{x}{v} \myrel{\equiv} \cas{\T}{x}{v}$.
      Since $\cas{\T}{x}{v} = \T$,
      we need to show $\Gamma, \cas{\Gamma'}{x}{v}; \cas{s}{x}{v} \vdash \T \myrel{\equiv} \T$,
      which holds by rule \textsc{Me-Top}.

    \item[Rule \textsc{Me-Fun}:]
      In this case, $u = \lambda y \myleq a. b$,\\ $v = \lambda y \myleq a'. b'$ and $s = \nil$
      with premises $\Gamma, x \myequiv v, \Gamma'; \nil \vdash a \myrel{\equiv} a'$ and
      $\Gamma, x \myequiv v, \Gamma', y \myleq a; \nil \vdash b \myrel{\equiv} b'$.
      By the induction hypothesis we obtain
      $\Gamma, \cas{\Gamma'}{x}{v}; \nil \vdash \cas{a}{x}{v} \myrel{\equiv} \cas{a'}{x}{v}$ and
      $\Gamma, \cas{\Gamma'}{x}{v}, y \myleq \cas{a}{x}{v}; \nil \vdash \cas{b}{x}{v} \myrel{\equiv} \cas{b'}{x}{v}$.
      Therefore, by rule \textsc{Me-Fun}, we deduce that
      $\Gamma, \cas{\Gamma'}{x}{v}; \nil \vdash \lambda y \myleq \cas{a}{x}{v}. \cas{b}{x}{v} \myrel{\equiv} \lambda y \myleq \cas{a'}{x}{v}. \cas{b'}{x}{v}$,
      which is by definition of the substitution, and because by alpha conversion we have $y \not= x$,
      $\Gamma, \cas{\Gamma'}{x}{v}; \nil \vdash \cas{(\lambda y \myleq a. b)}{x}{v} \myrel{\equiv} \cas{(\lambda y \myleq a'. b')}{x}{v}$.

    \item[Rule \textsc{Me-FOp}:]
      In this case, $u = \lambda y \myleq a. b$, \\$v = \lambda y \myleq a'. b'$ and $s = \beta :: s_0$
      with premises \\$\Gamma, x \myequiv v, \Gamma'; \nil \vdash a \myrel{\equiv} a'$ and
      $\Gamma, x \myequiv v, \Gamma'; y \myequiv \beta; s_0 \vdash b \myrel{\equiv} b'$.\\
      By the induction hypothesis we obtain
      $\Gamma, \cas{\Gamma'}{x}{v}; \nil \vdash \cas{a}{x}{v} \myrel{\equiv} \cas{a'}{x}{v}$ and\\
      $\Gamma, \cas{\Gamma'}{x}{v}, y \myequiv \cas{\beta}{x}{v}; \cas{s_0}{x}{v} \vdash \cas{b}{x}{v} \myrel{\equiv} \cas{b'}{x}{v}$.
      Therefore, by rule \textsc{Me-FOp}, we deduce that
      $\Gamma, \cas{\Gamma'}{x}{v}; \cas{(\beta :: s_0)}{x}{v} \vdash \lambda y \myleq \cas{a}{x}{v}. \cas{b}{x}{v} \myrel{\equiv} \lambda y \myleq \cas{a'}{x}{v}. \cas{b'}{x}{v}$,
      which is by definition of the substitution, and because by alpha conversion we have $y \not= x$,
      $\Gamma, \cas{\Gamma'}{x}{v}; \cas{s}{x}{v} \vdash \cas{(\lambda y \myleq a. b)}{x}{v} \myrel{\equiv} \cas{(\lambda y \myleq a'. b')}{x}{v}$.

    \item[Rule \textsc{Me-Bet}:]
      In this case, $u = (\lambda y \myleq a. b) \, c$\\ and $v = \cas{b'}{y}{c'}$
      with premises \\$\Gamma, x \myequiv v, \Gamma'; s \vdash b \myrel{\equiv} b'$ and
      $\Gamma, x \myequiv v, \Gamma'; \nil \vdash c \myrel{\equiv} c'$.\\
      By the induction hypothesis we obtain
      $\Gamma, \cas{\Gamma'}{x}{v}; \cas{s}{x}{v} \vdash \cas{b}{x}{v} \myrel{\equiv} \cas{b'}{x}{v}$ and\\
      $\Gamma, \cas{\Gamma'}{x}{v}; \nil \vdash \cas{c}{x}{v} \myrel{\equiv} \cas{c'}{x}{v}$.
      Therefore, by rule \textsc{Me-Bet}, we deduce that\\
      $\Gamma, \cas{\Gamma'}{x}{v}; \cas{s}{x}{v} \vdash (\lambda y \myleq \cas{a}{x}{v}. \cas{b}{x}{v}) \, \cas{c}{x}{v} \myrel{\equiv} \cas{\cas{b}{x}{v}}{y}{\cas{c'}{x}{v}}$.\\
      This last term, according to Barendregt's substitution lemma \cite{Bar92}, is $\cas{\cas{b'}{y}{c'}}{x}{v}$,\\
      Hence $\Gamma, \cas{\Gamma'}{x}{v}; \cas{s}{x}{v} \vdash \cas{(\lambda y \myleq a. b)}{x}{v} \, \cas{c}{x}{v} \myrel{\equiv} \cas{\cas{b'}{y}{c'}}{x}{v}$.

    \item[Rule \textsc{Me-TAp}:]
      In this case, $u = \T \, b$ and $v = \T$ with premise $\Gamma, x \myequiv v, \Gamma'; s ~\prevalid$.
      We need to show $\Gamma, \cas{\Gamma'}{x}{v}; \cas{s}{x}{v} \vdash \cas{\T \, b}{x}{v} \myrel{\equiv} \cas{\T}{x}{v}$.
      Since $\cas{\T}{x}{v} = \T$,
      we need to show $\Gamma, \cas{\Gamma'}{x}{v}; \cas{s}{x}{v} \vdash \T \, \cas{b}{x}{v} \myrel{\equiv} \T$,
      which holds by rule \textsc{Me-TAp}.
  \end{description}
\end{proof}


\begin{lemma}[Congruence of $\leq$]
  \label{lem:algorithmic-congruence}
  Let $\Gamma; s$ be an extended context, and let $u$, $u'$, $v$, $t$ and $\alpha$ be terms
  We have these six congruence relations:

  If $\Gamma; v :: s \vdash u \leq u'$, then $\Gamma; s \vdash u \, v \leq u' \, v$.

  If $\Gamma; v :: s \vdash u \leq^* u'$, then $\Gamma; s \vdash u \, v \leq^* u' \, v$.

  If $\Gamma, x \myleq t; \nil \vdash u \leq u'$ then $\Gamma; \nil \vdash \lambda x \myleq t. u \leq \lambda x \myleq t. u'$.

  If $\Gamma, x \myleq t; \nil \vdash u \leq^* u'$ then $\Gamma; \nil \vdash \lambda x \myleq t. u \leq^* \lambda x \myleq t. u'$.

  If $\Gamma, x \myequiv \alpha; s \vdash u \leq u'$ then $\Gamma; \alpha :: s \vdash \lambda x \myleq t. u \leq \lambda x \myleq t. u'$.

  If $\Gamma, x \myequiv \alpha; s \vdash u \leq^* u'$ then $\Gamma; \alpha :: s \vdash \lambda x \myleq t. u \leq^* \lambda x \myleq t. u'$.
\end{lemma}
\begin{proof}
  We do the first case. Every other case is similar:
  it lifts Lemmas~\ref{lem:algorithmic-congruence-subtyping-reduction} and~\ref{lem:algorithmic-congruence-equivalence-reduction}.

  Suppose we have $\Gamma; v :: s \vdash u \leq u'$.
  By definition of the subtype relation, we have a term $t$ such that
  $\Gamma; v :: s \vdash u \mrelTwo{\equiv} t$ and $\Gamma; v :: s \vdash u' \mrelTwo{\leq} t$.
  By Lemma~\ref{lem:algorithmic-congruence-equivalence-reduction}, we have $\Gamma; s \vdash u \, v \mrelTwo{\equiv} t \, v$
  from $\Gamma; v :: s \vdash u \mrelTwo{\equiv} t$.
  By Lemma~\ref{lem:algorithmic-congruence-subtyping-reduction}, we have $\Gamma; s \vdash u' \, v \mrelTwo{\leq} t \, v$
  from $\Gamma; v :: s \vdash u' \mrelTwo{\leq} t$.
  We then conclude by definition of the subtype relation with the intermediate term $t \, v$.
\end{proof}

\begin{lemma}[Congruence of $\myrel{\leq}$]
  \label{lem:algorithmic-congruence-subtyping-reduction}
  Let $\Gamma; s$ be an extended context, and let $u$, $u'$, $v$ and $t$ be terms
  We have these congruence relations:

  If $\Gamma; v :: s \vdash u \mrelTwo{\leq} u'$, then $\Gamma; s \vdash u \, v \mrelTwo{\leq} u' \, v$.

  If $\Gamma, x \myleq t; \nil \vdash u \mrelTwo{\leq} u'$ then $\Gamma; \nil \vdash \lambda x \myleq t. u \mrelTwo{\leq} \lambda x \myleq t. u'$.

  If $\Gamma, x \myequiv \alpha; s \vdash u \mrelTwo{\leq} u'$ then $\Gamma; \alpha :: s \vdash \lambda x \myleq t. u \mrelTwo{\leq} \lambda x \myleq t. u'$.
\end{lemma}
\begin{proof}
  For the first case, take $\Gamma; v :: s \vdash u \mrelTwo{\leq} u'$.
  On each $\Gamma; v :: s \vdash a \myrel{\leq} b$ subderivation of our initial multi-step derivation,
  we apply the rule \textsc{Ms-App}, to obtain $\Gamma; s \vdash a\,v \myrel{\leq} b\,v$.
  By doing so on each subderivation, we finally obtain the derivation $\Gamma; s \vdash u \, v \mrelTwo{\leq} u' \, v$,
  which is the desired result.

  The other cases are similar, with the use of rules \textsc{Ms-Fun} and \textsc{Ms-FOp} instead.
\end{proof}

\begin{lemma}[Congruence of $\myrel{\equiv}$]
  \label{lem:algorithmic-congruence-equivalence-reduction}
  Let $\Gamma; s$ be an extended context, and let $u$, $u'$, $v$ and $t$ be terms
  We have these congruence relations:

  If $\Gamma; v :: s \vdash u \mrelTwo{\equiv} u'$, then $\Gamma; s \vdash u \, v \mrelTwo{\equiv} u' \, v$.

  If $\Gamma, x \myleq t; \nil \vdash u \mrelTwo{\equiv} u'$ then $\Gamma; \nil \vdash \lambda x \myleq t. u \mrelTwo{\equiv} \lambda x \myleq t. u'$.

  If $\Gamma, x \myequiv \alpha; s \vdash u \mrelTwo{\equiv} u'$ then $\Gamma; \alpha :: s \vdash \lambda x \myleq t. u \mrelTwo{\equiv} \lambda x \myleq t. u'$.
\end{lemma}
\begin{proof}
  Similar proof as Lemma~\ref{lem:algorithmic-congruence-subtyping-reduction}.
\end{proof}


\begin{lemma}[Commutativity - context weakening]
  \label{lem:commutativity-context-weakening}
  Let $\Gamma; s$ and $\Gamma'; s'$ be extended contexts such that
  $\Gamma; s \rightarrowtail \Gamma'; s'$.

  Then we have $\Gamma; \nil \rightarrowtail \Gamma'; \nil$.
\end{lemma}
\begin{proof}
  By induction on $s$.
  \begin{description}
    \item[If $s = \nil$:] Then the result holds.
    \item[If $s = \alpha :: s_0$:]
    Then by assumption we have $\Gamma; \alpha :: s_0 \rightarrowtail \Gamma'; s'$.
    By rule \textsc{Ct-Stk}, we then have $\Gamma; s_0 \rightarrowtail \Gamma'; s_1$, with $s' = \alpha' :: s_1$ for some term $\alpha'$.
    By induction we finally have $\Gamma; \nil \rightarrowtail \Gamma'; \nil$.
  \end{description}
\end{proof}

\end{document}